\documentclass{ijuc} 
\usepackage{fullpage}
\usepackage{epsfig}
\usepackage{amssymb}
\usepackage{amsmath}
\usepackage{latexsym} 
\begin{document} 

\newtheorem{theorem}{Theorem} 
\newtheorem{corollary}[theorem]{Corollary} 
\newtheorem{prop}[theorem]{Proposition} 
\newtheorem{problem}{Problem} 
\newtheorem{lemma}[theorem]{Lemma} 
\newtheorem{observation}[theorem]{Observation} 
\newtheorem{defin}[theorem]{Definition} 
\newtheorem{example}[theorem]{Example} 
\newenvironment{proof}{{\bf Proof:\ }}{\hfill $\Box$\linebreak\vskip2mm}
\newcommand{\PR}{{\bf Proof:\ }}             
\def\EPR{\hfill $\Box$\linebreak\vskip2mm}  
 
\def\Pol{{\sf Pol}} 
\def\Polo{{\sf Pol}_1\;}
\def\PPol{{\sf pPol}} 
\def\Inv{{\sf Inv}} 
\def\Clo{{\sf Clo}\;} 
\def\cP{\mathcal{P}}
\def\cC{\mathcal{C}}
\def\cF{\mathcal{F}}
\def\Alg{{\sf Alg}}
\def\GF{\mathrm{GF}}
 
\let\op=\oplus  
\let\omn=\ominus  
\let\meet=\wedge  
\let\join=\vee  
\let\tm=\times  
  
\let\wh=\widehat  
\def\ox{\ov x}  
\def\oy{\ov y}  
\def\oz{\ov z}  
\def\of{\ov f}  
\def\oa{\ov a}  
\def\ob{\ov b}  
\def\oc{\ov c} 
\let\un=\underline  
\let\ov=\overline  
  
\def\ba{{\bf a}}  
\def\bb{{\bf b}}  
\def\bc{{\bf c}}  
\def\bd{{\bf d}}  
\def\be{{\bf e}}  
\def\bs{{\bf s}}  
\def\bx{{\bf x}}  
\def\by{{\bf y}}  
\def\bw{{\bf w}}  
\def\bz{{\bf z}}  
\def\CSP{{\rm CSP}} 
\def\WCSP{{\rm WCSP}} 
\def\NCSP{{\rm \#CSP}}
\def\FP{{\rm FP}} 
\def\CNF{{\rm CNF}} 
\def\NAND{{\sf NAND}}
\def\orr{{\sf or}}
\def\horn{{\sf horn}}
\def\Compl{{\sf Compl}}
\def\OR{{\sf OR}}
\def\Cl{{\sf Cl}}
\def\NEQ{{\sf NEQ}}
\def\IMP{{\sf IMP}}
\def\NIMP{{\sf NIMP}}
\def\EQ{{\sf EQ}}
\def\NAE{{\sf NAE}}
 
\let\sse=\subseteq  
\def\ang#1{\langle #1 \rangle}  
\def\dang#1{\ang{\ang{#1}}}  
\def\mang#1{\langle #1 \rangle_{\max}}  
\def\mange#1{\langle #1 \rangle^1_{\max}}  
\def\vc#1#2{#1 _1\zd #1 _{#2}}  
\def\zd{,\ldots,}  
\def\cl#1#2{  
\left(\begin{array}{c} #1\\ #2 \end{array}\right)}  
\def\cll#1#2#3{  
\left(\begin{array}{c} #1\\ #2\\ #3 \end{array}\right)}  
\let\ex=\exists
\def\rel{R}
\def\relo{Q}
\def\rela{S}
\def\lb{$\linebreak$}  
\def\ar{{\sf ar}}
\def\pr{{\rm pr}}
\def\Dom{{\sf Dom}}
  
\let\al=\alpha  
\let\gm=\gamma  
\let\dl=\delta
\let\ve=\varepsilon  
\let\ld=\lambda  
\let\vf=\varphi  
\let\Gm=\Gamma  
\let\Dl=\Delta  

\font\tengoth=eufm10 scaled 1200  
\font\sixgoth=eufm6  
\def\goth{\fam12}  
\textfont12=\tengoth \scriptfont12=\sixgoth \scriptscriptfont12=\sixgoth  
\font\tenbur=msbm10  
\font\eightbur=msbm8  
\def\bur{\fam13}  
\textfont11=\tenbur \scriptfont11=\eightbur \scriptscriptfont11=\eightbur  
\font\twelvebur=msbm10 scaled 1200  
\textfont13=\twelvebur \scriptfont13=\tenbur  
\scriptscriptfont13=\eightbur  
\mathchardef\nat="0B4E  
\mathchardef\eps="0D3F  
  
\def\mkpPol{{\rm m}(k){\sf -pPol}}
\def\mkPol{{\rm m}(k){\sf -Pol}}
\def\mKpPol{{\rm m}(K){\sf -pPol}}
\def\mKPol{{\rm m}(K){\sf -Pol}}
\def\mpPol{{\rm m}{\sf -pPol}}
\def\mPol{{\rm m}{\sf -Pol}}
\def\kex{\exists_k}
\def\mex{\exists_{\max}}
\def\mexe{\exists^1_{\max}}
\def\mkP#1{P^{k,(#1)}_D}
\def\mkF#1{F^{k,(#1)}_D}
\def\mP#1{P^{(#1)}_D}
\def\mF#1{F^{(#1)}_D}
\def\mkPa{P^k_D}
\def\mkFa{F^k_D}
\def\mPa{P_D}
\def\mFa{F_D}

\title{Galois correspondence for counting quantifiers}
\author{Andrei A. Bulatov\inst{1}\email{abulatov@cs.sfu.ca} 
\and Amir Hedayaty\inst{1}\email{aha49@cs.sfu.ca}
}

\institute{Simon Fraser University}
\date{}
\def\received{}
\maketitle

\begin{abstract}
We introduce a new type of closure operator on the set of relations, 
max-implementation, and its weaker analog max-quantification. Then 
we show that approximation preserving reductions between counting constraint 
satisfaction problems (\#CSPs) are preserved by these two types of 
closure operators. Together with some previous results this means that 
the approximation complexity of counting CSPs is determined by partial 
clones of relations that additionally closed under these new types of 
closure operators. Galois correspondence of various kind have proved to 
be quite helpful in the study of the complexity of the CSP. While we 
were unable to identify a Galois correspondence for partial clones closed 
under max-implementation and max-quantification, we obtain such results 
for slightly different type of closure operators, $k$-existential quantification. 
This type of quantifiers are known as counting quantifiers in model theory, 
and often used to enhance first order logic languages. We characterize 
partial clones of relations closed under $k$-existential quantification as 
sets of relations invariant under a set of partial functions that satisfy the 
condition of $k$-subset surjectivity. Finally, we give a description of 
Boolean max-co-clones, that is, sets of relations on $\{0,1\}$ closed under
max-implementations.

This is an extended version of \cite{Bulatov12:quantifiers}.

\keywords{counting constraint satisfaction problem, approximation, co-clones,
Galois correspondence}
\end{abstract}

\section{Introduction}

Clones of functions and clones of relations in their various incarnations 
have proved to be an immensely powerful tool in the study of the 
complexity of different versions of the Constraint Satisfaction Problem 
(CSP, for short). In a CSP the aim is to find an assignment of values to 
a given set of variables, subject to constraints on the values that can be 
assigned simultaneously to certain specified subsets of variables. A CSP 
can also be expressed as the problem of deciding whether a given 
conjunctive formula has a model. In the counting version of the CSP the 
goal is to find the number of satisfying assignments, and in the quantified 
version we need to verify if a first order sentence, whose quantifier-free 
part is conjunctive, is true in a given model. 

The general CSP is NP-complete \cite{Montanari74:constraints}. However, 
many practical and theoretical problems can be expressed in terms of CSPs 
using constraints of a certain restricted form. One of the most widely used 
way to restrict a constraint satisfaction problem is to specify the set of 
allowed constraints, which is usually a collection of relations on a finite set. 
The key result is that this set of relations can usually be assumed to be a 
co-clone of a certain kind. More precisely, a generic statement asserts that 
if a relation $\rel$ belongs to the co-clone generated by a set $\Gm$ of 
relations then the CSP over $\Gm\cup\{\rel\}$ is polynomial time reducible
 to the CSP over $\Gm$. Then we can use the appropriate Galois 
connection to transfer the question about sets of relations to a question 
about certain classes of functions.

For the classical decision CSP such a result was obtained by Jeavons et al. 
\cite{Jeavons97:closure}, who proved that intersection of relations (that is, 
conjunction of the corresponding predicates) and projections (that is, 
existential quantification) give rise to polynomial time reducibility of CSPs. 
Therefore in the study of the complexity of the CSP it suffices to focus on 
co-clones. Using the result of Geiger \cite{Geiger68:closed} or the one of  
Bodnarchuk et al.\ \cite{Bodnarchuk69:Galua1} one can instead consider 
clones of functions. A similar result is true for the counting CSP as shown 
by Bulatov and Dalmau \cite{Bulatov07:towards}. In the case of 
quantified CSP, B\"orner et al.\ proved \cite{Borner09:games} that 
conjunction, existential quantification, and also universal quantification 
give rise to a polynomial time reduction between quantified problems. The 
appropriate class of functions  is then the class of surjective functions. 
Along with the usual counting CSP, a version, in which one is required to 
approximate the number of solutions, has also been considered. The standard 
polynomial time reduction between problems is not suitable for approximation 
complexity. In this case, therefore, another type of reductions, approximation 
preserving, or, AP-reductions, is used. The first author proved in 
\cite{Bulatov09:clones} that conjunction of predicates gives rise to an 
AP-reduction between approximation counting CSPs. By the Galois connection 
established by Fleischner and Rosenberg \cite{Fleischer78:Galois}, the 
approximation  complexity of a counting CSP is a property of a clone of 
partial functions. 

In most cases establishing the connection between clones of functions 
and reductions between CSPs has led to a major success in the study of 
the CSP. For the decision problem, a number of very strong results have 
been proved using methods of universal algebra \cite{Bulatov05:classifying,%
Bulatov03:conservative,Bulatov06:3-element,Barto09:bounded-width,%
Idziak10:few}. For the exact counting CSP a complete complexity classification 
of such problems has been obtained \cite{Bulatov08:counting}. Substantial 
progress has been also made in the case of quantified CSP 
\cite{Chen08:collapsibility}.

Compared to the results cited above the progress made in the approximation 
counting CSP is modest. Perhaps, one reason for this is that clones of partial 
functions are much less studied, and much more diverse than clones of total 
functions. In this paper we attempt to overcome to some extent the difficulties 
arising from this weakness of partial clones. 

In the first part of the paper we introduce new types of quantification and 
show that such quantifications, we call them max-implementation and 
max-quantification, give rise to AP-reductions between approximation 
counting CSPs. Intuitively, applying the max-quantifier to a 
relation $\rel(\vc xn,y)$ results in the relation $\mexe y\rel(\vc xn,y)$ that 
contains those tuples $(\vc an)$ that have a maximal number of extensions 
$(\vc an,b)$ such that $\rel(\vc an,b)$ is satisfied. Max-implementation, $\mex$,
is a similar construction, but applied to a group of variables. Sets of relations closed 
with respect this new type of quantification will be called max-co-clones. 
Thus we strengthen the closure operator on sets of relation hoping that 
the sets of functions corresponding to the new type of Galois connection 
are easier to study. We were unable, however, to describe a Galois 
connection for sets closed under max-implementation and max-quantification. 
Instead, we consider a somewhat close type of quantifiers, $k$-existential 
quantifiers. Quantifiers of this type are known as counting quantifiers in 
model theory, and often used to enhance first order logic languages (see, e.g.\ 
\cite{Dawar07:counting}). Counting quantifiers are similar to max-existential 
quantifiers, although do not capture them completely. We call sets of relations 
closed under conjunctions and $k$-existential quantification $k$-existential 
co-clones.  On the functional side, an $n$-ary (partial) function on a set $D$ 
is said to be $k$-subset surjective if it is surjective on any collection of 
$k$-element subsets. More precisely, for any $k$-element subsets 
$\vc An\sse D$ the set $f(\vc An)$ contains at least $k$ 
elements. The second result of the paper asserts that $k$-existential 
co-clones are exactly the sets of relation invariant with respect to a set 
of $k$-subset surjective (partial) functions. Finally, we give a complete 
description of max-co-clones on $\{0,1\}$ (Boolean max-co-clones). Surprisingly,
any Boolean max-co-clone is also a usual co-clone (but not the other way 
around). We show that in general it is not true.

\section{Preliminaries}

By $[n]$ we denote the set $\{1\zd n\}$. For a set $D$, by $D^n$ we 
denote the set of all \emph{$n$-tuples} of
elements of $D$. An $n$-ary relation is any set $\rel\sse D^n$. The 
number $n$ is called the \emph{arity} of $\rel$ and denoted $\ar(R)$. Tuples
will be denoted in boldface, say, $\ba$, and their entries will be denoted by
$\ba[1]\zd \ba[n]$. For $I=(\vc ik)\sse[n]$ by $\pr_I\ba$ we denote the tuple
$(\ba[i_1]\zd\ba[i_k])$, and we use $\pr_I\rel$ to denote $\{\pr_I\ba\mid \ba\in\rel\}$. 
We will also need predicates corresponding to
relations. To simplify the notation we use the same symbol for a
relation and the corresponding predicate, for instance, for an $n$-ary
relation $\rel$ the corresponding predicate $\rel(\vc xn)$ is given by
$\rel(\ba[1]\zd \ba[n])=1$ if and only if $\ba\in\rel$. Relations and predicates
are used interchangeably.

For a set of relations $\Gm$ over a set $D$, the set $\dang\Gm$
includes all relations that can be expressed (as a predicate) using
(a) relations from $\Gm$, together with the binary equality relation
$=_D$ on $D$, (b) conjunctions, and (c) existential
quantification. This set is called the \emph{co-clone generated by} $\Gm$.

\emph{Partial co-clone generated by} $\Gm$ is obtained in a similar way
by disallowing existential quantification. $\ang\Gm$ includes all
relations that can be expressed using (a) relations from $\Gm$,
together with $=_D$, and (b) conjunctions,

If $\Gm=\ang\Gm$ or $\Gm=\dang\Gm$, the set $\Gm$ is said 
to be a \emph{partial co-clone}, and a \emph{co-clone}, respectively. 

Sometimes there is no need to apply even conjunction to produce a 
new relation. For instance, $\relo(x,y)=\rel(x,y,y)$ defines a 
binary relation from a ternary one. Therefore it is often 
convenient, especially for technical purposes, to group manipulations 
with variables of a relation 
into a separate category. More formally, for a relation $\rel(\vc xn)$
and a mapping $\pi\colon\{\vc xn\}\to V$, where $V$ is some
set of variables, $\pi\rel$ denotes the relation 
$\rel(\pi(x_1)\zd\pi(x_n))$. We will understand by (partial) 
co-clones sets of relations closed under manipulation with variables,
conjunction, and existential quantification (respectively, closed 
under manipulation with variables and conjunction).

Co-clones and partial co-clones can often be conveniently and 
concisely represented through functions and partial functions, respectively.

Let $\rel$ be a ($k$-ary) relation on a set $D$, and $f\colon D^n\to D$ an
$n$-ary function on the same set. Function $f$ \emph{preserves}
$\rel$, or is a \emph{polymorphism} of $\rel$, if for any $n$ tuples
$\vc \ba n\in\rel$ the tuple $f(\vc \ba n)$ obtained by component-wise
application of $f$ also belongs to $\rel$. Relation $\rel$ in this
case is said to be \emph{invariant} with respect to $f$. The set of
all functions that preserve every relation from a set of relations
$\Gm$ is denoted by $\Pol(\Gm)$, the set of all relations invariant
with respect to a set of functions $C$ is denoted by $\Inv(C)$.  

Operators $\Inv$ and $\Pol$ form a Galois connection between 
sets of functions and sets of relations. Sets of the form $\Inv(C)$ 
are precisely co-clones; on the functional side there is another type 
of closed sets.

A set of functions is said to be a \emph{clone} of functions if it
is closed under superpositions and contain all the \emph{projection}
functions, that is functions of the form $f(\vc xn)=x_i$.  Sets of
functions of the form $\Pol(\Gm)$ are exactly clones of
functions  \cite{Poschel-Kaluznin79} . 

The study of the \#CSP also makes use of another Galois connection, a
connection between partial co-clones and sets of \emph{partial
  functions}. An $n$-ary partial function $f$ on a set $D$ is just a
partial mapping $f\colon D^n\to D$. As in the case of total functions,
a partial function $f$ \emph{preserves} relation $\rel$, if for any
$n$ tuples $\vc \ba n\in\rel$ the tuple $f(\vc \ba n)$ obtained by
component-wise application of $f$ is either undefined or belongs to
$\rel$. The set of all partial functions that preserve every relation
from a set of relations $\Gm$ is denoted by $\PPol(\Gm)$.  

The set of all tuples from $D^n$ on which $f$ is defined is called the
\emph{domain} of $f$ and denoted by $\Dom(f)$. A set of functions is
said to be \emph{down-closed} if along with a function $f$ it
contains any function $f'$ such that $\Dom(f')\sse\Dom(f)$ and $f'(\vc
an)=f(\vc an)$ for every tuple $(\vc an)\in \Dom(f')$. A down-closed
set of functions, containing all projections and closed under
superpositions is called a \emph{partial clone}. Fleischner and Rosenberg 
\cite{Fleischer78:Galois} proved that partial clones are
exactly the sets of the form $\PPol(\Gm)$ for a certain $\Gm$, and that the 
partial co-clones are precisely the sets $\Inv(C)$ for collections $C$ 
of partial functions. 

\section{Approximate counting and max-implementation}

Let $D$ be a set, and let $\Gm$ be a finite set of relations over $D$. 
An instance of the counting Constraint Satisfaction Problem, 
$\NCSP(\Gm)$, is a pair $\cP=(V,\cC)$ where $V$ is a set of 
\emph{variables}, and $\cC$ is a set of \emph{constraints}. 
Every constraint is a pair $\ang{\bs,\rel}$, in which $\rel$ is a 
member of $\Gm$, and $\bs$ is a tuple of variables from $V$ 
of length $\ar(\rel)$ (possibly with repetitions). A \emph{solution} 
to $\cP$ is a mapping $\vf:V\to D$ such that $\vf(\bs)\in\rel$ 
for every constraint $\ang{\bs,\rel}\in\cC$. The objective in 
$\NCSP(\Gm)$ is to find the number $\#\cP$ of solutions to a 
given instance $\cP$. 

We are interested in the complexity of this problem depending 
on the set $\Gm$. The complexity of the exact counting problem 
(when we are required to find the exact number of solutions) is 
settled in \cite{Bulatov08:counting} by showing that for any finite 
$D$ and any set $\Gm$ of relations over $D$ the problem is 
polynomial time solvable or is complete in a natural complexity 
class $\#P$. One of the key steps in that line of research is the 
following result: For a relation $\rel$ and a set of relations $\Gm$ 
over $D$, if $\rel$ belongs to the co-clone generated by 
$\Gm$, then $\NCSP(\Gm\cup\{\rel\})$ is polynomial time reducible 
to $\NCSP(\Gm)$. This results emphasizes the importance of 
co-clones in the study of constraint problems. 

A situation is different when we are concerned about approximating 
the number of solutions. We will need some notation and terminology. 
Let $A$ be a counting problem. An algorithm $\Alg$ is said to be an 
\emph{approximation algorithm} for $A$ with relative error 
$\varepsilon$ (which may depend on the size of the input) if it 
is polynomial time and for any instance $\cP$ of $A$ it outputs a 
certain number $\Alg(\cP)$ such that $\Alg(\cP)=0$ if $\cP$ has
no solution and
$$
\frac{|\#\cP-\Alg(\cP)|}{\#\cP}<\varepsilon
$$
otherwise, where $\#\cP$ denotes the exact number of solutions to $\cP$. 

The following framework is viewed as one
of the most realistic models of efficient computations. A \emph{fully polynomial 
approximation scheme} (FPAS, for short) for a problem $A$ is an algorithm $\Alg$ 
such that: It takes as input an instance $\cP$ of $A$ and a real number 
$\varepsilon>0$, the relative error of $\Alg$ on the input $(\cP,\varepsilon)$ is 
less than $\varepsilon$, and $\Alg$
is polynomial time in the size of $\cP$ and $\log(\frac1\varepsilon)$. 

To determine the approximation complexity of problems approximation preserving 
of reductions are used. Suppose $A$ and $B$ are two counting problems 
whose complexity (of approximation) we want to compare. An 
\emph{approximation preserving reduction} or \emph{AP-reduction} from 
$A$ to $B$ is an algorithm $\Alg$, using $B$ as an oracle, that 
takes as input a pair $(\cP,\varepsilon)$ where $\cP$ is an instance of $A$ 
and $0<\varepsilon<1$, and satisfies the following three conditions: (i) every 
oracle call made by $\Alg$ is of the form $(\cP',\dl)$, where $\cP'$ is an 
instance of $B$, and $0<\dl<1$ is an error bound such that 
$\log\left(\frac1\dl\right)$ is bounded by a polynomial in the size of 
$\cP$ and $\log\left(\frac1\varepsilon\right)$; 
(ii) the algorithm $\Alg$ meets the specifications for being  
an FPAS for $A$ whenever the oracle meets the specification 
for being an FPAS for $B$; and (iii) the running 
time of $\Alg$ is polynomial in the size of $\cP$ and $\log(\frac1\varepsilon)$. 
If an approximation preserving reduction from $A$ to $B$ exists we denote 
it by $A\le_{\rm AP} B$, and say that $A$ is \emph{AP-reducible to} $B$.

Similar to co-clones and polynomial time reductions, partial co-clones 
can be shown to be preserved by AP-reductions.

\begin{theorem}[\cite{Bulatov09:clones}]\label{the:partial}
Let $\rel$ be a relation and $\Gm$ be a set of relations over a finite set
such that  $\rel$ belongs to
$\ang{\Gm}$. Then $\NCSP(\Gm\cup\{\rel\})$ is AP-reducible to $\NCSP(\Gm)$.  
\end{theorem}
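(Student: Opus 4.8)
The plan is to show that the reduction here is in fact \emph{parsimonious}, i.e.\ it preserves the number of solutions exactly, and then to observe that any such reduction is automatically an AP-reduction. The starting point is to unpack the hypothesis $\rel\in\ang{\Gm}$. By the definition of the partial co-clone, $\rel$ is obtained from $\Gm\cup\{=_D\}$ by manipulation with variables and conjunction, with \emph{no} existential quantification. Hence there is a quantifier-free conjunctive definition
$$
\rel(\vc xn)=\bigwedge_{j=1}^m \relo_j(\bw_j),
$$
where each $\relo_j\in\Gm\cup\{=_D\}$ and each $\bw_j$ is a tuple of variables drawn from $\{\vc xn\}$ (possibly with repetitions). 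The essential feature, which drives the whole argument, is that every variable occurring on the right-hand side already occurs among $\vc xn$: no auxiliary variables are introduced.

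Next I would describe the transformation of instances. Given an instance $\cP=(V,\cC)$ of $\NCSP(\Gm\cup\{\rel\})$, I replace every constraint $\ang{\bs,\rel}\in\cC$ that uses $\rel$ by the conjunction of the atoms obtained from the definition above under the substitution sending $\vc xn$ to the corresponding entries of $\bs$. This yields an instance $\cP''$ over $\Gm\cup\{=_D\}$ on the \emph{same} variable set $V$. Since a map $\vf\colon V\to D$ satisfies $\ang{\bs,\rel}$ if and only if it satisfies all the substituted atoms, the solution sets of $\cP$ and $\cP''$ coincide, so $\#\cP=\#\cP''$. To pass from $\Gm\cup\{=_D\}$ to $\Gm$, I eliminate the equality atoms by forming the equivalence classes of the variables generated by the constraints $=_D$ and identifying each class with a single representative, replacing every variable by its representative throughout. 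The resulting instance $\cP'$ is over $\Gm$, and the identification is a bijection between the solutions of $\cP''$ and those of $\cP'$ (a solution of $\cP''$ is constant on each class and hence descends, and conversely each solution of $\cP'$ lifts uniquely); thus $\#\cP'=\#\cP''=\#\cP$.

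Finally I would wrap this into the AP-reduction framework. The algorithm $\Alg$ takes $(\cP,\varepsilon)$, constructs $\cP'$ in time polynomial in the size of $\cP$ (the definition of $\rel$ has fixed size, and there are at most $|\cC|$ constraints to expand), calls the oracle once on $(\cP',\varepsilon)$, and returns its answer. Because $\#\cP'=\#\cP$, the relative error is preserved verbatim, so condition (ii) holds; the single oracle call uses $\dl=\varepsilon$, so $\log(1/\dl)=\log(1/\varepsilon)$ and condition (i) is trivial; and condition (iii) is clear. Hence $\NCSP(\Gm\cup\{\rel\})\le_{\rm AP}\NCSP(\Gm)$.

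The only point requiring genuine care — and the place I expect to spend effort — is the elimination of the equality relation: one must justify the transitive-closure identification of variables and check that it neither changes the count nor introduces spurious free variables that would multiply it. Everything else is forced by the absence of existential quantification, which is exactly what makes the reduction count-preserving; once that is in hand, the approximation-preservation is immediate, since a parsimonious reduction is a fortiori an AP-reduction.
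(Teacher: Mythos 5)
Your proof is correct: unpacking $\rel\in\ang\Gm$ as a quantifier-free conjunctive definition over $\Gm\cup\{=_D\}$, expanding each $\rel$-constraint in place, and eliminating equality atoms by identifying variables along the equivalence classes they generate yields a polynomial-time, count-preserving (parsimonious) reduction, which is then trivially an AP-reduction with a single oracle call at the same error bound. Note that the paper does not prove Theorem~\ref{the:partial} itself --- it is quoted from \cite{Bulatov09:clones} --- and your parsimonious-substitution argument is essentially the standard one underlying that citation, so there is nothing to contrast.
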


This result however has two significant setbacks. First, partial co-clones 
are not studied to the same extent as regular co-clones, and, due to 
greater diversity, are not believed to be ever studied to a comparable 
level. Second, it does not used the full power of AP-reductions, and 
therefore leaves significant space for improvements. In the rest of 
this section we try to improve upon the second issue.

\begin{defin}\label{def:max-implementation}
Let $\Gm$ be a set of relations on a set $D$, and let $\rel$ be an 
$n$-ary relation on $D$. Let $\cP$ be an instance of $\NCSP(\Gm)$ 
over the set of variables consisting of $V=V_x\cup V_y$, where 
$V_x=\{x_1,x_2,\cdots,x_n\}$ and $V_y=\{y_1,y_2,\cdots,y_q\}$. 
For any assignment of $\vf:V_x\to D$, let $\#\varphi$ be the number 
of assignments $\psi:V_y\to D$ such that $\varphi\cup \psi$ 
satisfy $\cP$. Let $M$ be the maximum value of $\#\varphi$ among 
all assignments  of $V_x$. The instance $\cP$ is said to be a 
\emph{max-implementation} of $\rel$ if a tuple $\varphi$ is in $R$ 
if and only if $\#\varphi=M$.
\end{defin}


\begin{theorem}\label{the:max-implementation}
If there is max-implementation of $\rel$ by  $\Gm$, then 
$\NCSP(\Gm \cup \{\rel\})\le_{AP}\NCSP(\Gm)$.
\end{theorem}

\begin{proof}
Let $\cP=(V=V_x\cup V_y,\cC)$ be a max-implementation of $\rel$ 
by $\Gm$, and let $M$ be the maximal number of extensions of 
assignments of $V_x$ to solutions of $\cP$. For any instance 
$\cP_1=(V_1,\cC_1)$ of $\NCSP(\Gm\cup \{\rel\})$ we construct an 
instance $\cP_2=(V_2,\cC_2)$ of $\NCSP(\Gm)$ as follows.
\begin{itemize}
\item
Choose a sufficiently large integer $m$ (to be determined later).
\item
Let $\vc C\ell\in\cC_1$ be the constraints from $\cP_1$ involving 
$\rel$, $C_i=\ang{\bs_i,\rel}$. Set 
$V_2=V_1\cup\bigcup_{i=1}^\ell(V^i_1\cup\ldots\cup V^i_m)$, 
where each $V^i_j$ is a fresh copy of $V_y$. 
\item
Let $\cC$ be the set of constraints of $\cP$. Set 
$\cC_2=(\cC_1-\{\vc C\ell\}) \cup
\bigcup_{i=1}^\ell(\cC^i_1\cup\ldots\cup \cC^i_m)$, where each 
$\cC^i_j$ is a copy of $\cC$ defined as follows. For each 
$\ang{\bs,\relo}\in\cC$ we include $\ang{\bs^i_j,\relo}$ into 
$\cC^i_j$, where $\bs^i_j$ is obtained from $\bs$ replacing every 
variable from $V_y$ with its copy from $V^i_j$. 
\end{itemize} 

Now, as is easily seen, every solution of $\cP_1$ can be extended 
to a solution of $\cP_2$ in $M^{\ell m}$ ways. Observe that 
sometimes the restriction of a solution $\psi$ of $\cP_2$ to $V_1$ 
is not a solution of $\cP_1$. Indeed, it may happen that although 
$\psi$ satisfies every copy $\cC^i_j$ of $\cP$, its restriction to 
$\bs^i_j$ does not belong to $\rel$, simply because this restriction 
does not have sufficiently many extensions to solutions of $\cP$. 
However, any assignment to $V_1$ that is not a solution to $\cP_1$ 
can be extended to a solution of $\cP_2$ in at most 
$(M-1)^m\cdot M^{(\ell-1)m}$ ways. Hence,
$$
M^{\ell m} \cdot \#\cP_1\le \#\cP_2  
\le M^{\ell m} \cdot \#\cP_1 + |V_1|^{|D|} \cdot (M-1)^m\cdot M^{(\ell-1)m}
$$
Then we output $\frac{\#\cP_2}{M^{\ell m}}$.

Let $|V_1|=k$ and $|D|=d$. Given a desired relative error 
$\varepsilon$ we have to find $m$ such that 
$$
\frac{\frac{\#\cP_2}{M^{\ell m}}-\#\cP_1}{\#\cP_1}<\varepsilon.
$$
A straightforward computation shows that any
$$
m> \frac{d\log k-\log\varepsilon}{\log(M-1)-\log M}
$$
achieves the goal.
\end{proof}

Max-implementation can be used as another closure operator on 
the set of relations. Let $\rel(\vc xn,\vc ym)$ be a relation on a set $D$. By 
$\mex(\vc ym)\rel(\vc xn,\vc ym)$ we denote the relation 
$\relo(\vc xn)$ on the same set given by the rule: $\ba\in\relo$ if and 
only if there are $M$ tuples $\bb\in D^m$ such that $(\ba,\bb)\in\rel$,
where $M$ is the maximal number of elements in the set 
$\{\bb\mid (\ba,\bb)\in\relo\}$ over all $\ba\in D^n$.
A set of relations $\Gm$ over $D$ is said to 
be a \emph{max-co-clone} if it contains the equality relations, and 
closed under conjunctions and max-implementations. The smallest 
max-co-clone containing a set of relations $\Gm$ is called the 
\emph{max-co-clone generated by} $\Gm$ and denoted $\mang\Gm$.

\begin{lemma}\label{lem:max-clones-implementations}
Let $\Gm$ be a set of relations and $\rel\in\mang\Gm$. Then there is 
a max-implementation of $\rel$ by $\Gm$.
\end{lemma}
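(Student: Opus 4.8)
The plan is to prove the lemma by induction on the structure of the max-co-clone $\mang\Gm$. Since $\mang\Gm$ is the smallest set of relations containing $\Gm$ and the equality relations, and closed under conjunctions and max-implementations (the $\mex$ operator), I would show that the property ``$\rel$ has a max-implementation by $\Gm$'' is preserved under each of these closure operations. The base case is immediate: if $\rel\in\Gm$ itself, or $\rel$ is an equality relation, then the trivial instance $\cP$ whose single constraint is $\rel$ applied directly to $n$ variables (with $V_y=\emptyset$, so that $M=1$ and $\#\varphi=M$ exactly for the tuples in $\rel$) serves as a max-implementation. The real content is in the two inductive steps.

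For the conjunction step, suppose $\rel_1$ and $\rel_2$ (say both $n$-ary after suitable variable identification) have max-implementations $\cP_1=(V_x\cup V_y^1,\cC_1)$ and $\cP_2=(V_x\cup V_y^2,\cC_2)$, sharing the visible variables $V_x$ but with disjoint hidden variables. I would form $\cP=(V_x\cup V_y^1\cup V_y^2,\cC_1\cup\cC_2)$. Then for an assignment $\varphi$ of $V_x$, the number of extensions factors as $\#\varphi=\#_1\varphi\cdot\#_2\varphi$, where $\#_i\varphi$ counts extensions into $V_y^i$. The maximum of $\#\varphi$ is at most $M_1M_2$ (the product of the individual maxima), and it is attained \emph{exactly} on those $\varphi$ that simultaneously attain both $M_1$ and $M_2$, i.e.\ on $\rel_1\cap\rel_2$. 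This is the point requiring care: I must verify that the product is maximized precisely on the intersection and not spuriously elsewhere, which follows because each factor is individually bounded by its own max.

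The max-quantification step is where I expect the main obstacle. Suppose $\rel(\vc xn,\vc ym)$ has a max-implementation $\cP=(V_x\cup V_y\cup V_z,\cC)$, where $V_x,V_y$ are visible and $V_z$ hidden, and I want a max-implementation of $\relo=\mex(\vc ym)\rel$ over just $V_x$. The naive idea is to treat $V_y\cup V_z$ all as hidden, but this does not immediately work: collapsing $V_y$ into the hidden set counts, for each $\varphi$ on $V_x$, the total number of pairs $(\psi,\text{extension})$, which mixes the count over $\vc ym$ with the multiplicity of internal $V_z$-extensions. The difficulty is that in the original max-implementation each tuple of $\rel$ is only guaranteed to be hit by the \emph{maximal} number $M$ of extensions, so tuples in $\rel$ all carry the same weight $M$, while tuples outside $\rel$ carry strictly fewer. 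I would exploit exactly this uniformity: because every $(\ba,\bb)\in\rel$ has precisely $M$ completions into $V_z$ and every non-member has at most $M-1$, the total count of $(V_y\cup V_z)$-extensions of a fixed $\varphi$ on $V_x$ is dominated by $M$ times the number of $\bb$ with $(\ba,\bb)\in\rel$, with a controlled error term from the near-maximal completions of non-members.

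To make this clean rather than approximate, the robust approach is to first \emph{amplify} the original max-implementation so that the gap between $M$ and $M-1$ is exponentially large relative to the number of possible $\bb$: replace $\cP$ by $m'$ parallel disjoint copies of its hidden part $V_z$ (exactly the copying construction used in the proof of Theorem~\ref{the:max-implementation}), which raises members to weight $M^{m'}$ and non-members to at most $(M-1)^{m'}$. For $m'$ large enough that $d^m\cdot(M-1)^{m'}<M^{m'}$ (where $d=|D|$ bounds the number of $\bb$), the total $V_y\cup V_z$-extension count of $\varphi$ becomes $M^{m'}\cdot|\{\bb:(\ba,\bb)\in\rel\}|$ plus an error strictly smaller than one ``unit'' $M^{m'}$, so the integer quotient recovers the exact count of $\bb$-extensions. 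Consequently the new instance, with $V_y$ and all copies of $V_z$ taken as hidden, maximizes its extension count precisely on those $\ba$ achieving the maximal number of $\bb$, which is by definition $\relo$. The last thing I would check is that a single max-co-clone element is built from finitely many applications of these operators, so the inductive hypothesis applies at each stage and the construction terminates.
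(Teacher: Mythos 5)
Your strategy is essentially the paper's own proof. The paper also argues by induction on the construction of $\mang\Gm$, showing every relation in it has a prenex representation $\rel(\vc xn)=\mex(\vc ym)\Phi(\vc xn,\vc ym)$ with $\Phi$ quantifier-free: its case (2) is your conjunction step (the same product-of-counts argument, which is correct in both versions, since members get \emph{exactly} $M_1M_2$ extensions there), and its case (3) is your max-quantification step, with the same amplification device (disjoint copies of the inner hidden variables so that members carry weight $M^{m'}$ and near-misses at most $(M-1)^{m'}$).

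However, the last inference of your max-quantification step has a genuine gap. Write $n(\ba)=|\{\bb:(\ba,\bb)\in\rel\}|$; your amplified instance gives each $\ba$ the total count $T(\ba)=n(\ba)M^{m'}+E(\ba)$, where $E(\ba)$ collects the contributions of tuples $\bb$ with $(\ba,\bb)\notin\rel$. Your choice of $m'$ guarantees $0\le E(\ba)<M^{m'}$, so every maximizer of $T$ has maximal $n(\ba)$ --- but not conversely: among the tuples of $\relo$ (which all share the same maximal $n(\ba)$), $T$ is maximized only by those with the largest error term $E(\ba)$, and these may form a proper subset of $\relo$. A max-implementation requires the set of maximizers to equal $\relo$ exactly; amplification can dominate the error terms but cannot equalize them. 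Concretely, let $D=\{0,1\}$, $\Gm=\{\rel_0\}$ with $\rel_0=\{(0,0,0),(0,0,1),(0,1,1),(1,0,0),(1,0,1)\}$, and let $\cP$ have visible variables $x,y$, hidden variable $z$, and the single constraint $\rel_0(x,y,z)$; this max-implements $\rel=\{(0,0),(1,0)\}$ with $M=2$. Then $\relo=\mexe y\,\rel=\{(0),(1)\}$, but for every $m'$ your instance yields $T(0)=2^{m'}+1>2^{m'}=T(1)$, so it max-implements $\{(0)\}\ne\relo$. (This defeats the construction, not the lemma: here $\relo$ is trivially max-implementable.) You should know that the paper's own proof of its case (3) has exactly the same defect: it asserts that every $\ba\in\rel$ has exactly $LN^c$ extensions and every $\ba\notin\rel$ at most $M(N-1)^c$, which silently assumes the profile of non-maximal extensions is uniform across tuples, and the same two-line example refutes that computation. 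So you have in effect reproduced the published argument together with its flaw; your more explicit bookkeeping only makes the problem easier to see. Closing the gap, in either version, requires a further idea for making the lower-order contributions of all tuples of $\relo$ equal rather than merely small.
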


\begin{proof}
Suppose $\rel\in\mang\Gm$. We need to show that $\rel$ can be represented 
as $\rel(\vc xn)=\mex(\vc ym)$\lb$ \Phi(\vc xn,\vc ym)$, where $\Phi$ is quantifier
free. To this end it suffices to prove three equalities:
\begin{enumerate}
\item
if $\rel(\vc xn)=\mex(\vc ym)\Phi(\vc xn,\vc ym)$ and $\pi$ is a transformation 
of the set $\{\vc xn\}$ then 
$(\pi\rel)(\vc xn)=\mex(\vc ym)\Phi(\pi(x_1)\zd\pi(x_n),\vc ym)$;
\item
if $\rel(\vc xn)=\mex(\vc ym)\Phi_1(\vc xn,\vc ym)\meet \mex(\vc zr)\Phi_2(\vc xn,\vc zr)$,
then $\rel(\vc xn)=\mex(\vc ym,\vc zr)(\Phi_1(\vc xn,\vc ym)\meet \Phi_2(\vc xn,\vc zr))$;
\item
if $\rel(\vc xn)=\mex(\vc ym)\mex(\vc zr)\Phi(\vc xn,\vc ym,\vc zr)$, then
there is a quantifier free formula $\Psi$ such that 
$\rel(\vc xn)=\mex(\vc us)\Psi(\vc xn,\vc us)$.
\end{enumerate}

(1) follows straightforwardly from definitions.

(2) $\ba\in\rel$ if and only if it has the maximal number of extensions in
both $\Phi_1$ and $\Phi_2$. Without loss of generality, sets $\{\vc ym\}$ and
$\{\vc zr\}$ are disjoint. Let a tuple $\ba\in\rel$ have $M_1$ extensions in 
$\Phi_1$ and $M_2$ extensions in $\Phi_2$. Then it has $M_1M_2$ extensions
in $\Phi_1\meet\Phi_2$. On the other hand, let $\ba\not\in\rel$. Let also
it have $M'_1$ extensions in $\Phi_1$ and $M'_2$ extensions in $\Phi_2$, and
either $M'_1< M_1$ or $M'_2<M_2$. Since such tuple has $M'_1M'_2<M_1M_2$ extensions,
it does not belong to the relation defined by 
$\rel(\vc xn)=\mex(\vc ym,\vc zr)(\Phi_1(\vc xn,\vc ym)\meet \Phi_2(\vc xn,\vc zr))$
as well.

(3) Observe first that $\rel(\vc xn)$ does not necessarily equal \lb 
$\mex(\vc ym,\vc zr)\Phi(\vc xn,\vc ym,\vc zr)$. Indeed, let $\Phi'$ denote the 
formula\lb 
$\relo(\vc xn,\vc ym)=\ex(\vc zr)\Phi(\vc xn,\vc ym,\vc zr)$. Then it is 
possible that although every extension of a tuple $\ba$ to $(\ba,\bb)\in\relo$ 
has very few extensions to a tuple from $\Phi$, and so $\ba\not\in\rel$, the number 
of extensions $\bb$ is large so that combined $\ba$ has enough extensions to
tuples from $\Phi$.

To avoid this we make sure that extensions to tuples from $\relo$ cannot make up 
for extensions to $\Phi$. Let $M$ be the maximal number of extensions $\bb$ of 
tuple $\ba$ such that $(\ba,\bb)\in\relo$, and $N$ the maximal number of 
extensions $\bc$ of $(\ba,\bb)\in\relo$ to $(\ba,\bb,\bc)\in\Phi$. Let also $L$ 
be the maximal number of extensions $\bb$ of $\ba\in\rel$; it is possible 
that $L<M$. Set 
$$
c=\max\left(1,\left\lceil\log\frac LM/ \log\frac {N-1}N\right\rceil\right).
$$ 
We show that 
$\rel(\vc xn)=\mex(\vc us)\Psi(\vc xn,\vc us)$, where $\{\vc us\}=$\lb$
\{\vc ym,z_1^1\zd z_r^1\zd z_1^c\zd z_r^c\}$, and 
$$
\Psi(\vc xn,\vc us)=\bigwedge_{s=1}^c \Phi(\vc xn,\vc ym,z_1^s\zd z_r^s).
$$ 
If a tuple $\ba$ belongs to $\rel$ it is extendable in $L$ ways to a tuple from
$\relo$, and then every such extended tuple $(\ba,\bb)$ is extendable in $N$
ways to a tuple from $\Phi$. Therefore $\ba$ has $LN^c$ extensions to a tuple 
from $\Psi$. On the other hand, if $\ba\not\in\rel$, then it can be extended 
in at most $M$ ways to a tuple $(\ba,\bb)\in\relo$, then this tuple is extendable 
in at most $N-1$ ways to a tuple from $\Phi$. Thus $\ba\not\in\rel$ has 
$$
M(N-1)^c =LN^c\cdot\frac ML\left(\frac{N-1}N\right)^c <LN^c
$$
extensions.
\end{proof}

The next natural step would be to find a type of func\-tions and a 
closure operator on the set of functions that give rise to a Galois 
connection capturing max-co-clones. 

\begin{problem}\label{pro:max-implementation}
Find a class $\cF$ of (partial) functions and a closure operator 
$[\cdot]$ on this class such that for any set of relations $\Gm$ 
and any set $C\sse\cF$ it holds that 
$\mang\Gm=\Inv(\cF\cap\PPol(\Gm))$, and $[C]=\cF\cap\PPol\,\Inv(C)$.
\end{problem}

In all the cases previously studied the projection (or quantification) type 
operators on relations can be reduced to quantifying away a 
single variable. However, max-implementations seem to inherently 
involve  a number of variables, rather than a single variable. 
In the end of this paper we use our description of Boolean 
max-co-clones to show that max-implementations are provably more
powerful than max-quantification (see below). In the Boolean case every
max-quantification is equivalent to either existential quantification,
or universal quantification. Sets of relations on $\{0,1\}$ closed
under these two types of quantifications are well known: these
are sets of invariant relations of sets of surjective functions 
\cite{Borner09:games}. However, not all of them are max-co-clones.

Therefore a meaningful relaxation of max-co-clones restricts 
the use of max-implementation to one auxiliary variable. 
Let $\Phi$ be a formula with free variables $\vc xn$ and $y$ over 
set $D$ and some predicate symbols. Then $\vc an$ satisfy 
$$
\Psi(\vc xn)=\mexe y\Phi(\vc xn,y)
$$
if and only if the number of $b\in D$ such that $\Phi(\vc an,b)$ 
is true is maximal among all tuples $(\vc cn)\in D^n$. The 
quantifier $\mexe$ will be called \emph{max-quantifier}. 
A set of relations $\Gm$ over $D$ is said to be a 
\emph{max-existential co-clone} if it contains the equality relation, 
and closed under conjunctions and max-existential quantification. 
The smallest max-existential co-clone containing a set of relations 
$\Gm$ is called the \emph{max-existential co-clone generated by} 
$\Gm$ and denoted $\mange\Gm$.

 \begin{problem}\label{pro:max-quantification}
Find a class $\cF$ of (partial) functions and a closure operator 
$[\cdot]$ on this class such that for any set of relations $\Gm$ 
and any set of functions $C\sse\cF$ it holds that 
$\mange\Gm=\Inv(\cF\cap\PPol(\Gm))$, and  $[C]=\cF\cap\PPol\,\Inv(C)$.
\end{problem}

In the next section we consider certain constructions approximating max-existential co-clones.

\section{$k$-Existential and max-existential co-clones}

In order to approach max-quantification we consider counting quantifiers that have 
been used in model theory to increase the power of first order logic 
\cite{Immerman90:describing,Etessami97:counting}. 

Let $\Phi$ be a formula with free variables $\vc xn$ and $y$ over set $D$ and 
some predicate symbols. Then $\vc an$ satisfy 
$$
\Psi(\vc xn)=\kex y\Phi(\vc xn,y)
$$
if and only if $\Phi(\vc an,b)$ is true for at least $k$ values $b\in D$. The 
quantifier $\kex$ will be called \emph{$k$-existential quantifier}. 
It is easy to see that 1-existential quantifier is just the regular existential 
quantifier, and the $|D|$-existential quantifier is equivalent to the universal 
quantifier on set $D$.  

We now introduce several types of co-clones depending on what kind of 
$k$-existential quantifiers are allowed. A set of relations $\Gm$ over set 
$D$ is said to be a \emph{$k$-existential partial co-clone} if it contains the 
equality relation $=_D$, and closed under manipulations with variables, 
conjunction, and $k$-existential quantification. The smallest $k$-existential partial 
co-clone containing a set of relations $\Gm$ is called the 
\emph{$k$-existential partial co-clone generated by} $\Gm$ and denoted 
$\ang\Gm_k$. In a similar way we can define sets of
relations closed under several counting quantifiers. Let $K\sse\nat$. 
A set of relations $\Gm$ over set 
$D$ is said to be a \emph{$K$-existential partial co-clone} if it contains the 
equality relation $=_D$, and closed under manipulations with variables, 
conjunction, and $k$-existential quantification for $k\in K$. Clearly, if $\Gm$ is a set of relations 
on an $m$-element set, we may assume $K\sse[m]$. 
If $1\in K$, set $\Gm$ is closed under existential quantification, and so it is called 
a \emph{$K$-existential co-clone}. If, in addition, $K=\{1,k\}$, $\Gm$ is called 
\emph{$k$-existential co-clone}. The set $\Gm$ is said to be a 
\emph{counting co-clone}\footnote{`Counting' in this term comes from 
counting quantifiers and has nothing to do with counting constraint 
satisfaction.} 
if it is an $\nat$-existential partial co-clone, that is, if it contains $=_D$, and 
closed under conjunctions and $k$-existential 
quantification for all $k\ge 1$. The smallest $K$-existential partial co-clone 
($K$-existential co-clone, $k$-existential co-clone, counting co-clone)
 containing $\Gm$ are called the 
\emph{$K$-existential partial co-clone} (\emph{$K$-existential co-clone}, 
\emph{$k$-existential co-clone}, \emph{counting co-clone}) 
\emph{generated by} $\Gm$ and denoted $\ang\Gm_K$ ($\dang\Gm_K$,
$\dang\Gm_k$, $\dang\Gm_\infty$, respectively). 

We observe some simple properties of counting quantifiers. 

\begin{lemma}\label{lem:counting-quantifiers-property}
Let $\Phi(\vc xn,\vc ym)$ and $\Psi(\vc xn,\vc z\ell)$ be conjunctive 
quantifier free formulas. Then
\begin{eqnarray*}
\lefteqn{\ex_{s_1}y_1\ldots\ex_{s_m}y_m\ex_{t_1}z_1\ldots\ex_{t_\ell}
\, (\Phi(\vc xn,\vc ym)\meet\Psi(\vc xn,\vc z\ell))}\\
&=&
(\ex_{s_1}y_1\ldots\ex_{s_m}y_m
\, (\Phi(\vc xn,\vc ym))\meet(\ex_{t_1}z_1\ldots\ex_{t_\ell}\,\Psi(\vc xn,\vc z\ell)),
\end{eqnarray*}
for any $\vc sm,\vc t\ell\in\nat$, provided $\vc ym,\vc z\ell\not\in\{\vc xn\}$
and $\{\vc ym\}\cap\{\vc z\ell\}=\eps$.
\end{lemma}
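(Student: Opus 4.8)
The plan is to prove the identity by showing that the two sides define the same relation, i.e.\ that a tuple $(\vc an)$ satisfies the left-hand side if and only if it satisfies the right-hand side. The key structural fact I would exploit is that because $\{\vc ym\}$ and $\{\vc z\ell\}$ are disjoint and neither family meets $\{\vc xn\}$, once the free variables $\vc xn$ are fixed to $\vc an$ the quantifier-free matrix splits as a conjunction $\Phi(\vc an,\vc ym)\meet\Psi(\vc an,\vc z\ell)$ in which the first conjunct depends only on the $y$-variables and the second only on the $z$-variables. Thus the solution set of the combined matrix (over the variables $\vc ym,\vc z\ell$) is literally the Cartesian product of the solution set $A=\{\bw\in D^m\mid \Phi(\vc an,\bw)\}$ of $\Phi$ and the solution set $B=\{\bw\in D^\ell\mid \Psi(\vc an,\bw)\}$ of $\Psi$.

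Next I would peel off the counting quantifiers one at a time, from the innermost outward, and track exactly how each $\ex_{s}$ constrains the corresponding coordinate. Recall from the definition of $\kex$ that $\ex_{s}w\,\Theta(w,\ldots)$ holds precisely when $\Theta$ is satisfied for at least $s$ distinct values of $w$; so applying the whole block $\ex_{s_1}y_1\ldots\ex_{s_m}y_m\ex_{t_1}z_1\ldots\ex_{t_\ell}z_\ell$ to a set $S\sse D^{m+\ell}$ can be described recursively in terms of section sizes. The crucial point is that, because the relevant set is a product $A\times B$, each section of $A\times B$ with respect to a partially-fixed tuple factors as a product of a section of $A$ with a section of $B$, and a product of two finite sets has at least $s\cdot t$ elements in the appropriate nested sense exactly when one factor has at least $s$ elements and the other has at least $t$; more to the point, whether a coordinate of $A\times B$ admits at least $s_i$ values never depends on the $z$-coordinates, and symmetrically. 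Carrying this observation through the nested counting shows that the block over $\vc ym$ evaluates on $A\times B$ to exactly the set determined by applying $\ex_{s_1}y_1\ldots\ex_{s_m}y_m$ to $A$, \emph{provided} $B\neq\eps$, and likewise the $z$-block reduces to $\ex_{t_1}z_1\ldots\ex_{t_\ell}z_\ell$ applied to $B$ provided $A\neq\eps$.

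The main obstacle, and the place where care is required, is precisely this nonemptiness caveat: the factorization of the counting is only clean when both $A$ and $B$ are nonempty, since an empty factor collapses the product and could in principle break the equivalence in one direction. I would handle this by checking the degenerate cases separately. If $A=\eps$ then $A\times B=\eps$, so the innermost quantifier $\ex_{s_m}y_m$ (with $s_m\ge1$) already fails on the left, while on the right the conjunct $\ex_{s_1}y_1\ldots\ex_{s_m}y_m\Phi$ fails as well; symmetrically for $B=\eps$; so both sides evaluate to false and the identity holds vacuously. Once both factors are nonempty, the product structure makes each nested counting threshold on $A\times B$ independent across the two blocks, and the left-hand side holds iff the $y$-block is satisfied by $A$ \emph{and} the $z$-block is satisfied by $B$, which is exactly the right-hand conjunction. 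I expect the bookkeeping in the nonempty case—verifying that the section sizes used by each successive $\ex_{s_i}$ are unaffected by the other block—to be the most delicate step, but it is routine given the product decomposition, so I would state it carefully and then let it follow by induction on $m+\ell$.
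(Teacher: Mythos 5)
Your proof is correct: fixing $\vc an$, the solution set of the quantifier-free matrix is the product $A\times B$, nested counting quantifiers over a product split into the two blocks, and you rightly isolate the one point of danger (an empty factor), where both sides are simultaneously false because all thresholds $s_i,t_j$ are at least $1$. There is nothing in the paper to compare against --- the lemma is stated without proof and used immediately for Corollary~\ref{cor:prenex} --- so your write-up supplies the routine verification the authors left implicit; the only adjustment worth making is to phrase the induction so that it covers possibly-empty $A$ and $B$ (a section $A_{b_1}$ can be empty even when $A$ is not), i.e.\ your ``degenerate case'' analysis belongs inside the inductive hypothesis rather than as a separate preliminary check.
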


\begin{corollary}\label{cor:prenex}
Let $\Gm$ be a set of relations on a set $D$, $K\sse\nat$, 
and $\rel(\vc xn)\in\ang\Gm_K$. Then there is a conjunctive quantifier free
formula $\Phi(\vc xn,\vc ym)$ using relations from $\Gm$ and the equality relation such that
$$
\rel(\vc xn)=\ex_{s_1}\ldots\ex_{s_m}\, \Phi(\vc xn,\vc ym).
$$
\end{corollary}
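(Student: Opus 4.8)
The plan is to prove the claim by structural induction on the construction of relations inside $\ang\Gm_K$, the inductive claim being that every relation so generated admits a \emph{prenex} representation of the advertised shape: a single block of counting quantifiers $\ex_{s_1}y_1\ldots\ex_{s_m}y_m$ (with each $s_i\in K$) applied to a conjunctive quantifier-free formula $\Phi(\vc xn,\vc ym)$ over the relations of $\Gm$ and the equality relation. For the base case, each relation of $\Gm$ and the relation $=_D$ is already a conjunctive quantifier-free formula, i.e.\ it is in prenex form with an empty quantifier prefix ($m=0$).

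For the induction step I would treat the three generating operations in turn, assuming their operands are already in prenex form. First, for \emph{manipulation with variables}: given $\rel(\vc xn)=\ex_{s_1}y_1\ldots\ex_{s_m}y_m\,\Phi(\vc xn,\vc ym)$ and a map $\pi$ on $\{\vc xn\}$, I would rename the bound variables $\vc ym$ to fresh variables outside the range of $\pi$---harmless because $\ex_k y$ asserts only that $\Phi$ has at least $k$ witnesses in $y$, a property independent of the name $y$---and then substitute, obtaining the prenex formula $(\pi\rel)(\vc xn)=\ex_{s_1}y_1\ldots\ex_{s_m}y_m\,\Phi(\pi(x_1)\zd\pi(x_n),\vc ym)$. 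Second, for \emph{conjunction}: given two prenex relations $\ex_{s_1}y_1\ldots\ex_{s_m}y_m\,\Phi$ and $\ex_{t_1}z_1\ldots\ex_{t_\ell}z_\ell\,\Psi$, I would rename so that $\{\vc ym\}$, $\{\vc z\ell\}$ and $\{\vc xn\}$ are pairwise disjoint and then invoke Lemma~\ref{lem:counting-quantifiers-property} read from right to left: its right-hand side is exactly the conjunction of two separately quantified formulas, while its left-hand side merges both prefixes into the single block $\ex_{s_1}y_1\ldots\ex_{s_m}y_m\ex_{t_1}z_1\ldots\ex_{t_\ell}z_\ell\,(\Phi\meet\Psi)$. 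Third, for \emph{$k$-existential quantification}: applying $\kex x_i$ to a prenex relation simply prepends one counting quantifier to the existing block, moving $x_i$ from the free to the bound variables, so the result is again of the required form.

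The only real content, and the step requiring the most care, is the conjunction case: one must ensure that the side conditions of Lemma~\ref{lem:counting-quantifiers-property}---disjointness of the two bound blocks and their separation from the free variables $\vc xn$---actually hold, which is exactly why the bound variables are renamed before the prefixes are merged. The remaining bookkeeping, namely the $\alpha$-renaming of counting-quantified variables and the observation that $\Phi$ keeps using only relations of $\Gm$ and $=_D$, is routine, since none of the three operations introduces a new predicate symbol. Because $\ang\Gm_K$ is by definition the closure of $\Gm\cup\{=_D\}$ under precisely these three operations, the induction reaches every $\rel\in\ang\Gm_K$ and delivers the asserted normal form.
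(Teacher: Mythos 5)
Your proof is correct and follows exactly the route the paper intends: the paper states this as an immediate corollary of Lemma~\ref{lem:counting-quantifiers-property}, leaving implicit precisely the structural induction you spell out (base case, variable manipulation, prefix merging via the lemma after renaming bound variables, and prepending a counting quantifier). The only content beyond the lemma is the renaming/bookkeeping, which you handle properly.
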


The following observation summarizes some relationship between 
the constructions introduced.

\begin{observation}\label{obs:inclusions}
For a set of relations $\Gm$ on $D$, $|D|=m$, the following hold.\\[.5mm]
\indent
- $\Gm$ is a 1-existential (partial) co-clone if and only if it is a co-clone.\\[.5mm]
\indent
- $\Gm$ is a (partial) $m$-existential clone if and only if it is a (partial) co-clone closed under universal quantification.\\[.5mm]
\indent
- if $\Gm$ is a counting co-clone then it is a max-existential co-clone.\\[.5mm]
\indent
- if $\Gm$ is a max-existential co-clone then it is a partial $m$-existential co-clone.
\end{observation}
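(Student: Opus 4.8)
The plan is to take the four assertions in turn, in each case writing out the semantics of the exotic quantifier over the $m$-element set $D$ so that it collapses to a familiar operation; the proofs are then comparisons of closure conditions rather than constructions. For the first two bullets everything is definitional. Reading $\kex y\,\Phi$ with $k=1$ gives ``at least one witness $b$'', i.e.\ ordinary existential quantification, so closure under manipulation of variables, conjunction and $\ex_1$ is verbatim the definition of a (partial) co-clone; this is the first bullet. Since $|D|=m$, reading $\ex_m y\,\Phi$ gives ``$\Phi(\vc an,b)$ for at least $m$, hence all, $b\in D$'', so $\ex_m$ is universal quantification, and a (partial) $m$-existential co-clone is exactly a (partial) co-clone that is in addition closed under $\forall$; this is the second bullet.

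For the third bullet I would note that a single max-quantification is a counting quantification with a fixed threshold. Fix quantifier-free $\Phi(\vc xn,y)$ and, for $\vc cn\in D^n$, let $e(\vc cn)$ be the number of $b\in D$ with $\Phi(\vc cn,b)$; set $M=\max_{\vc cn}e(\vc cn)$, so $0\le M\le m$. As no tuple has more than $M$ extensions, $\vc an$ is selected by $\mexe y\,\Phi$ exactly when $e(\vc an)\ge M$, that is $\mexe y\,\Phi=\ex_M y\,\Phi$. A counting co-clone is closed under $\ex_k$ for every $k\ge 1$ and contains the full relations handling the degenerate value $M=0$, so it is closed under every max-quantification and is a max-existential co-clone.

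The fourth bullet carries the real content: a max-existential co-clone $\Gm$ must be shown closed under manipulation of variables and under $\ex_m=\forall$. Manipulation reduces to the available operations---permutation is renaming, a dummy coordinate comes from conjoining $=_D(y,y)$, and identifying two coordinates of $\rel$ is obtained by conjoining an equality that binds a fresh variable to an existing coordinate and then max-quantifying that fresh variable. For universal quantification the favorable half follows at once from the computation above: whenever $\forall y\,\Phi$ is nonempty, some tuple attains all $m$ extensions, so $M=m$ and hence $\mexe y\,\Phi=\{\vc an:e(\vc an)=m\}=\forall y\,\Phi\in\Gm$.

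The step I expect to be the main obstacle is the degenerate case $\forall y\,\Phi=\emptyset$ with $\Phi$ still satisfiable: then $M<m$, and a bare $\mexe y\,\Phi$ returns the nonempty set of sub-maximal tuples rather than the empty relation, so one max-quantification cannot by itself produce $\forall y\,\Phi$ (the same sub-maximality underlies the identification step when a diagonal is empty). My plan is to isolate this into the single statement that the empty relation of each arity lies in $\Gm$, equivalently that conjunction together with max-quantification can realize a strictly sub-maximal selection, and to attack it with a gadget that pads the number of extensions up to $m$ exactly when $\forall y\,\Phi$ is witnessed. The identities $\ex_1=\ex$, $\ex_m=\forall$ and $\mexe y\,\Phi=\ex_M y\,\Phi$ are routine; this emptiness phenomenon is the only genuine hurdle.
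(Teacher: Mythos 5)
You handle the first three bullets correctly, and with what is surely the intended argument: the first two are purely definitional once one reads $\ex_1$ as ordinary existential quantification and $\ex_m$ as universal quantification over the $m$-element domain, and the third rests on the identity $\mexe y\,\Phi=\ex_M y\,\Phi$, where $M\ge1$ is the attained maximum number of extensions, with the full relation covering the degenerate value $M=0$. The paper states this Observation without any proof at all, so there is nothing to compare against beyond these routine reductions, which you have supplied.

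The fourth bullet is where your proposal has a genuine gap, and the repair you sketch cannot be carried out. You propose to prove that the empty relation of each arity lies in every max-existential co-clone $\Gm$, via a gadget that pads the number of extensions up to $m$. No such gadget can exist. Consider the smallest max-existential co-clone, $\mange{\{=_D\}}$, with $m\ge2$. Every one of its members is definable by a conjunction of equalities among its coordinates (some coordinates possibly unconstrained): the class of such relations contains $=_D$, is closed under manipulation of variables and conjunction, and is closed under $\mexe$, because max-quantifying a variable of such a relation simply projects it out --- every tuple of the projection has the same number of extensions, namely $1$ if the quantified variable is equated to a free one and $m$ otherwise, while tuples outside the projection have none. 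Every relation in this class contains all constant tuples $(a,\ldots,a)$, hence is nonempty. By contrast, any partial $m$-existential co-clone must contain $\ex_m y\,(x=_D y)$, which is the empty unary relation when $m\ge2$. Hence $\mange{\{=_D\}}$ is not a partial $m$-existential co-clone: the degenerate case you isolated is not merely the hard step, it is a counterexample to the literal statement. The claim (and then your $M=m$ computation is essentially the whole proof) survives only under the convention that the empty relations are adjoined to all the classes involved --- the convention forced on any set of the form $\Inv(C)$ and tacitly assumed by the paper. So rather than hunting for a padding gadget, you should either invoke that convention explicitly or flag the fourth bullet as requiring it; the same convention also rescues your reduction of variable identification to conjunction plus $\mexe$, which fails in exactly the same way when the identified relation is empty.
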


In all other cases the introduced versions of co-clones are incomparable.

\begin{example}\label{exa:incomparability}\rm
Fix a natural number $m$ and let $D$ be a set with $\frac{m(m-1)}2$ 
elements. Consider an equivalence relation $\rel_m$ on $D$ with 
classes $D_1\zd D_m$ such that $|D_i|=i$. Then the co-clone generated 
by $\rel_m$ corresponds to one of the Rosenberg's maximal clones 
\cite{Rosenberg70:funktionale}, and so the structure of relations from 
this co-clone is well understood. For any $n$-ary relation 
$\relo\in\dang{\rel_m}$ there is a partition $\vc Ik$ of $[n]$ such that 
a tuple $\ba$ belongs to $\relo$ if and only if for each $j\in[k]$ and every 
$i,i'\in I_j$ the entries $\ba[i],\ba[i']$ are $\rel_m$-related. This also 
means that $\ang{\rel_m}=\dang{\rel_m}$.

Applying $k$-existential and max-existential quantifiers one can easily 
find the $k$-existential, counting, and max-existential clones generated 
by $\rel$:
\begin{enumerate}
\item
$\ang{\rel_m}_k=\dang{\rel_m}_k$ is the set of relations $\relo$:
There is a partition $\vc It$ of $[\ar(\relo)]$ and $J\sse[t]$ such 
that a tuple $\ba$ belongs to $\relo$ if and only if for each $j\in[t]$ 
and every $i,i'\in I_j$ the entries $\ba[i],\ba[i']$ are $\rel_m$-related 
and $\ba[i]\in D_k\cup\ldots\cup D_m$ for $i\in I_j$, $j\in J$. 
\item
$\dang{\rel_m}_\infty$ is the set of relations $\relo$:
There is a partition $\vc It$ of $[\ar(\relo)]$ and a function $\vf:[t]\to[m]$ 
such that a tuple $\ba$ belongs to $\relo$ if and only if for each $j\in[t]$ 
and every $i,i'\in I_j$ the entries $\ba[i],\ba[i']$ are $\rel_m$-related 
and $\ba[i]\in D_{\vf(j)}\cup\ldots\cup D_m$ for $i\in I_j$, $j\in J$. 
\item
$\mang{\rel_m}=\mange{\rel_m}$ is the set of relations $\relo$:
There is a partition $\vc It$ of $[\ar(\relo)]$ and $J\sse[t]$ such that 
a tuple $\ba$ belongs to $\relo$ if and only if for each $j\in[t]$ and 
every $i,i'\in I_j$ the entries $\ba[i],\ba[i']$ are $\rel_m$-related 
and $\ba[i]\in D_m$ for $i\in I_j$, $j\in J$. 
\end{enumerate}

A set $\Gm$ such that $\ang\Gm_k\ne\dang\Gm_k$ can be easily 
found among usual weak co-clones. For instance, for any weak co-clone
$\Gm$ that is not a co-clone we have $\ang\Gm_1\ne\dang\Gm_1$. 
Such a weak co-clone can be found in, say, \cite{Haddad95:permutations}. 

In the example given we have $\mange{\rel_m}=\ang{\rel_m}_m$. 
However, since $\ang{\rel_{m-1}}_m=\ang{\rel_{m-1}}$, we have 
$\mange{\rel_{m-1}}\ne\ang{\rel_{m-1}}_m$. 
For an example distinguishing between $\mang\Gm$ and $\mange\Gm$ 
see the Conclusion.

We give a sketchy proof of (1) here, the remaining results are similar.
Let $\relo(\vc xn)$ satisfies the conditions in (1) for a partition $\vc It$ of
$[n]$ and $J\sse[t]$. Without loss of generality assume $J=[s]$, $s\le t$.
Choose variables $\vc ys\not\in\{\vc xn\}$ and consider relation 
$\rela(\vc xn,\vc ys)$ given by: $\ba\in\rela$ if and only if 
$(\ba[i],\ba[j])\in\rel_m$ for any 
$i,j\in I_\ell$ for some $\ell\in[t]$ and $(\ba[i],\ba[n+\ell])\in\rel_m$ 
for any $i\in I_\ell$ where $\ell\in J$. Clearly,
$\rela\in\ang{\rel_m}=\dang{\rel_m}$. Now, as it is easy to see, 
$$
\relo(\vc xn)=\kex y_1\ldots\kex y_s\,\rela(\vc xn,\vc ys).
$$

In order to show that every relation from $\dang{\rel_m}_k$ satisfies these conditions,
it suffices to prove that the set of relations $\Gm$ satisfying them is closed under 
manipulations with variables, conjunction, existential quantification, and
$k$-existential quantification. The first three operations are easy, since
$\Gm$ is a co-clone generated by $\rel_m$ and unary relation 
$D'=D_k\cup\ldots\cup D_m$. Let $\relo(\vc xn)\in\Gm$ and 
$\rela(\vc x{n-1})=\kex x_n\,\relo(\vc xn)$. Let also $\vc It$ and $J\sse[t]$
be the partition and a set from conditions (1). We may assume $n\in I_t$. 
Then if $t\in J$ then $\rela(\vc x{n-1})=\ex x_n\,\relo(\vc xn)$. Otherwise
$\ba\in\rela$ if and only if (a) for any $i,j\in I_\ell$, $\ell<t$, we have 
$(\ba[i],\ba[j])\in\rel_m$, (b) for any $i,j\in I'_t=I_t-\{n\}$, we have 
$(\ba[i],\ba[j])\in\rel_m$, and (c) $\ba[i]\in D'$, whenever 
$i\in I'_t\cup\bigcup_{s\in J}I_s$. Therefore $\rela\in\dang{\rel_m}_k$.
\end{example}

\section{Galois correspondence}

Let $D$ be a finite set. A (partial) function $f\colon D^n\to D$ is 
said to be \emph{$k$-subset surjective} if for any $k$-element 
subsets $\vc An\sse D$ the image $f(A_1\zd A_n)$ has cardinality 
at least $k$. A (partial) function that is $k$-subset surjective for 
each $k$, $1\le k \le|D|$ is said to be \emph{subset surjective}. 
The set of all arity $n$ $k$-subset surjective partial functions [arity 
$n$ $k$-subset surjective functions, subset surjective functions] on 
$D$ will be denoted by $\mkP n$ [resp., $\mkF n$, $\mF n$]; 
furthermore, $\mkPa=\bigcup_{n\ge0}\mkP n$, 
$\mkFa=\bigcup_{n\ge0}\mkF n$, $\mFa=\bigcup_{n\ge0}\mF n$.  
Any partial function is 1-subset surjective, while $|D|$-subset 
surjective partial functions are exactly the surjective partial functions. 
Observe that this definition can 
be strengthened by allowing the sets $A_i$, $i\in [n]$, to have at 
least $k$ elements.

\begin{lemma}\label{lem:relax-subset-monotone}
If an $n$-ary function $f$ is $k$-subset surjective, then for any subsets 
$\vc An\sse D$ with $|A_i|\ge k$, $i\in[n]$, the image $f(A_1\zd A_n)$ 
has cardinality at least $k$.
\end{lemma}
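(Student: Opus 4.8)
The plan is to reduce the statement about arbitrary sets of size at least $k$ to the defining condition about exact $k$-element subsets, using nothing more than the monotonicity of the image under inclusion of the argument sets. The key observation is that the definition of $k$-subset surjectivity quantifies only over exact $k$-element subsets, so the whole content of the lemma is to bridge the gap between ``exactly $k$'' and ``at least $k$''.

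First I would, for each $i\in[n]$, fix a $k$-element subset $A'_i\sse A_i$; this is possible precisely because $|A_i|\ge k$. Next I would observe that $A'_1\tm\cdots\tm A'_n\sse A_1\tm\cdots\tm A_n$, so that every tuple on which $f$ is defined inside the smaller box is also a tuple on which $f$ is defined inside the larger box. Consequently the image is monotone with respect to the argument sets, giving the inclusion $f(A'_1\zd A'_n)\sse f(\vc An)$. Then $k$-subset surjectivity, applied to the exact $k$-element subsets $A'_1\zd A'_n$, yields $|f(A'_1\zd A'_n)|\ge k$, and combining this with the inclusion gives $|f(\vc An)|\ge|f(A'_1\zd A'_n)|\ge k$, as required.

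I do not expect any genuine obstacle here, as the argument is essentially a one-line monotonicity reduction; the only point deserving care is the partial-function case, where $f(\vc An)$ must be read as the set of \emph{defined} values of $f$ on $A_1\tm\cdots\tm A_n$. But since passing to the subsets $A'_i$ only shrinks the set of argument tuples, and hence can only shrink the set of tuples at which $f$ is defined, the inclusion $f(A'_1\zd A'_n)\sse f(\vc An)$ holds automatically in the partial case as well. Thus the same proof covers both total and partial functions uniformly.
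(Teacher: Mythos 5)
Your proof is correct and follows essentially the same route as the paper's: pick $k$-element subsets $A'_i\sse A_i$, apply $k$-subset surjectivity to them, and conclude via the inclusion $f(A'_1\zd A'_n)\sse f(\vc An)$. Your explicit remark about reading $f(\vc An)$ as the set of defined values in the partial case is a small point the paper leaves implicit, but it does not change the argument.
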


\begin{proof}
Choose any $B_i\sse A_i$, 
$i\in[n]$, and set $B=f(B_1\zd B_n)$. As $f$ is 
$k$-subset surjective, $|B|\ge k$. Finally, $B\sse f(A_1\zd A_n)$, and 
the result follows.
\end{proof}

The conditions of being $k$-subset surjective for different $k$ are 
in general incomparable, as the following example shows.

\begin{example}\label{exa:incomparable}\rm
Let $D=\{0\zd k-1\}$ be a $k$-element set and $1<m\le k$. Then the 
following function $f$ is not $m$-subset surjective, but is $\ell$-subset surjective 
for any $\ell\in[k]$ except $\ell=m$. Function $f$ is binary and given by its 
operation table:
$$
\left(\begin{array}{cccccccc}
0&0&\cdots&0&1&m&\cdots&k-1\\
1&1&\cdots&1&2&m&\cdots&k-1\\
\vdots&\vdots& &\vdots&\vdots&\vdots& &\vdots\\
m-3&m-3&\cdots&m-3&m-2&m&\cdots&k-1\\
m-2&m-2&\cdots&m-2&0&m&\cdots&k-1\\
0&1&\cdots&m-2&0&m&\cdots&k-1\\
0&1&\cdots&m-2&m-1&m&\cdots&k-1\\
\vdots&\vdots& &\vdots&\vdots&\vdots& &\vdots\\
0&1&\cdots&m-2&m-1&m&\cdots&k-1\\
\end{array}\right).
$$
Clearly, $f$ is not $m$-subset surjective, because $f(B,B)=\{0\zd m-2\}$ for 
$B=\{0\zd m-1\}$. Also, as it is a total function, $f$ is 1-subset surjective.
Take $\ell\in[k]$, $\ell>1$, and $B_1,B_2\sse\{0\zd k-1\}$ with
$|B_1|=|B_2|=\ell$. If there is $a\in B_1$ with $i\ge m$ then 
$f(a,b_1)\ne f(a,b_2)$ whenever $b_1\ne b_2$. This means that 
$|f(B_1,B_2)|\ge\ell$ 
in this case, and, in particular, $f$ is $\ell$-subset surjective for any $\ell>m$.
So, suppose $\ell< m$ and $B_1\sse\{0\zd m-1\}$
If $B_1\sse\{0\zd m-2\}$ then take $b\in B_2\cap\{0\zd m-2\}$ and observe that
$f(a_1,b)\ne f(a_2,b)$ for any $a_1,a_2\in \{0\zd m-2\}$, $a_1\ne a_2$.
Thus, $|f(B_1,\{b\})|=\ell$. Suppose $m-1\in B_1$.
If $B_2\sse\{0\zd m-2\}$, then $|f(m-1,B_2)|=\ell$; assume $m-1\in B_2$.
As is easily seen, $B_1\cap\{0\zd m-2\}\sse f(B_1,B_2)$. 
There is  $a\in \{0\zd m-2\}$ such that $a\not\in B_1$ but 
$a-1\pmod{m-1}\in B_1$. Then $a\in f(B_1,B_2)$, since $a=f(a-1,m-1)$.
Thus, $|f(B_1,B_2)|\ge\ell$.
\end{example}

The notion of invariance for $k$-subset surjective functions is the 
standard one for partial functions and relations. 
As usual, if $C$ is a set of ($k$-) subset surjective (partial) functions, 
$\Inv(C)$ denotes the set of relations invariant with respect to every 
function from $C$. For a set $\Gm$ of relations, $\mkPol(\Gm)$ and  
$\mkpPol(\Gm)$ denote the set of all $k$-subset surjective functions 
and partial functions, respectively, preserving every relation from $\Gm$.
For a set $K\sse\nat$ by $\mKPol(\Gm)$ and $\mKpPol(\Gm)$ we 
denote the set of all functions and, respectively, partial functions 
preserving every relation from $\Gm$ that are $k$-subset surjective 
for each $k\in K$. Thus, in particular, 
$$
\mKPol(\Gm)=\bigcap_{k\in K}\mkPol(\Gm),\quad\text{and}\quad
\mKpPol(\Gm)=\bigcap_{k\in K}\mkpPol(\Gm).
$$
By $\mPol(\Gm)$ we denote the analogous set of subset surjective functions.

The operator $\Inv$ on one side and the operators $\mkpPol(\Gm)$, 
$\mkPol(\Gm)$, $\mKPol(\Gm)$, $\mpPol(\Gm)$, $\mPol(\Gm)$ on 
the other side form Galois correspondences in the standard fashion. 
We characterize closed sets of relations that give rise from this correspondence.

\begin{lemma}\label{lem:quantifier preservation}
Let $\rel(\vc x\ell,y)$ be a relation on $D$, and let $\relo(\vc x\ell)=
\kex y \rel(\vc x\ell,y)$. Then if a $k$-subset surjective (partial) function 
$f$ preserves $\rel$, it also preserves $\relo$.
\end{lemma}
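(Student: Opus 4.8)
The plan is to fix an $n$-ary $k$-subset surjective (partial) function $f$ that preserves $\rel$, take $n$ tuples $\vc\bc n\in\relo$, and show that the coordinatewise image $\bd=f(\vc\bc n)$ is either undefined or lies in $\relo$. If $\bd$ is undefined there is nothing to prove, so I would assume it is defined. Since $\relo(\vc x\ell)=\kex y\,\rel(\vc x\ell,y)$, membership $\bd\in\relo$ amounts to exhibiting at least $k$ witnesses for $\bd$; that is, to showing that the set $B'=\{e\in D\mid(\bd,e)\in\rel\}$ has at least $k$ elements.

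To manufacture those witnesses I would carry over the witnesses of each $\bc_i$. Put $B_i=\{b\in D\mid(\bc_i,b)\in\rel\}$; because $\bc_i\in\relo$, each $B_i$ has at least $k$ elements. The decisive observation is that $f$ maps these witness sets into $B'$. Indeed, for any choice $(\vc bn)\in B_1\tm\cdots\tm B_n$ the tuples $(\bc_1,b_1)\zd(\bc_n,b_n)$ all belong to $\rel$, so, as $f$ preserves $\rel$, their coordinatewise image $(\bd,f(\vc bn))$ is either undefined or belongs to $\rel$. Whenever $f(\vc bn)$ is defined this image is defined and hence lies in $\rel$, which says precisely that $f(\vc bn)\in B'$. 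In other words $f(B_1\zd B_n)\sse B'$.

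Finally I would invoke $k$-subset surjectivity. Each $B_i$ has size at least $k$, so by Lemma~\ref{lem:relax-subset-monotone} the image $f(B_1\zd B_n)$ contains at least $k$ (defined) elements. Combined with the inclusion $f(B_1\zd B_n)\sse B'$ this yields $|B'|\ge k$, i.e.\ $\bd\in\relo$, and the argument is complete.

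The step I would watch most carefully, and the only real subtlety, is the bookkeeping of definedness in the partial case. Applying $f$ coordinatewise to $(\bc_1,b_1)\zd(\bc_n,b_n)$ produces a tuple whose first $\ell$ coordinates form $\bd$ and whose last coordinate is $f(\vc bn)$; hence this application is defined exactly when both $\bd$ and $f(\vc bn)$ are defined. Since $k$-subset surjectivity counts only the defined values of $f(B_1\zd B_n)$, every witness it supplies is a genuinely defined $f(\vc bn)$, and definedness of $\bd$ was assumed, so preservation of $\rel$ applies and the witness really lands in $B'$. For total $f$ this is automatic, so the same proof covers both cases uniformly.
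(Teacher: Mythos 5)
Your proof is correct and takes essentially the same route as the paper's: collect witness sets $B_i$ of size at least $k$ for each $\bc_i$, note that preservation of $\rel$ forces $f(B_1\zd B_n)$ to consist of witnesses for $f(\vc\bc n)$, and conclude by $k$-subset surjectivity (via Lemma~\ref{lem:relax-subset-monotone}) that there are at least $k$ of them. Your explicit bookkeeping of definedness in the partial case is a detail the paper's terser proof leaves implicit, but it is the same argument.
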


\begin{proof}
Suppose $f$ is $n$-ary. Take $\vc\ba n\in\relo$. Since each of them 
is put into $\relo$ by $k$-existential quantification, it has at least 
$k$ extensions to a tuple from $\rel$. Let $\vc Bn\sse D$ be such 
that $|B_i|\ge k$ and $(\ba_i,b)\in\rel$ for $b\in B_i$ and $i\in[n]$. 
Let also $\bb=f(\vc\ba n)$. For any $b\in B=f(B_1\zd B_n)$ the tuple 
$(\bb,b)$ belongs to $\rel$. As $f$ is $k$-subset surjective, $|B|\ge k$, 
hence, $\bb\in\relo$.
\end{proof}
 
\begin{theorem}\label{the:rel-clones}
Let $\Gm$ be a set of relations on a set $D$ and $K\sse\nat$. Then
$\Inv(\mKpPol(\Gm))=\ang\Gm_K$.
\end{theorem}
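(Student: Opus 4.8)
The statement is a Galois correspondence, so the plan is to prove the two inclusions $\ang\Gm_K\sse\Inv(\mKpPol(\Gm))$ and $\Inv(\mKpPol(\Gm))\sse\ang\Gm_K$ separately, following the pattern of the Fleischner--Rosenberg theorem \cite{Fleischer78:Galois}, which is the special case $K=\eps$. For the first inclusion it suffices to verify that $\Inv(\mKpPol(\Gm))$ is itself a $K$-existential partial co-clone containing $\Gm$; minimality of $\ang\Gm_K$ then gives the inclusion. It contains $\Gm$ because every function in $\mKpPol(\Gm)$ preserves $\Gm$, and it contains $=_D$ because every partial function preserves equality. Closure under manipulation with variables and conjunction holds for $\Inv(C)$ of any set $C$ of partial functions, as these operations act componentwise. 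The only substantial closure property, closure under $k$-existential quantification for $k\in K$, is exactly Lemma~\ref{lem:quantifier preservation}: each $f\in\mKpPol(\Gm)$ is $k$-subset surjective for every $k\in K$, hence preserves $\kex y\,\rel$ whenever it preserves $\rel$.

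For the converse inclusion I argue by contraposition: given an $m$-ary $\rel\notin\ang\Gm_K$, I produce $f\in\mKpPol(\Gm)$ that does not preserve $\rel$. Since $\ang\Gm_K$ is closed under conjunction and $D$ is finite, let $\rel^\sharp$ be the least relation of $\ang\Gm_K$ containing $\rel$; as $\rel\notin\ang\Gm_K$ we have $\rel\subsetneq\rel^\sharp$, so pick $\bb\in\rel^\sharp\setminus\rel$. Writing $\rel=\{\vc\ba t\}$, form the columns $\bc_i=(\ba_1[i]\zd\ba_t[i])\in D^t$ and define the canonical $t$-ary partial function $f_0$ on $\{\vc\bc m\}$ by $f_0(\bc_i)=\bb[i]$. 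This is well defined: if $\bc_i=\bc_j$ then $x_i=x_j$ holds on all of $\rel$, hence is a conjunct of the canonical conjunctive formula of $\rel$ and so is satisfied by $\bb$. Let $\rel^\flat$ be the least relation of the ordinary partial co-clone $\ang\Gm$ containing $\rel$, i.e.\ the relation defined by that canonical formula. Since $\ang\Gm\sse\ang\Gm_K$ we get $\rel^\sharp\sse\rel^\flat$, so $\bb\in\rel^\flat$, and the canonical-partial-function argument of \cite{Fleischer78:Galois} then shows $f_0$ preserves $\Gm$. By construction $f_0(\ba_1\zd\ba_t)=\bb\notin\rel$, so $f_0$ already violates $\rel$.

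The remaining task is to upgrade $f_0$ to a function that is $k$-subset surjective for every $k\in K$ without losing either property. The plan is to extend $f_0$ to a partial function $\hat f$ on a larger domain inside $D^t$ that still preserves $\Gm$ and, for each $k\in K$ and each tuple of $k$-element sets $A_1\zd A_t\sse D$, is defined on enough columns $\bw$ with $\bw[r]\in A_r$ to realize at least $k$ distinct values; Lemma~\ref{lem:relax-subset-monotone} then upgrades this to the full subset-surjectivity requirement. Such an $\hat f$ lies in $\mKpPol(\Gm)$ and still maps the columns of $\rel$ to $\bb\notin\rel$, contradicting $\rel\in\Inv(\mKpPol(\Gm))$ and completing the argument.

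The heart of the proof, and the step I expect to be hardest, is establishing that this extension exists precisely because $\bb$ lies in $\rel^\sharp$ rather than merely in $\rel^\flat$. This is where the $k$-existential quantifiers are consumed: membership $\bb\in\rel^\sharp$ means $\bb$ satisfies every relation obtainable from $\rel$ by conjunction and $k$-existential quantification with $k\in K$, and dually this should guarantee that whenever subset-surjectivity demands $k$ distinct images over some box $A_1\zd A_t$, the required witness columns can be assigned values consistently with $\Gm$ and with $f_0$. I would organize this as an obstruction argument: if no consistent extension realizing the missing $k$ values existed, the obstruction would itself be expressible as a conjunctive formula guarded by a $k$-existential quantifier, hence a relation of $\ang\Gm_K$ separating $\bb$ from $\rel^\sharp$, contradicting $\bb\in\rel^\sharp$. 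Matching the quantifier thresholds $k\in K$ to the cardinality thresholds in the definition of $k$-subset surjectivity, and carrying out the several required domain extensions simultaneously while keeping $\hat f$ a single partial polymorphism of $\Gm$, is the main technical obstacle; once it is cleared, the two inclusions combine to give $\Inv(\mKpPol(\Gm))=\ang\Gm_K$.
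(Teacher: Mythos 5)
Your first inclusion is correct and is exactly the paper's argument: $\Inv(\mKpPol(\Gm))$ contains $\Gm$ and $=_D$, is closed under manipulation of variables and conjunction because any set of the form $\Inv(C)$ is, and closure under $\kex$ for $k\in K$ is precisely Lemma~\ref{lem:quantifier preservation}. Your setup for the converse is also sound as far as it goes: passing to the least relation $\rel^\sharp\in\ang\Gm_K$ containing $\rel$, picking $\bb\in\rel^\sharp\setminus\rel$, and using the Fleischner--Rosenberg canonical-function argument to see that the partial function $f_0$ (columns of $\rel$ mapped to the entries of $\bb$) preserves $\Gm$ and violates $\rel$ is a legitimate contrapositive recasting of what the paper does.

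The genuine gap is the step you yourself flag as the hardest: upgrading $f_0$ to a member of $\mKpPol(\Gm)$. This is not a loose end to be tidied up --- it is essentially the entire content of the paper's proof, and your ``obstruction argument'' sketch does not supply it. Concretely, the paper makes this step precise by (i) building a blow-up relation $\relo$ whose extra coordinates enumerate, for every $k\in K$, \emph{all} sequences $(\vc Bt)$ of $k$-element subsets of $D$, with $k$ coordinates per sequence, attaching to each $\ba_j\in\rel$ the full box of gadget values; (ii) proving that in the least relation $\rela\in\ang\Gm_K$ containing $\relo$ every tuple sits inside a full combinatorial box in the gadget coordinates --- this is exactly where the $k$-existential quantifiers are ``consumed,'' since otherwise quantifying them away would produce a smaller member of $\ang\Gm_K$ containing $\relo$; and (iii) invoking the proof of Theorem~2 of \cite{Fleischer78:Galois} to write a box-maximal tuple as $f(\vc\bc n)$ with $\vc\bc n\in\relo$ and $f\in\PPol(\ang\Gm_K)$, whereupon the gadget columns force $f$ to be $k$-subset surjective for every $k\in K$. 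Your sketch contains none of (i)--(iii). Moreover, your plan insists that the upgraded function $\hat f$ keep arity $t$ and extend $f_0$, whereas the paper's construction produces a function of far larger arity, defined on the columns of the blow-up, whose rows vastly outnumber the $t$ tuples of $\rel$; this difference is not cosmetic. With only $t$ rows, witness columns for different boxes are forced to coincide as points of $D^t$ and the $\Gm$-preservation constraints on them become entangled, while the gadget coordinates are precisely the device that keeps them independent. So as written, the proposal reduces the theorem to an unproved extension lemma whose natural proof is the paper's construction itself, and whose fixed-arity form is not justified and may not even hold.
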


\begin{proof}
We will assume that $K=\{\vc ks\}\sse\{1\zd|D|\}$. Indeed, if $k\ge|D|$ then 
$\kex x\rel$ is empty for any relation on $D$. The equality relation, 
$=_D$, is invariant with respect to any 
partial function on $D$. Let $f$ be a $k$-subset surjective 
functions. It is straightforward to verify that manipulations of 
variables of a predicate invariant under $f$ and the conjunction of 
any two predicates invariant under $f$ result in predicates 
invariant under $f$, again, since it is true for any partial function. 
By Lemma~\ref{lem:quantifier preservation}
applying $k$-quantification to a predicate invariant under $f$ 
gives a predicate invariant under $f$, again because it is true
for any partial function. Hence, 
$\ang\Gm_K\sse\Inv\left(\mKpPol(\Gm)\right)$. Moreover, it 
follows that $\Inv\left(\mKpPol(\Gm)\right)=
\Inv\left(\mKpPol(\ang\Gm_K)\right)$.

To establish the reverse inclusion, take an $\ell$-ary relation 
$\rel\in\Inv(\mKpPol(\Gm))$. We need to show that $\rel\in\ang\Gm_k$. 
Define a relation $\relo$ as 
follows. Let $\rel=\{\vc\ba t\}$. For each $k\in K$ we consider sequences $(\vc Bt)$ 
of $k$-element subsets of $D$. Let also $(B^{k1}_1\zd B^{k1}_t)\zd 
(B^{kr_k}_1\zd B^{kr_k}_t)$ be a list of all such sequences. Let $\rela^j_k$
be the relation 
$$
\underbrace{B^{k1}_j\tm\ldots\tm B^{k1}_j}_{k\text{ times}}
\tm\ldots\tm \underbrace{B^{kr_k}_j\tm\ldots\tm B^{kr_k}_j}_{k\text{ times}},
$$ 
and $\rela^j=\rela^j_{k_1}\tm\ldots\tm\rela^j_{k_s}$.  Then $\relo$ 
is the union of relations given by
$
\ba_j\tm\rela^j,
$
for all $j\in[t]$. We show that there is $\rela\in\ang\Gm_k$ such that 
$\relo\sse\rela$ and $\pr_{[\ell]}\rela=\rel$. Then applying 
$k$-quantifications, $k\in K$, to all coordinates of $\rela$ except for the first 
$\ell$ we infer that $\rel\in\ang\Gm_K$. 

Set $M=\sum_{k\in K}kr_k$ and $M_j=\sum_{i=1}^j k_ir_{k_i}$; 
by $N_K$, $k\in K$, we denote the set $\{M_j+1\zd M_{j+1}\}$ . 
Let us consider the relation $\rela=\bigcap\{\relo'\in\ang\Gm_K\mid 
\relo\sse\relo'\}$. Since $\ang\Gm_K$ is closed under conjunctions 
and contains the total relation $D^{\ell+M}$, we have 
$\rela\in\ang\Gm_K$ and $\relo\sse\rela$. 

Now choose any tuple $\bb=(b_1\zd b_\ell,d_1\zd d_{M})\in\rela$. 
There are sets $\vc C{M}$ such that $|C_i|=k_j$, $i\in[M]$, whenever
$i\in N_j$, for any $t\in[r_j]$, $C_{M_{j-1}+k_j(t-1)+1}=\ldots=
C_{M_{j-1}+k_jt}$, $d_i\in C_i$, and for any 
$d'_i\in C_i$, $i\in[M]$, the tuple 
$(b_1\zd b_\ell,d'_1\zd d'_{M})\in\rela$. Indeed, otherwise we can 
applying a sequence of $k$-quantifications for $k\in K$ to obtain an 
$\ell$-ary relation $\rela'$ containing $\rel$, but not $(\vc b\ell)$. 
Then , $(\rela'\tm D^{\ell+M})\cap\relo$ belongs to $\ang\Gm_K$,
but is smaller than $\relo$. Therefore we can choose $\bb$ such 
that for any $j\in [s]$ and any $t\in [r_j]$ all the values 
$d_{M_{j-1}+k_j(t-1)+1}\zd d_{M_{j-1}+k_jt}$ are 
distinct, and $\{d_{M_{j-1}+k_j(t-1)+1}\zd d_{M_{j-1}+k_jt}\}=C_{M_{j-1}+k_jt}$.

Since $\ang\Gm_K$ is closed under conjunctions, by the Fleischer 
and Rosenberg result \cite{Fleischer78:Galois} it satisfies  
$\ang\Gm_K=\Inv(\PPol(\ang\Gm_K))$. Moreover, by the proof of 
Theorem~2 of \cite{Fleischer78:Galois} $\rela$ is the set of all 
tuples of the form $f(\vc\bc n)$ for $n\ge1$, $\vc\bc n\in\relo$, 
and $f\in\PPol(\ang\Gm_K)$. Therefore there exist $n\ge1$, 
$\vc\bc n\in\relo$ and $f\in\PPol(\ang\Gm_K)$ such that 
$\bb=f(\vc\bc n)$. Let $\pr_{[\ell]}\bc_q=\ba_{i_q}$. For any 
selection $\vc En$ of $k_j$-element subsets of $D$, $j\in [s]$, there is $t\in [r_{k_j}]$ 
such that $E_q=B^{k_jt}_{i_q}$ for $q\in[n]$. By the choice of $\bb$ 
the range of $f$ on $E_1\tm\ldots\tm E_n=B^{k_jt}_{i_1}\tm\ldots\tm 
B^{k_jt}_{i_n}$ contains $C_{M_{j-1}+k_jt}$. Hence $f$ is $k_j$-subset surjective
for any $k_j\in K$, and so $f\in \mKpPol(\Gm)$, as it is equal to 
$\mKpPol(\ang\Gm_k)$. Therefore $\rel$ is invariant under 
$f$, and so $(b_1\zd b_\ell)\in\rel$. Relation $\rela$ satisfies the 
required conditions, which completes the proof.
\end{proof}

\begin{corollary}\label{cor:Galois}
There is a Galois correspondence between $K$-existential partial
co-clones on one side and partial clones generated by $K$-surjective
partial functions on the other side.

More precisely, for any set $\Gm$ of relations on $D$, any 
$K\sse\{1\zd|D|\}$, and any set $C$ of $K$-surjective partial functions
on $D$, 
\begin{itemize}
\item
$\Inv(C)$ is a $K$-existential partial co-clone;
\item
$\PPol(\Gm)$ is a partial co-clone generated by the set
$\mKpPol(\ang\Gm_K)$ of $K$-surjective partial functions;
\item
$\Inv(\mKpPol(\Gm))=\ang\Gm_K$;
\item
$\mKpPol(\Inv(C))$ is the set of $K$-surjective functions from the 
partial clone generated by $C$.
\end{itemize}
\end{corollary}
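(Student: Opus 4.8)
The third item is precisely the content of Theorem~\ref{the:rel-clones}, so there I would simply cite it. For the remaining three items the plan is to reduce everything to two facts already in hand: Lemma~\ref{lem:quantifier preservation}, which transports invariance through $k$-existential quantification for $k$-subset surjective partial functions, and the Fleischner--Rosenberg correspondence \cite{Fleischer78:Galois}, which identifies the partial clone generated by a set $S$ of partial functions with $\PPol(\Inv(S))$. Before anything else I would record the identity
$$
\mKpPol(\Gm)=\mKpPol(\ang\Gm_K),
$$
which is used repeatedly and is already implicit in the proof of Theorem~\ref{the:rel-clones}. Its proof is short: the inclusion $\supseteq$ is trivial since $\Gm\sse\ang\Gm_K$, while for $\sse$ a $K$-surjective $f\in\mKpPol(\Gm)$ preserves $=_D$, variable manipulations and conjunctions (true for every partial function) and, by Lemma~\ref{lem:quantifier preservation}, each quantifier $\kex$ with $k\in K$, hence preserves all of $\ang\Gm_K$.

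For the first item, let $C$ consist of $K$-surjective partial functions. For any set of partial functions $\Inv(C)$ contains $=_D$ and is closed under variable manipulations and conjunction, so it is already a partial co-clone. The only point to check is closure under $\kex$ for each $k\in K$: given $\rel(\vc x\ell,y)\in\Inv(C)$ and $k\in K$, every $f\in C$ is $k$-subset surjective and preserves $\rel$, so by Lemma~\ref{lem:quantifier preservation} it preserves $\kex y\,\rel$; as this holds for all $f\in C$, we get $\kex y\,\rel\in\Inv(C)$. Thus $\Inv(C)$ is a $K$-existential partial co-clone.

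The fourth item I would read off directly from the definitions and Fleischner--Rosenberg. By definition $\mKpPol(X)=\bigl(\bigcap_{k\in K}\mkPa\bigr)\cap\PPol(X)$ for any set of relations $X$, i.e.\ the $K$-surjective members of $\PPol(X)$. Taking $X=\Inv(C)$ and using that $\PPol(\Inv(C))$ is exactly the partial clone generated by $C$, I conclude that $\mKpPol(\Inv(C))$ is the set of $K$-surjective functions lying in the partial clone generated by $C$. The second item is the dual computation: the partial clone generated by $\mKpPol(\ang\Gm_K)$ equals $\PPol\bigl(\Inv(\mKpPol(\ang\Gm_K))\bigr)$ by Fleischner--Rosenberg, and $\Inv(\mKpPol(\ang\Gm_K))=\Inv(\mKpPol(\Gm))=\ang\Gm_K$ by the identity above together with Theorem~\ref{the:rel-clones}; hence this partial clone is $\PPol(\ang\Gm_K)$.

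I expect the only genuinely delicate point to be the matching of the two closure operators in the second item, and in particular being careful that the object generated by the $K$-surjective polymorphisms is $\PPol(\ang\Gm_K)$, the partial polymorphisms of the $K$-existential closure, rather than the a priori larger clone $\PPol(\Gm)$. The bridge between the two is exactly the identity $\mKpPol(\Gm)=\mKpPol(\ang\Gm_K)$, so establishing that identity cleanly is where I would concentrate the effort; the hard combinatorial core ($k$-surjectivity forcing membership in $\ang\Gm_K$) is already sealed inside Theorem~\ref{the:rel-clones}, and everything else is a formal consequence of the Galois machinery.
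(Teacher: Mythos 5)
Your proof is correct and is exactly the derivation the paper intends: the corollary appears without proof as a formal consequence of Theorem~\ref{the:rel-clones}, Lemma~\ref{lem:quantifier preservation}, and the Fleischner--Rosenberg correspondence, and your bridging identity $\mKpPol(\Gm)=\mKpPol(\ang\Gm_K)$ is the same one the paper invokes inside the proof of Theorem~\ref{the:rel-clones}. The one substantive remark concerns the second item: your computation shows that the partial clone generated by $\mKpPol(\ang\Gm_K)$ is $\PPol(\ang\Gm_K)$, whereas the corollary as printed asserts it is $\PPol(\Gm)$. Your version is the correct one and the printed one is a slip: the two clones coincide exactly when $\ang\Gm=\ang\Gm_K$, and they can differ --- in Example~\ref{exa:incomparability}, for $k\ge2$, the set $\ang{\rel_m}_k$ properly contains $\ang{\rel_m}=\dang{\rel_m}$, so by the Galois correspondence $\PPol(\ang{\rel_m}_k)\subsetneq\PPol(\rel_m)$. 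Thus the ``delicate point'' you flag at the end is not merely a caution but an actual correction to the statement (note also that the statement's phrase ``partial co-clone generated by'' should read ``partial clone generated by,'' since $\PPol(\Gm)$ is a set of functions).
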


\begin{corollary}\label{cor:rel-clones}
Let $\Gm$ be a set of relations on a set $D$. 
\begin{itemize}
\item[(a)]
$\Inv(\mkpPol(\Gm))=\ang\Gm_k$;
\item[(b)]
$\Inv(\mkPol(\Gm))=\dang\Gm_k$;
\item[(c)]
$\Inv(\mPol(\Gm))=\dang\Gm_\infty$;
\end{itemize}
\end{corollary}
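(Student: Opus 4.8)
The plan is to derive all three identities from Theorem~\ref{the:rel-clones}, which evaluates $\Inv(\mKpPol(\Gm))$ for an arbitrary index set $K\sse\nat$; the task is then simply to recognize each of the three function classes as $\mKpPol(\Gm)$ for a suitable $K$. Part (a) is immediate: taking $K=\{k\}$ we have $\mKpPol(\Gm)=\mkpPol(\Gm)$ and $\ang\Gm_K=\ang\Gm_k$, so Theorem~\ref{the:rel-clones} gives $\Inv(\mkpPol(\Gm))=\ang\Gm_k$ with no further work.

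For (b) and (c) the guiding principle is that replacing partial polymorphisms by total ones corresponds, on the relational side, to adjoining ordinary existential quantification, i.e.\ to inserting $1$ into $K$. The bridging observation I would record first is that a partial function is $1$-subset surjective if and only if it is total: for a one-element grid $A_1\tm\ldots\tm A_n$ the image $f(A_1\zd A_n)$ is non-empty exactly when $f$ is defined on the unique tuple it contains, so $1$-subset surjectivity over all such grids is the same as being everywhere defined. Hence ${\rm m}(1){\sf -pPol}(\Gm)=\Pol(\Gm)$, the total polymorphisms of $\Gm$, and therefore
$$
\mKpPol(\Gm)={\rm m}(1){\sf -pPol}(\Gm)\cap\mkpPol(\Gm)=\Pol(\Gm)\cap\mkpPol(\Gm)=\mkPol(\Gm)\quad\text{for }K=\{1,k\}.
$$
Part (b) then follows from Theorem~\ref{the:rel-clones} applied to $K=\{1,k\}$, since $\ang\Gm_{\{1,k\}}=\dang\Gm_k$: once $1\in K$ the $K$-existential partial co-clone is already closed under existential quantification, so it coincides with the $k$-existential co-clone.

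Part (c) is the same argument with $K=\{1\zd|D|\}$. Here $\mKpPol(\Gm)$ consists of the partial functions preserving $\Gm$ that are $j$-subset surjective for every $j\in[|D|]$; the clause $j=1$ again forces totality, so this class is exactly the set $\mPol(\Gm)$ of subset surjective (total) polymorphisms of $\Gm$. Theorem~\ref{the:rel-clones} yields $\Inv(\mPol(\Gm))=\ang\Gm_{\{1\zd|D|\}}$, and it remains to identify the right-hand side with the counting co-clone $\dang\Gm_\infty$. This is precisely the earlier remark that on an $m$-element set one may take $K\sse[m]$: a $j$-existential quantifier with $j>|D|$ collapses every relation to an empty relation and so adjoins nothing, whence $\dang\Gm_\infty=\ang\Gm_{\{1\zd|D|\}}$.

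Given Theorem~\ref{the:rel-clones}, there is no deep obstacle left in the corollary itself; the content is entirely in correctly matching the total-function classes of (b) and (c) to the partial-function classes $\mKpPol(\Gm)$, which rests on the identification of $1$-subset surjectivity with totality. The one step I would double-check is the last one, $\dang\Gm_\infty=\ang\Gm_{\{1\zd|D|\}}$ in (c): one must verify that discarding the super-$|D|$ quantifiers genuinely leaves the generated co-clone unchanged — equivalently, that the empty relations those quantifiers produce are already available — so that the restriction from $\nat$ to $[|D|]$ is harmless.
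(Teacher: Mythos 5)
Your proposal is correct and is essentially the paper's own argument: Corollary~\ref{cor:rel-clones} is stated without a separate proof precisely because it is Theorem~\ref{the:rel-clones} instantiated at $K=\{k\}$, $K=\{1,k\}$, and $K=\{1\zd|D|\}$, together with the identifications $\ang\Gm_{\{1,k\}}=\dang\Gm_k$ and $\ang\Gm_{\{1\zd|D|\}}=\dang\Gm_\infty$ that you spell out.

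One remark is worth recording. Your bridging step --- a partial function is $1$-subset surjective if and only if it is total, so that ${\rm m}(\{1,k\}){\sf -pPol}(\Gm)=\Pol(\Gm)\cap\mkpPol(\Gm)=\mkPol(\Gm)$ and ${\rm m}(\{1\zd|D|\}){\sf -pPol}(\Gm)=\mPol(\Gm)$ --- contradicts the paper's own sentence ``Any partial function is 1-subset surjective'' stated right after the definition of subset surjectivity. It is your reading, not that sentence, which is correct: it is what the literal definition gives (for singleton sets $A_i$ the image $f(A_1\zd A_n)$ is nonempty exactly when $f$ is defined on the unique tuple of $A_1\tm\ldots\tm A_n$), and it is what the corollary requires, since if every partial function were $1$-subset surjective then Theorem~\ref{the:rel-clones} applied at $K=\{k\}$ and at $K=\{1,k\}$ would force $\ang\Gm_k=\dang\Gm_k$ for every $\Gm$, collapsing parts (a) and (b); the paper's sentence is also incompatible with how $k$-subset surjectivity is used in the proof of Lemma~\ref{lem:quantifier preservation}, so it must be regarded as a slip. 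Finally, your caveat about $\dang\Gm_\infty=\ang\Gm_{\{1\zd|D|\}}$ in (c) corresponds to the reduction the paper makes at the start of the proof of Theorem~\ref{the:rel-clones} (one may assume $K\sse\{1\zd|D|\}$), and the empty relations you worry about are indeed available: for instance $\ex_{|D|}y\,(x=_Dy)$ is empty once $|D|\ge2$, so discarding the quantifiers with index above $|D|$ does not change the generated co-clone.
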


\section{The lattice of Boolean max-co-clones}

In this section we give a description of all max-co-clones on $\{0,1\}$. 
We will use the description of usual Boolean co-clones from 
\cite{Post41} and \emph{plain bases} of Boolean co-clones found 
in \cite{Creignou08:plain}. Recall that plain basis of a co-clone 
$C$ is a set $\Gm$ of relations such that the closure of $\Gm$ 
with respect to manipulation of variables and conjunction is $C$. 

To state the results of \cite{Creignou08:plain} and then to proceed with the proof, we need 
some definitions and notation. A relation $\rel(\vc xn)$ is said to be \emph{trivial} if it can 
be specified by giving a set of variables that are equal to 0 (to 1) in every tuple from $\rel$, and 
a collection of conditions of the form $x_i=x_j$. More formally, there are sets $Z,W\sse[n]$ and an 
equivalence relation $\sim$ on $[n]-(Z\cup W)$ such that $\ba\in\rel$ if and only if $\ba[i]=0$ 
whenever $i\in Z$, $\ba[i]=1$ whenever $i\in W$, and $\ba[i]=\ba[j]$ whenever $i\sim j$. 
A relation is called \emph{monotone} if it is invariant with respect to $\join$, the Boolean 
disjunction operation, or $\meet$, the Boolean conjunction operation. Relation $\rel$ is called 
\emph{self-complement} if along with any tuple $\ba\in\rel$ it also contains its \emph{complement},
the tuple $\neg\ba$ such that $\neg\ba[i]=1$ if and only if $\ba[i]=0$. Finally, relation $\rel$ 
is called \emph{affine} if it is the set of solutions to a system of linear equations over $GF(2)$. 
Addition in $GF(2)$ we denote by $\oplus$.

For $I\sse[n]$ we denote by $\ba_I$ the assignment to $\vc xn$ in which $\ba[i]=1$ 
if $i\in I$ and $\ba[i]=0$ otherwise. We will use the following notation: 
$\dl_0,\dl_1$ denote the unary \emph{constant} relations 
$\{(0)\},\{(1)\}$, respectively. $\EQ$ is the binary \emph{equality} relation $\{(0,0),(1,1)\}$;
while $\NEQ$ is the binary \emph{disequality} relation $\{(0,1),(1,0)\}$. $\IMP^k(\vc xk,y)$ 
is the Horn $(k+1)$-ary relation given by the formula $\neg x_1\join\ldots\join\neg x_k\join y$, 
that is, $\ba\in\rel$ if and only if $(\ba[1]\zd\ba[k],\ba[k+1])$ satisfies the formula. By 
$\NIMP^k$ we denote the anti-Horn relation given by the formula $x_1\join\ldots\join x_k\join\neg y$. 
$\OR^k$ denotes the relation $\{0,1\}^k-\{(0\zd0)\}$, and $\NAND^k$ denotes the relation
$\{0,1\}^k-\{(1\zd1)\}$. Finally, $\Compl_{k,\ell}$ is the $(k+\ell)$-ary relation 
$\{0,1\}^{k+\ell}-\{(0\zd0,1\zd1),(1\zd1,0\zd0)\}$, where the first of the two excluded tuples 
contains $k$ zeros and $\ell$ ones, while the second contains $k$ ones and $\ell$ zeros.

Fig.~\ref{fig:post} shows the lattice of Boolean co-clones (borrowed from \cite{Creignou08:plain}), 
and Table~\ref{tab:plain-post} lists plain bases of Boolean co-clones. Table~\ref{tab:plain-post} 
is also taken from \cite{Creignou08:plain} only with notation changed to match the one used
here.

\begin{table}
\begin{tabular}{|l|l|}
\hline 
Co-clone & Plain basis\\
\hline 
$IBF$ &  $\{\EQ \}$ \\
$IR_0$ & $\{\EQ,\dl_0 \}$ \\
$IR_1$ & $\{\EQ,\dl_1 \}$ \\
$IR_2$ &  $\{\EQ,\dl_0,\dl_1 \}$ \\
$IM$ & $\{\IMP\}$ \\
$IM_0$ &  $\{\IMP,\dl_0\}$ \\
$IM_1$ &  $\{\IMP,\dl_1\}$ \\
$IM_2$ &  $\{\IMP,\dl_0,\dl_1\}$ \\
$IS^k_0$ & $\{\EQ\}\cup\{\OR^\ell\mid \ell\le k\}$ \\
$IS_0$ & $\{\EQ\}\cup\{\OR^\ell\mid \ell\in\nat\}$ \\
$IS^k_1$ & $\{\EQ\}\cup\{\NAND^\ell\mid \ell\le k\}$ \\
$IS_1$ & $\{\EQ\}\cup\{\NAND^\ell\mid \ell\in\nat\}$ \\
$IS^k_{02}$ & $\{\EQ,\dl_0\}\cup\{\OR^\ell\mid \ell\le k\}$ \\
$IS_{02}$ &  $\{\EQ,\dl_0\}\cup\{\OR^\ell\mid \ell\in\nat\}$ \\
$IS^k_{12}$ & $\{\EQ,\dl_1\}\cup\{\NAND^\ell\mid \ell\le k\}$ \\
$IS_{12}$ &  $\{\EQ,\dl_1\}\cup\{\NAND^\ell\mid \ell\in\nat\}$ \\
$IS^k_{01}$ & $\{\IMP\}\cup\{\OR^\ell\mid \ell\le k\}$ \\
$IS_{01}$ &  $\{\IMP\}\cup\{\OR^\ell\mid \ell\in\nat\}$ \\
$IS^k_{11}$ &  $\{\IMP\}\cup\{\NAND^\ell\mid \ell\le k\}$ \\
$IS_{11}$ & $\{\IMP\}\cup\{\NAND^\ell\mid \ell\in\nat\}$ \\
$IS^k_{00}$ & $\{\IMP,\dl_0\}\cup\{\OR^\ell\mid \ell\le k\}$ \\
$IS_{00}$ &  $\{\IMP,\dl_0\}\cup\{\OR^\ell\mid \ell\in\nat\}$ \\
$IS^k_{10}$ & $\{\IMP,\dl_1\}\cup\{\NAND^\ell\mid \ell\le k\}$ \\
$IS_{10}$ & $\{\IMP,\dl_1\}\cup\{\NAND^\ell\mid \ell\in\nat\}$ \\
$ID$ & $\{\EQ,\NEQ\}$ \\
$ID_1$ &  $\{\EQ,\NEQ,\dl_0,\dl_1\}$ \\
$ID_2$ &  $\{\dl_0,\dl_1,\OR,\IMP,\NAND\}$ \\
$IL$ & $\{x_1\oplus\ldots\oplus x_k=0\mid k\text{ even}\}$ \\
$IL_0$ & $\{x_1\oplus\ldots\oplus x_k=0\mid k\in\nat\}$ \\
$IL_1$ & $\{x_1\oplus\ldots\oplus x_k=c\mid k\in\nat, k\equiv c\pmod2, c\in\{0,1\}\}$ \\
$IL_2$ &  $\{x_1\oplus\ldots\oplus x_k=c\mid k\in\nat, c\in\{0,1\}\}$ \\
$IL_3$ &  $\{x_1\oplus\ldots\oplus x_k=c\mid k\text{ even}, c\in\{0,1\}\}$ \\
$IV$ &  $\{\IMP^k\mid k\ge1\}$ \\
$IV_0$ & $\{\IMP^k\mid k\ge1\}\cup\{\dl_0\}$ \\
$IV_1$ & $\{\OR^k\mid k\in\nat\}\cup\{\IMP^k\mid k\ge1\}$ \\
$IV_2$ & $\{\OR^k\mid k\in\nat\}\cup\{\IMP^k\mid k\ge1\}\cup\{\dl_0\}$ \\
$IE$ &  $\{\NIMP^k\mid k\ge1\}$ \\
$IE_0$ & $\{\NAND^k\mid k\in\nat\}\cup\{\NIMP^k\mid k\ge1\}$ \\
$IE_1$ &  $\{\NIMP^k\mid k\ge1\}\cup\{\dl_1\}$ \\
$IE_2$ &   $\{\NAND^k\mid k\in\nat\}\cup\{\NIMP^k\mid k\ge1\}\cup\{\dl_1\}$ \\
$IN$ &  $\{\Compl_{k,\ell}\mid k,\ell\ge1\}$ \\
$IN_2$ &  $\{\Compl_{k,\ell}\mid k,\ell\in\nat\}$ \\
$II$ & $\{x_1\join\ldots\join x_k\join\neg y_1\join\ldots\join\neg x_\ell\mid k,\ell\ge1\}$ \\
$II_0$ &  $\{x_1\join\ldots\join x_k\join\neg y_1\join\ldots\join\neg x_\ell\mid k,\ell\ge1\}\cup\{\dl_0\}$ \\
$II_1$ &  $\{x_1\join\ldots\join x_k\join\neg y_1\join\ldots\join\neg x_\ell\mid k,\ell\ge1\}\cup\{\dl_1\}$ \\
$II_2$ &  $\{x_1\join\ldots\join x_k\join\neg y_1\join\ldots\join\neg x_\ell\mid k,\ell\ge1\}\cup\{\dl_0,\dl_1\}$ \\
\hline
\end{tabular}
\caption{Plain bases of Boolean co-clones}
\label{tab:plain-post}
\end{table}

\begin{table}
\begin{tabular}{|l|l|}
\hline 
Max-co-clone & Max-basis\\
\hline 
$IBF$ &  $\{\EQ \}$ \\
$IR_0$ & $\{\EQ,\dl_0 \}$ \\
$IR_1$ & $\{\EQ,\dl_1 \}$ \\
$IR_2$ &  $\{\EQ,\dl_0,\dl_1 \}$ \\
$IM_2$ & $\{\IMP\}$ \\
$IS^k_0$ & $\{\EQ\}\cup\{\OR^k\}$ \\
$IS_0$ & $\{\EQ\}\cup\{\OR^\ell\mid \ell\in\nat\}$ \\
$IS^k_1$ & $\{\EQ\}\cup\{\NAND^k\}$ \\
$IS_1$ & $\{\EQ\}\cup\{\NAND^\ell\mid \ell\in\nat\}$ \\
$IS^k_{02}$ & $\{\EQ,\dl_0,\OR^k\}$ \\
$IS_{02}$ &  $\{\EQ,\dl_0\}\cup\{\OR^\ell\mid \ell\in\nat\}$ \\
$IS^k_{12}$ & $\{\EQ,\dl_1\}\cup\{\NAND^\ell\mid \ell\le k\}$ \\
$IS_{12}$ &  $\{\EQ,\dl_1\}\cup\{\NAND^\ell\mid \ell\in\nat\}$ \\
$ID$ & $\{\EQ,\NEQ\}$ \\
$ID_1$ &  $\{\EQ,\NEQ,\dl_0,\dl_1\}$ \\
$IL$ & $\{x_1\oplus\ldots\oplus x_k=0\mid k\text{ even}\}$ \\
$IL_0$ & $\{x_1\oplus\ldots\oplus x_k=0\mid k\in\nat\}$ \\
$IL_1$ & $\{x_1\oplus\ldots\oplus x_k=c\mid k\in\nat, k\equiv c\pmod2, c\in\{0,1\}\}$ \\
$IL_2$ &  $\{x_1\oplus\ldots\oplus x_k=c\mid k\in\nat, c\in\{0,1\}\}$ \\
$IL_3$ &  $\{x_1\oplus\ldots\oplus x_k=c\mid k\text{ even}, c\in\{0,1\}\}$ \\
$IN_2$ &  $\{\Compl_{3,0}\}$ \\
$II_2$ &  $\{\IMP,\OR\}$ \\
\hline
\end{tabular}
\caption{Max-bases of Boolean max-co-clones}
\label{tab:max-post}
\end{table}

\begin{figure}[ht]
\centerline{\includegraphics{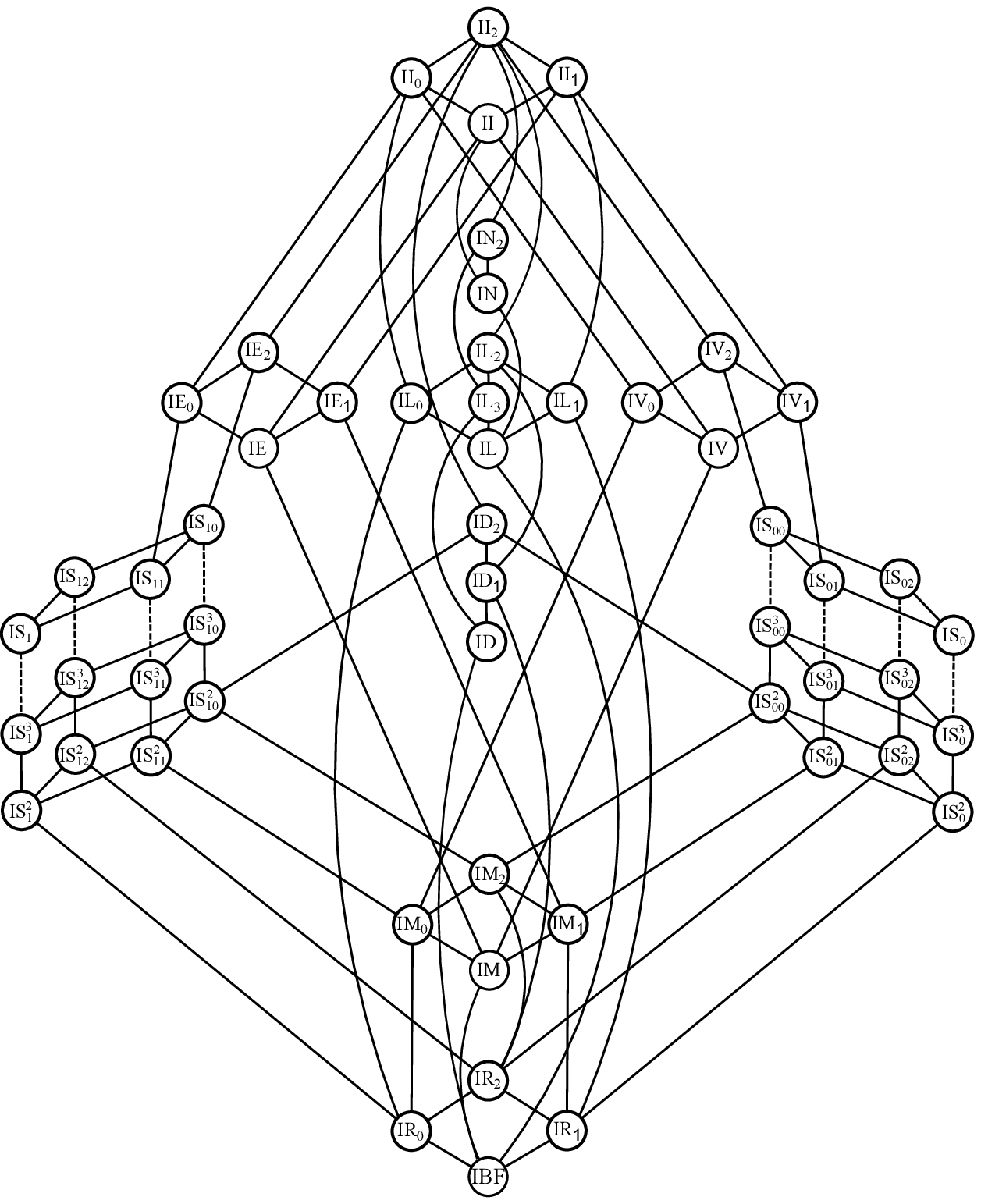}}
\caption{The lattice of Boolean co-clones}
\label{fig:post}
\end{figure}

\begin{figure}[ht]
\centerline{\includegraphics{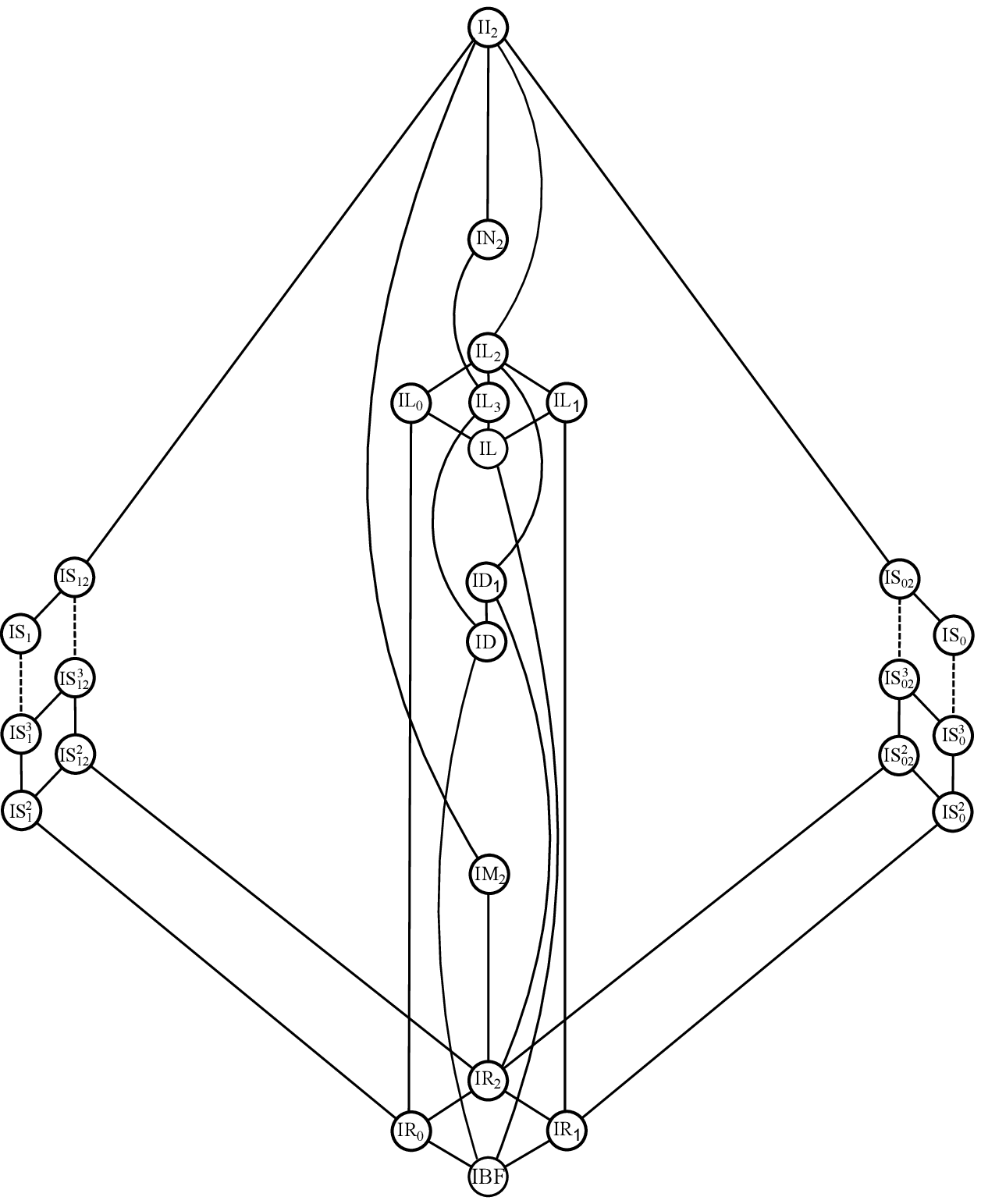}}
\caption{The lattice of Boolean max-co-clones}
\label{fig:max-post}
\end{figure}

The next theorem states the main result of this section.

\begin{theorem}\label{the:max-post}
The lattice of Boolean max-co-clones is shown in Fig~\ref{fig:max-post}. Some generating sets
of these max-co-clones are given in Table~\ref{tab:max-post}.
\end{theorem}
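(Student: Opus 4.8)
The plan is to pin every Boolean max-co-clone to a node of Post's lattice of co-clones \cite{Post41}, working from the plain bases of \cite{Creignou08:plain} collected in Table~\ref{tab:plain-post}. By Lemma~\ref{lem:max-clones-implementations} each relation of $\mang\Gm$ is a single max-implementation $\mex(\vc ym)\Phi$ of a conjunction $\Phi$ of relations from $\Gm\cup\{\EQ\}$, so it suffices throughout to understand one application of $\mex$ to a conjunctive formula. I will not assume the surprising fact that a Boolean max-co-clone is always an ordinary co-clone; instead it drops out of the classification, since for every basis $\Gm_C$ in Table~\ref{tab:max-post} I show that $\mang{\Gm_C}$ equals a specific co-clone $C$ of Figure~\ref{fig:post}, and co-clones are closed under $\exists$ by definition. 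For each target $C$ the work splits into an inclusion $C\sse\mang{\Gm_C}$ (generation) and the reverse $\mang{\Gm_C}\sse C$ (closure under $\mex$).

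The generation inclusions rest on a handful of explicit computations, the decisive ones being
\[
\mex y\,\IMP(x,y)=\dl_0,\qquad\mex x\,\IMP(x,y)=\dl_1,\qquad\mex x_k\,\OR^k(\vc xk)=\OR^{k-1}(\vc x{k-1}).
\]
The first two manufacture the constants out of implication, which is exactly why $\mang{\{\IMP\}}=IM_2$ rather than $IM$, and why the single relation $\Compl_{3,0}$ and the pair $\{\IMP,\OR\}$ suffice for $IN_2$ and for the full co-clone $II_2$; the third collapses the ladder $\OR^k,\OR^{k-1}\zd\OR^1$ to its top element, accounting for the short max-bases of the $IS$-families. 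Once a plain basis of $C$ is recovered inside $\mang{\Gm_C}$, conjunction and the quantifications already produced generate all of $C$. The affine and equality/disequality intervals are the easy cases: on a relation from $IL,\dots,IL_3$ or from $ID,ID_1$ every Boolean variable has a \emph{uniform} number of extensions over the relation (a determined variable has one, a free variable has two, for all partial tuples alike), so there $\mex$ coincides with $\exists$ and closure is automatic, which is why these intervals reappear with their plain bases unchanged.

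The closure inclusions $\mang{\Gm_C}\sse C$ are the crux and the main obstacle. The difficulty is that $\mex$ retains only the tuples of maximal extension count and therefore does \emph{not} distribute over conjunction the way $\exists$ does; over $\{0,1\}$ a quantified variable multiplies the extension count by $1$ or $2$, so tuples of different counts compete multiplicatively and the maximal ones can depend globally on the whole formula. For each listed $C$ I must show that no conjunction of basis relations followed by one $\mex$ can leave $C$, using the defining closure property of $C$ (monotonicity for the $IM$/$IS$ nodes, affineness for the $IL$ nodes, invariance under complementation for $IN_2$, $0/1$-validity for the $IR$ nodes) to control which tuples are extension-maximal. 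Dually, for every co-clone \emph{absent} from Table~\ref{tab:max-post} I exhibit a single relation whose max-implementation escapes it: the computation $\mex\,\IMP=\dl_0,\dl_1$ already ejects $IM,IM_0,IM_1$ and the one-sided $IS_{00},IS_{01},IS_{10},IS_{11}$ families, since they acquire a forbidden constant, and similar one-relation witnesses eliminate $ID_2$, the $IV$ and $IE$ families, $IN$, and $II,II_0,II_1$.

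Finally the lattice of Figure~\ref{fig:max-post} is read off from Figure~\ref{fig:post}: the max-co-clones carry the inclusion order of the co-clones to which they have been pinned, so the Hasse diagram is the restriction of Post's lattice to the identified nodes together with the induced covering relations. Completeness of the list follows by combining the closure inclusions (each listed node is a $\mex$-fixpoint) with the escape witnesses (every unlisted node sits above one of them and hence cannot be a fixpoint). I expect the two heaviest single computations to be the generation of the full co-clone $II_2$ from $\{\IMP,\OR\}$ and of $IN_2$ from the single relation $\Compl_{3,0}$, where the non-uniform extension counts of the clause-type relations make the extension-maximal analysis most intricate.
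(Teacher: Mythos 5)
Your plan has two genuine gaps, one structural and one technical. The structural gap is in your completeness argument. You propose to decide, node by node, which co-clones of Post's lattice are $\mex$-fixpoints, and to conclude that the listed nodes exhaust all max-co-clones. But a max-co-clone is by definition only closed under conjunction and max-implementation, not under $\exists$, so a priori it need not be a node of Post's lattice at all; the fact that every Boolean max-co-clone is an ordinary co-clone is a \emph{consequence} of the theorem, and your framing (``pin every Boolean max-co-clone to a node'') quietly presupposes it. Checking that each unlisted node fails to be a fixpoint says nothing about a hypothetical max-co-clone lying strictly between, say, $IM_2$ and $II_2$ that is not $\exists$-closed. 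To rule these out you need generation lemmas about \emph{arbitrary relations}, not about Post nodes: this is exactly what the paper proves in Lemma~\ref{lem:out-max-generation} (any set that is not all-monotone, all-self-complement, or all-affine max-generates $II_2$), Lemma~\ref{lem:IMP+-generation} (any $\rel\not\in IM_2$ together with $\IMP$ max-generates $II_2$), Lemma~\ref{lem:OR-generation} (any $\rel\not\in IS_{12}$ together with $\OR$ max-generates $II_2$), Lemma~\ref{lem:nand-generation}, and Lemma~\ref{lem:compl-extraction}. Your per-node escape witnesses (e.g.\ $\mex y\,\IMP(x,y)=\dl_0$ ejecting $IM,IM_0,IM_1$ and the one-sided $IS$ families) are correct and coincide with the paper's Lemma~\ref{lem:simple-implementations}, but they only prune the lattice; they cannot complete the classification.

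The technical gap is in the closure direction for the monotone co-clones, where ``using monotonicity to control which tuples are extension-maximal'' is not yet an argument. For $IM_2$ the statement you need is: if $\rel(\vc xn,\vc ym)$ is invariant under $\meet$ and $\join$, then the set of tuples $\ba$ with the maximal number of extensions is again invariant under $\meet$ and $\join$. This requires the correlation inequality $f(\ba)\cdot f(\ba')\le f(\ba\join\ba')\cdot f(\ba\meet\ba')$ for the extension-count function $f$, which does not follow from monotonicity by any routine manipulation; the paper obtains it from log-supermodularity via the Ahlswede--Daykin Four-Functions Theorem \cite{Alshwede78:inequality} in Lemma~\ref{lem:IM2-clone}. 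Similarly, for the $IS$ chains the relevant invariant is not monotonicity but the ($r$-)filter property of Lemma~\ref{lem:IS-property}, and the proof that one application of $\mex$ preserves it (Lemma~\ref{lem:IS-clones}) is a delicate analysis of maximal tuples in the complement. Your assessment that the heaviest computations are the generation of $II_2$ from $\{\IMP,\OR\}$ and of $IN_2$ from $\Compl_{3,0}$ misplaces the difficulty: those are finite case analyses (and the $IN_2$ closure in Lemma~\ref{lem:IN-clone} is the easy complement-symmetry argument you describe), whereas the closure proofs for $IM_2$ and the $IS$ chains, together with the relation-level maximality lemmas above, are where the paper does its real work and where your outline currently has no mechanism.
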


The theorem will follow from a sequence of auxiliary statements. In 
Section~\ref{sec:implementations} we show that using the $\mex$ quantifier we can 
define various relations, and that any relation can be defined by any two nontrivial binary
relations. Then we show, Lemma~\ref{lem:out-max-generation}, that any proper max-co-clone 
must contain only monotone, or only self-complement, or only affine relations. We consider 
these three cases. In the case of affine relations we show that the max-co-clones of such relations 
are exactly regular co-clones, Lemma~\ref{lem:linear-max-clones}. Then we show, 
Proposition~\ref{pro:self-dual}, that there is only one max-co-clone of self-complement 
relations, which contains a non-affine relation, $IN_2$.  Then we show, 
Lemmas~\ref{lem:IM2-clone},\ref{lem:IMP+-generation},  that there is only one 
proper, that is, not $II_2$, the set of all relations, max-co-clone containing $\IMP$, 
and this max-co-clone is $IM_2$. Finally, we consider the four 
remaining infinite chains of co-clones. In Lemma~\ref{lem:IS-property}
we introduce a property that defines them. Then we show, Lemma~\ref{lem:nand-generation}, 
and~\ref{lem:OR-generation} , that there are no other max-co-clones containing $\OR$ 
(for $\NAND$ a dual result holds). Finally, we show that each of these co-clones is 
a max-co-clone.

\subsection{Some implementations}\label{sec:implementations}

We start with several useful observations.

\begin{lemma}\label{lem:simple-implementations}
(1) $\dl_0,\dl_1\in\mang\IMP$;\\
(2) $\dl_0\in\mang{\NEQ,\dl_1}$, $\dl_1\in\mang{\NEQ,\dl_0}$;\\
(3) $\NAND^k\in\mang{\NAND^m}$ for any $k\le m$;\\
(4) $\OR^k\in\mang{\OR^m}$ for any $k\le m$.
\end{lemma}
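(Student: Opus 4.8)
The plan is to prove each of the four memberships by exhibiting an explicit conjunctive formula over the prescribed relations and max-quantifying a suitable block of auxiliary variables, so that the target relation is literally $\mex(\cdots)\Phi$ for that formula $\Phi$. In every case the verification reduces to a counting check: for each assignment to the free variables, count the extensions satisfying $\Phi$, and confirm that the maximum (taken over all of $\{0,1\}^n$, as in the definition of $\mex$) is attained exactly on the tuples of the relation claimed to be generated. I will only use conjunction, the equality relation, the freedom to place a relation on a chosen tuple of variables, and the $\mex$ quantifier; no existential quantification is needed.

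For (1) I note that $\IMP(x,y)$ holds iff $x\le y$. Max-quantifying the second argument yields $\dl_0$: for $x=0$ both values of $y$ satisfy $\IMP$, while for $x=1$ only $y=1$ does, so the maximal count $2$ is attained precisely at $x=0$, giving $\dl_0(x)=\mex y\,\IMP(x,y)$. Dually, max-quantifying the first argument gives $\dl_1(y)=\mex x\,\IMP(x,y)$, since $y=1$ admits both values of $x$ whereas $y=0$ admits only $x=0$. For (2) I conjoin $\NEQ$ with the available constant and quantify away the auxiliary variable: $\dl_0(x)=\mex y\,(\NEQ(x,y)\meet\dl_1(y))$ forces $y=1$ and $x\ne y$, so $x=0$ has one extension and $x=1$ has none, and the maximum $1$ is attained only at $x=0$; the symmetric expression $\dl_1(x)=\mex y\,(\NEQ(x,y)\meet\dl_0(y))$ yields $\dl_1$.

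For (3) and (4) the device is padding. Because $\NAND^m$ is symmetric, I place it on $k$ free variables $\vc xk$ together with $m-k$ fresh variables $\vc y{m-k}$ and max-quantify the latter. If $\ba\in\NAND^k$ (that is, $\ba\ne(1\zd1)$) then $\NAND^m(\ba,\bb)$ holds for all $\bb$, giving $2^{m-k}$ extensions, whereas for $\ba=(1\zd1)$ it holds for every $\bb$ except $\bb=(1\zd1)$, giving only $2^{m-k}-1$. Thus the maximum $2^{m-k}$ is attained exactly on $\NAND^k$, so $\NAND^k(\vc xk)=\mex(\vc y{m-k})\,\NAND^m(\vc xk,\vc y{m-k})$ (the case $m=k$ being trivial). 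Part (4) is identical with "all-ones" replaced by "all-zeros", giving $\OR^k(\vc xk)=\mex(\vc y{m-k})\,\OR^m(\vc xk,\vc y{m-k})$.

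The one point requiring genuine care — and where a careless attempt would fail — is that the threshold $M$ in the $\mex$ construction is the maximum extension count over the whole domain, so it does not suffice that the desired tuples have many extensions: the excluded tuple must have strictly fewer. In each construction above this is exactly a strict inequality ($0<1$, $1<2$, or $2^{m-k}-1<2^{m-k}$), which is precisely what forces the max-quantifier to discard the single unwanted tuple and retain the rest. Since every step is such an elementary count, I expect no deeper obstacle than getting these strict comparisons right and choosing the correct argument to quantify in the non-symmetric case of $\IMP$.
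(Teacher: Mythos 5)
Your proof is correct and takes essentially the same route as the paper: parts (1) and (2) use the identical formulas $\dl_0(x)=\mex y\,\IMP(x,y)$, $\dl_1(x)=\mex y\,\IMP(y,x)$, and $\dl_0(x)=\mex y(\NEQ(x,y)\meet\dl_1(y))$, with the same extension counts. The only (inessential) difference is in (3)--(4), where you max-quantify all $m-k$ padding variables in one block, while the paper removes a single variable at a time via $\NAND^{m-1}(\vc x{m-1})=\mex x_m\,\NAND^m(\vc xm)$ and iterates; both verifications are correct.
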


\begin{proof}
(1) As is easily seen, $\dl_0(x)=\mex y\,\IMP(x,y)$, and $\dl_1(x)=\mex y\, \IMP(y,x)$.

(2) The first inclusion follows from $\dl_0(x)=\mex y(\NEQ(x,y)\meet\dl_1(y))$; the second one is similar.

(3) This claim follows from $\NAND^{m-1}(\vc x{m-1}) = \mex x_m \NAND^m(\vc xm)$.

(4) is similar to (3).
\end{proof}

\begin{lemma}\label{lem:IMP-generation}
For any two different relations $\rel,\rel'\in\{\NEQ,\IMP,\OR,\NAND\}$, 
$\mang{\rel,\rel'}=II_2$, the set of all relations on $\{0,1\}$.
\end{lemma}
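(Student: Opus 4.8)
The plan is to prove $\mang{\rel,\rel'}=II_2$ by first manufacturing a convenient generating set of binary relations and constants, and then writing an arbitrary relation in CNF. Throughout I use one elementary device, call it \emph{unique-extension projection}: if $\Phi(\vc xn,\vc wp)$ is such that every $\ba\in\{0,1\}^n$ has at most one extension to a satisfying tuple $(\ba,\bw)$, then $\mex(\vc wp)\Phi$ equals the ordinary projection $\{\ba\mid\ex\bw\ \Phi(\ba,\bw)\}$, since the maximal extension count is then $1$ and is attained exactly by the tuples that have an extension. In particular, attaching to each variable $x_j$ a fresh copy $x_j'$ with $\NEQ(x_j,x_j')$ forces $x_j'=\neg x_j$, so $\mex$ lets us eliminate uniquely-determined negated copies. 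I also use two symmetries that are automorphisms of the whole setup: the bijection $\kappa$ complementing $0\leftrightarrow1$ (it commutes with conjunction and with $\mex$, sends $\NEQ\mapsto\NEQ$, $\IMP\mapsto\IMP$ up to swapping its coordinates, and interchanges $\OR\leftrightarrow\NAND$), and permutation of coordinates (a manipulation of variables). Up to these it suffices to treat the four pairs $\{\NEQ,\IMP\}$, $\{\NEQ,\OR\}$, $\{\IMP,\OR\}$ and $\{\OR,\NAND\}$.

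First I would show that each such pair generates all of $\dl_0,\dl_1,\NEQ,\IMP,\OR,\NAND$. From $\IMP$ we get $\dl_0,\dl_1$ by Lemma~\ref{lem:simple-implementations}(1), and from $\OR,\NAND$ we get them as $\mex y\,\OR(x,y)=\dl_1$ and $\mex y\,\NAND(x,y)=\dl_0$. Given $\NEQ$ together with one of $\OR,\IMP$ the remaining binary relations come by negating an argument and applying unique-extension projection, e.g.\ $\OR(x,y)=\mex x'(\IMP(x',y)\meet\NEQ(x,x'))$ and $\NAND(x,y)=\mex(x',y')(\OR(x',y')\meet\NEQ(x,x')\meet\NEQ(y,y'))$. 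For $\{\OR,\NAND\}$ note $\NEQ=\OR\meet\NAND$; for $\{\IMP,\OR\}$ the only missing piece is $\NEQ$, which I obtain by a counting gadget $\NEQ(x,y)=\mex(w_1,w_2)(\OR(x,y)\meet\IMP(x,w_1)\meet\IMP(y,w_2))$, where $\IMP(x_i,w_i)$ contributes two choices of $w_i$ when $x_i=0$ and one when $x_i=1$, so among the three tuples left by $\OR$ the pairs $(0,1),(1,0)$ have two extensions while $(1,1)$ has one, and $\mex$ keeps exactly $\NEQ$. Thus in every case $\{\dl_0,\dl_1,\NEQ,\IMP,\OR,\NAND\}\sse\mang{\rel,\rel'}$.

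The crux is producing the unbounded positive clauses $\OR^k$ (hence, via $\kappa$, the $\NAND^k$); this is the one place where the genuine counting power of $\mex$ is needed, since $\OR^k$ is not pp-definable from binary relations (it is not preserved by the majority operation), so no constant-witness projection can create it. My gadget is a ``first one'' witness: with auxiliary variables $\vc wk$ put
$$
\Phi_k(\vc xk,\vc wk)=\bigwedge_{i=1}^k\IMP(w_i,x_i)\ \meet\ \bigwedge_{1\le j<i\le k}\NAND(w_i,x_j).
$$
Reading $w_i=1$ as ``$i$ is the first coordinate equal to $1$'', the constraints force $x_i=1$ and $x_j=0$ for $j<i$, so at most one $w_i$ may be set, a nonzero tuple admits exactly one such witness (its first $1$) while the all-zero tuple admits none, and the all-zero $\bw$ is always legal. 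Hence every nonzero $\ba$ has exactly two extensions and $(0\zd0)$ has one, so $\mex(\vc wk)\Phi_k=\OR^k$. As $\IMP,\NAND\in\mang{\rel,\rel'}$ and the class is closed under conjunction and $\mex$, this yields $\OR^k\in\mang{\rel,\rel'}$ for all $k$, and dually all $\NAND^k$. I expect this clause-size gadget to be the main obstacle, as it is the only step where the maximal-count semantics is essential rather than mere projection.

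Finally I would assemble an arbitrary $\rel^*(\vc xn)$. Write $\rel^*$ in CNF, one clause forbidding each excluded tuple; each clause is an $\OR^\ell$ applied to literals. Attach one negated copy $x_j'$ to each variable via $\NEQ(x_j,x_j')$, replace negative literals by the corresponding $x_j'$, and conjoin the resulting positive clauses with these $\NEQ$ constraints to form $\Psi(\vc xn,x_1'\zd x_n')$. Every assignment to $\vc xn$ then has at most one extension (all $x_j'=\neg x_j$), legal precisely when all clauses hold, so by unique-extension projection $\mex(x_1'\zd x_n')\Psi=\rel^*$ whenever $\rel^*\ne\eps$ (the empty relation being obtained directly as $\dl_0(x_1)\meet\dl_1(x_1)$ padded by the full relation on the remaining coordinates). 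Therefore every $\rel^*$ lies in $\mang{\rel,\rel'}$, that is, $\mang{\rel,\rel'}=II_2$; apart from $\Phi_k$, the remaining steps are routine bookkeeping with unique-extension projections.
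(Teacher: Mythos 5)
Your argument is sound, and it shares its first half with the paper's proof but diverges genuinely in the main construction. The preliminary step is essentially identical: your identities for passing between $\NEQ,\IMP,\OR,\NAND$ (negating an argument through a $\NEQ$-forced copy and projecting, $\NEQ=\OR\meet\NAND$) and your counting gadget $\NEQ(x,y)=\mex(w_1,w_2)(\OR(x,y)\meet\IMP(x,w_1)\meet\IMP(y,w_2))$ for the pair $\{\IMP,\OR\}$ are exactly the equalities displayed in the paper; your only addition is the complementation symmetry, which trims the case analysis the paper carries out explicitly. The main part differs. The paper finishes in one shot: for an arbitrary $\rel(\vc xn)$ it introduces a witness variable $z_I$ for every accepted tuple $\ba_I\in\rel$, with constraints $\IMP(z_I,x_i)$ for $i\in I$ and $\NAND(z_I,x_i)$ for $i\not\in I$, so that accepted tuples have two extensions and all others one, and a single application of $\mex$ returns $\rel$. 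You instead deploy the same two-versus-one counting trick only to build the clauses $\OR^k$ (your ``first-one'' gadget, with $k$ witnesses rather than one per accepted tuple), and then capture an arbitrary relation by a CNF over the variables together with $\NEQ$-forced negated copies, eliminated by a unique-extension projection. What each route buys: the paper's gadget is shorter and needs no normal form; your decomposition isolates precisely where the counting semantics of $\mex$ is indispensable, namely in producing $\OR^k$, which, as you correctly note via the majority polymorphism, is not obtainable from binary relations by ordinary conjunctive definitions with existential quantification, while every other step is mere projection. Your two caveats are real under the paper's definition of $\mex$ (an empty inner relation makes $\mex$ return the \emph{full} relation, so unique-extension projection needs satisfiability, and the empty relation needs separate treatment), and you handle both explicitly; the paper's own gadget silently assumes $\rel\neq\emptyset$ as well.
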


\begin{proof}
Observe first that 
\begin{eqnarray*}
\OR\cap\NAND &=& \NEQ,\\
\IMP(x,y) &=& \mex z(\OR(z,y)\meet\NEQ(z,x))\\
 &=& \mex z(\NAND(x,z)\meet\NEQ(z,y))\\
\OR(x,y) &=& \mex z(\IMP(z,y)\meet\NEQ(z,x))\\
 &=& \mex z,t(\NAND(z,t)\meet\NEQ(z,x)\meet\NEQ(t,y))\\
\NAND(x,y) &=& \mex z(\IMP(x,z)\meet\NEQ(z,x))\\
 &=& \mex z,t(\OR(z,t)\meet\NEQ(z,x)\meet\NEQ(t,y)).
\end{eqnarray*}
Also in the relation $\relo(x,y,z,t)=\OR(x,y)\meet\IMP(x,z)\meet\IMP(y,t)$ 
assignments $(0,1)$ and $(1,0)$ to $x,y$ are extendible in two ways, 
while $(1,1)$ is extendible in only one way. Therefore
\begin{eqnarray*}
\NEQ(x,y) &=& \mex(z,t)(\OR(x,y)\meet\IMP(x,z)\meet\IMP(y,t)),\ \ \ \text{and, similarly,}\\
\NEQ(x,y) &=& \mex(z,t)(\NAND(x,y)\meet\IMP(z,x)\meet\IMP(t,y)).
\end{eqnarray*}
Thus $\{\NEQ,\IMP,\OR,\NAND\}\sse\mang{\rel,\rel'}$, and it suffices to show 
that $\mang{\NEQ,\IMP,\OR,\NAND}=II_2$.

The rest of the proof is derived from that of Lemma~15 
\cite{Bulatov11:log-supermodular}, only it does not have to deal with
weights.

Let $\rel(\vc xn)$ be any relation. For each $I\sse[n]$ with 
$\ba_I\in\rel$ introduce a new variable $z_I$. 
Consider the relation given by
$$
\relo=\bigwedge_{I\sse[n],\ba_I\in\rel}\left(\bigwedge_{i\in I}
\IMP(z_I,x_i)\meet\bigwedge_{i\not\in I}\NAND(z_I,x_i)\right).
$$
Every assignment $\ba_I\in\rel$ can be extended to the variables $z_J$ 
in two ways: with $z_I=0$ and $z_I=1$. Any other assignment can be 
extended in only one way. Therefore
$$
\rel(\vc xn)=\mex(z_I)_{I\sse[n],\ba_I\in\rel}\relo,
$$
which completes the proof.
\end{proof}

\begin{lemma}\label{lem:NEQ-generation}
Let $\rel$ be a non-affine relation and $a\in\{0,1\}$. Then $\mang{\rel,\NEQ,\dl_a}=II_2$.
\end{lemma}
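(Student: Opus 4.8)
The plan is to reduce the statement to Lemma~\ref{lem:IMP-generation}. Since $\NEQ\in\mang{\rel,\NEQ,\dl_a}$ already, it suffices to max-implement just one of $\IMP,\OR,\NAND$ from $\rel,\NEQ,\dl_a$; Lemma~\ref{lem:IMP-generation} (applied to $\NEQ$ together with the newly obtained relation) then yields $\mang{\rel,\NEQ,\dl_a}=II_2$. First I would note that both constants are available: by Lemma~\ref{lem:simple-implementations}(2) we have $\dl_{1-a}\in\mang{\NEQ,\dl_a}$, so $\dl_0,\dl_1\in\mang{\rel,\NEQ,\dl_a}$. The key technical remark is that, with both constants and $\NEQ$ in hand, several ``clean'' operations are in fact max-implementations: substituting a constant into a coordinate and deleting it, identifying two coordinates, complementing a coordinate, and more generally tying one coordinate to another (via $\EQ$ or $\NEQ$) and deleting it. In each case the conjoined constraint leaves the deleted variable exactly one admissible value, so every surviving tuple has precisely one extension; the maximal extension count is $1$ and $\mex$ coincides with ordinary existential projection. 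Thus within $\mang{\rel,\NEQ,\dl_a}$ I may freely perform these substitutions and identifications.

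Next I would exploit non-affineness. Choose $\ba,\bb,\bc\in\rel$ with $\bd=\ba\oplus\bb\oplus\bc\notin\rel$; such tuples exist because $\rel$ is not closed under $x\oplus y\oplus z$. Complementing the coordinates on which $\ba$ is $1$ (a clean operation under which non-affineness is preserved, since $\oplus$ commutes with complementation) I may assume $\ba=\mathbf0$ and $\bd=\bb\oplus\bc$. Partition the coordinates according to the pattern $(\bb[i],\bc[i])\in\{0,1\}^2$ into blocks $P_{00},P_{01},P_{10},P_{11}$; coordinates in a common block agree throughout $\ba,\bb,\bc,\bd$, so identifying each block to a single variable and fixing the $P_{00}$-variable to $0$ are clean operations that preserve the memberships $\ba,\bb,\bc\in\rel$, $\bd\notin\rel$. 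A short check shows that at least two of $P_{01},P_{10},P_{11}$ are nonempty, since if at most one nonzero pattern occurred then two of $\ba,\bb,\bc$ would coincide and $\bd$ would land back in $\rel$.

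If exactly two of the three blocks are nonempty, the resulting relation is binary and, reading off the images of $\ba,\bb,\bc,\bd$, contains three distinct points of $\{0,1\}^2$ while omitting the fourth; hence it is one of $\OR,\NAND,\IMP$ (up to a swap of variables), and we are done. The remaining, and genuinely harder, case is when all three blocks are nonempty. Here the construction yields a ternary relation $\tau$ with $(0,0,0),(1,0,1),(0,1,1)\in\tau$ and $(1,1,0)\notin\tau$, but with the four tuples $(0,0,1),(1,0,0),(0,1,0),(1,1,1)$ undetermined. This is the main obstacle, because a max-co-clone does not provide free existential projection, and $\mex z\,\tau$ need not be clean. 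The plan is a finite case analysis over the membership of these four tuples: in every case one of the clean projections --- fixing $z$ to a constant, or tying $z$ to $x$ or $y$ (possibly negated) through $\EQ$ or $\NEQ$ before deleting it, or, when the extension counts happen to be uniform, $\mex z$ itself --- produces a binary relation with exactly three elements, i.e.\ one of $\OR,\NAND,\IMP$. Combining this with $\NEQ$ and Lemma~\ref{lem:IMP-generation} completes the argument.
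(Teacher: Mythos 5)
Your setup coincides with the paper's proof step for step: reduce to producing one of $\IMP,\OR,\NAND$ and invoke Lemma~\ref{lem:IMP-generation}; obtain both constants from Lemma~\ref{lem:simple-implementations}(2); observe that conjoining $\dl_c$, $\EQ$ or $\NEQ$ constraints pins the quantified variable to a unique value, so that $\mex$ degenerates to ordinary projection (this is exactly how the paper uses these operations); normalize the non-affine witnesses $\ba,\bb,\bc$ with $\bd=\ba\oplus\bb\oplus\bc\notin\rel$ so that $\ba$ is the all-zero tuple; identify the pattern blocks and fix the $(0,0)$-block to $0$. Your explicit treatment of the degenerate situation where only two of the nonzero blocks occur is, if anything, more careful than the paper, which tacitly assumes all blocks are nonempty. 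Up to this point everything you write is correct, and the ternary relation $\tau$ you reach, containing $(0,0,0),(0,1,1),(1,0,1)$ and omitting $(1,1,0)$, is precisely the paper's $\rel''$.

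However, there is a genuine gap at exactly the point where the paper's real work begins: you announce ``a finite case analysis over the membership of these four tuples'' and assert that in every case one of the clean projections produces a three-element binary relation, but you never perform the analysis. That assertion \emph{is} the remaining content of the lemma, and it is not automatic, because which projection succeeds depends delicately on the membership pattern. For instance, $\mex z\,\tau$ alone returns the useless relation $\{(0,0)\}$ when $(0,0,1)\in\tau$ but $(1,0,0),(0,1,0),(1,1,1)\notin\tau$ (the tuple $(0,0)$ then has two extensions while the others have one); fixing $z=1$ yields $\OR$ when $(1,1,1)\in\tau$ and $(0,0,1)\notin\tau$, but yields all of $\{0,1\}^2$ if $(0,0,1)$ is present as well, in which case one must instead identify variables or fix $z=0$. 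The paper's final paragraph carries out precisely this verification, organizing the sixteen membership patterns into seven cases, each with a concrete implementation of $\NAND$, $\IMP$, or $\OR$ (e.g.\ $\NAND(x,y)=\mex z\,\rel''(x,y,z)$ when none of the four tuples is present, $\IMP(x,y)=\rel''(x,y,y)$ when $(1,1,1)$ is present and $(1,0,0)$ is not, and so on). Without that check, your argument assumes the conclusion at the decisive step, so the proposal as written is incomplete, even though the approach itself is the paper's and would go through once the cases are checked.
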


\begin{proof}
By Lemma~\ref{lem:IMP-generation} it suffices to prove that one of 
$\IMP,\OR$, or $\NAND$ belongs to $\mang{f,\NEQ,\dl_a}$. 
Observe first that we can always assume that the all-zero tuple 
$\ba_\eps\in\rel$. Indeed, if for some $I\sse[n]$ we have 
$\ba_I\in\rel$ then the relation 
$$
\rel'(\vc xn)=\mex(z_i)_{i\in I}\left(\rel(\vc xn)\meet
\bigwedge_{i\in I}\NEQ(z_i,x_i)\right)
$$
contains $\ba_\eps$. As $\rel\not\in IL_2$, by Lemma~4.10 of \cite{Creignou01:complexity}, there are 
tuples $\ba,\bb,\bc\in\rel$ such that $\bd=\ba\oplus\bb\oplus\bc\not\in\rel$. 
Observing that $\be\in\rel$ if and only if $\be\oplus\ba_I\in\rel'$, 
we have that $\ba\oplus\ba_I,\bb\oplus\ba_I,\bc\oplus\ba_I\in\rel'$, but $\bd\oplus\ba_I=(\ba\oplus\ba_I)\oplus(\bb\oplus\ba_I)\oplus
(\bc\oplus\ba_I)\not\in\rel$. Hence $\rel'$ is not affine as well. 
Also, if $b\in\{0,1\}$ is such that $\{0,1\}=\{a,b\}$ 
then by Lemma~\ref{lem:simple-implementations}(2)  
$\dl_0,\dl_1\in\mang{\rel,\NEQ,\dl_a}$.

Again we use Lemma~4.10 of \cite{Creignou01:complexity} to find 
to find tuples $\ba,\bb,\bc\in\rel$ such 
that $\bd=\ba\oplus\bb\oplus\bc\not\in\rel$. Note that $\ba$ can be chosen
to be the all-zero tuple $\ba_\eps$. 
After rearranging 
variables these tuples can be represented as follows
$$
\begin{array}{r|cccc|l}
\ba&0\ldots0&0\ldots0&0\ldots0&0\ldots0&\in\rel\\
\bb&0\ldots0&0\ldots0&1\ldots1&1\ldots1&\in\rel\\
\bc&0\ldots0&1\ldots1&0\ldots0&1\ldots1&\in\rel\\
\hline
\bd&0\ldots0&1\ldots1&1\ldots1&0\ldots0&\not\in\rel\\
\hline
&x\ldots x&y\ldots y&z\ldots z&t\ldots t
\end{array}
$$
Denote by $\rel'$ the relation obtained from $\rel$ by identifying variables 
as shown in the last row of the table. Relation $\rel'$ contains tuples 
$(0,0,0,0),(0,0,1,1),(0,1,0,1)$ but does not contain $(0,1,1,0)$, and so 
does not belong to $IL_2$. Replacing 
$\rel'$ with
$$
\rel''(x,y,z)=\mex t(\rel(t,x,y,z)\meet\dl_0(t)),
$$
we obtain a relation $\rel''$ such that $(0,0,0),(0,1,1),(1,0,1)\in\rel''$ but $(1,1,0)\not\in\rel''$.

We now proceed depending on which of the 4 remaining tuples 
(a) $(1,0,0)$, (b) $(0,1,0)$, (c) $(0,0,1)$, and (d) $(1,1,1)$ relation 
$\rel''$ contains. If it contains none of (a)--(d) then 
$\NAND(x,y)=\mex z\rel''(x,y,z)$. If it contains (a) or (b) but not 
(d) then $\NAND$ is obtained by identifying $y$ and $z$, or $x$ and $z$, 
respectively. If $\rel''$ contains (c) but not (d) then 
$\NAND(x,y)=\mex z(\rel''(x,y,z)\meet\dl_1(z))$. If it contains (d) 
but not (a) then $\IMP(x,y)=\rel''(x,y,y)$. In the case $\rel''$ contains 
(a), (d), but does not contain (b) $\IMP$ is obtained by identifying 
$x$ and $z$. If $\rel''$ contains (a), (d), and (b) 
$\OR(x,y)=\mex z(\rel''(x,y,z)\meet\dl_1(z))$. Finally, if the relation 
contains all of (a)--(d) $\IMP(y,x)=\rel''(y,y,x)$.
\end{proof}

Next we show that every max-co-clone is a subset of  $IL_2$, $IN_2$, $IV_2$,
or $IE_2$. 

\begin{lemma}\label{lem:out-max-generation}
Let $\Gm$ be a set of relations, which is not affine, monotone, or self-complement. Then $\mang\Gm=II_2$.
\end{lemma}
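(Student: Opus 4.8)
The plan is to reduce the claim to the two generation results already proved, Lemma~\ref{lem:IMP-generation} and Lemma~\ref{lem:NEQ-generation}. Concretely, it suffices to show that $\mang\Gm$ contains the disequality $\NEQ$ together with at least one constant $\dl_a$: since $\Gm$ is not affine it contains a non-affine relation $\rel$, and then Lemma~\ref{lem:NEQ-generation} applied to $\{\rel,\NEQ,\dl_a\}\sse\mang\Gm$ yields $\mang\Gm=II_2$. Note also that once $\NEQ$ is available we may replace a variable by its negation, since $\mex z\,(\relo(z,x_2\zd x_n)\meet\NEQ(z,x_1))$ flips the first coordinate of $\relo$; in particular $\NEQ$ makes the two constants, and the two directions of monotonicity, interchangeable. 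Thus the whole argument splits into two tasks: extracting a constant, and extracting $\NEQ$.

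First I would extract a constant from the failure of self-complementation. As $\Gm$ is not self-complement, fix $\rel\in\Gm$ and $\ba\in\rel$ with $\neg\ba\notin\rel$. Identifying all coordinates $i$ with $\ba[i]=0$ into one variable and all coordinates with $\ba[i]=1$ into another (a manipulation of variables) produces a binary relation $\rela$ with $(0,1)\in\rela$; the only tuple of $\rel$ that can map to $(1,0)$ is $\neg\ba$, so $(1,0)\notin\rela$. Hence $\rela\sse\{(0,0),(0,1),(1,1)\}$ and contains $(0,1)$, so $\rela$ is one of $\{(0,1)\}$, $\{(0,0),(0,1)\}$, $\{(0,1),(1,1)\}$, or $\IMP$. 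In each case a constant results: max-quantifying the free coordinate of one of the first three gives $\dl_0$ or $\dl_1$ (for $\{(0,0),(0,1)\}$, max-quantifying the second coordinate gives $\dl_0$), while for $\IMP$ Lemma~\ref{lem:simple-implementations}(1) supplies both constants; the degenerate cases where $\ba$ is the all-zero or all-one tuple give a constant directly.

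Next I would extract $\NEQ$ from the failure of monotonicity. Since $\Gm$ lies in neither $IV_2$ nor $IE_2$, it contains a relation that is not $\join$-closed, witnessed by $\ba,\bb$ with $\ba\join\bb\notin\rel$, and a relation that is not $\meet$-closed, witnessed by $\bc,\bd$ with $\bc\meet\bd\notin\rel$. Grouping the coordinates of the first according to the pattern $(\ba[i],\bb[i])$ and identifying each group to a single variable $u,v,w,s$ (for patterns $00,01,10,11$) yields a $4$-ary relation $R_1(u,v,w,s)$ with $(0,0,1,1),(0,1,0,1)\in R_1$ and $(0,1,1,1)\notin R_1$; the dual construction on the second witness yields $R_1'(u,v,w,s)$ with the same two tuples present but $(0,0,0,1)\notin R_1'$. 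Placing both on the same four variables and conjoining, $T=R_1\meet R_1'$ contains $(0,0,1,1),(0,1,0,1)$ and excludes both $(0,1,1,1)$ and $(0,0,0,1)$; substituting $u=0$ and $s=1$ (conjoining with $\dl_0,\dl_1$ and max-quantifying $u$ and $s$) leaves exactly $\{(0,1),(1,0)\}=\NEQ$, since at $u=0,s=1$ the pairs $(0,1),(1,0)$ survive while $(0,0),(1,1)$ are excluded. I would also record the elementary fact that $\NEQ$ is the \emph{only} non-monotone binary relation (any binary relation missing $(0,1)$ or $(1,0)$ sits on a chain and is therefore both $\meet$- and $\join$-closed), which guarantees this is the sole possible output of the construction. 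Combining this $\NEQ$ with the constant from the previous step and the non-affine $\rel\in\Gm$, Lemma~\ref{lem:NEQ-generation} gives $\mang\Gm=II_2$.

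The main obstacle is the bookkeeping inside the $\NEQ$-extraction, and in particular a chicken-and-egg difficulty with the constants: the $u=0,s=1$ slice needs both $\dl_0$ and $\dl_1$, whereas the self-complementation step only guarantees one of them, and the clean way to produce the other is to flip it with $\NEQ$ — which is precisely what we are trying to build. Resolving this requires either securing both constants independently (e.g.\ recovering the missing one by max-quantifying $R_1$ or $R_1'$ against the one constant already in hand) or reorganizing the extraction to use a single constant, together with handling the degenerate cases where some pattern-class is empty or where one relation witnesses both monotonicity failures. A route that partly sidesteps this is to extract from the two witnesses binary relations lying in $\{\NEQ,\NAND\}$ and in $\{\NEQ,\OR\}$ respectively: if either equals $\NEQ$ we finish as above, and otherwise we have obtained both $\OR$ and $\NAND$, which are two distinct relations among the four, so Lemma~\ref{lem:IMP-generation} (and the identity $\OR\cap\NAND=\NEQ$) closes the argument directly.
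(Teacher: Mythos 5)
Your first step (extracting one constant from a non-self-complement relation) is correct and is essentially the paper's opening move, and your slice of $T=R_1\meet R_1'$ at $u=0$, $s=1$ is literally the paper's closing construction: there one takes $\NEQ=\rela'_1\meet\rela'_2$ with $\rela'_i(x,y)=\mex z\mex t(\rela_i(z,x,y,t)\meet\dl_0(z)\meet\dl_1(t))$. But the obstacle you flag yourself --- the slice needs \emph{both} constants while the self-complementation step yields only one --- is a genuine gap, and none of your suggested repairs closes it, because your entire $\NEQ$-extraction uses only the monotonicity witnesses $R_1,R_1'$ and the constants, and never touches the non-affine relation until the very end. That cannot work, since the witnesses may themselves be affine. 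Concretely, let $\rel$ be the affine relation defined by $x_1\oplus x_2\oplus x_3=1$ and $x_4=1$. It contains $(0,0,1,1)$ and $(0,1,0,1)$ but neither their join $(0,1,1,1)$ nor their meet $(0,0,0,1)$, so one may have $R_1=R_1'=\rel$. Say the constant in hand is $\dl_1$. The co-clone $\dang{\rel,\dl_1}$ consists of affine relations, so by Lemma~\ref{lem:linear-max-clones} it is a max-co-clone, whence $\mang{\rel,\dl_1}\sse\dang{\rel,\dl_1}\sse IL_2$. Moreover each generator (including equality) contains the all-ones tuple, a property preserved by identification, conjunction and existential quantification, so every relation in $\mang{\rel,\dl_1}$ is affine \emph{and} contains the all-ones tuple. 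This excludes $\dl_0$ and $\NEQ$ (which miss the all-ones tuple) as well as $\OR$, $\NAND$, $\IMP$ (which are not affine). Thus ``recovering the missing constant by max-quantifying $R_1$ or $R_1'$ against the constant in hand'' is impossible, and your alternative route via binary relations in $\{\NEQ,\NAND\}$ and $\{\NEQ,\OR\}$ is blocked twice over: those relations are unreachable here, and its slicing step needs both constants anyway.

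The idea you are missing is the paper's: bring non-affineness into play \emph{before} attempting to extract $\NEQ$. By Lemma~5.30 of \cite{Creignou01:complexity}, for any non-affine $\rel\in\Gm$ the quantifier-free co-clone $\ang{\rel,\dl_1}\sse\mang{\rel,\dl_1}$ already contains one of $\OR$, $\IMP$, $\NAND$. This breaks your chicken-and-egg cycle: $\IMP$ yields both constants (Lemma~\ref{lem:simple-implementations}(1)), $\NAND$ yields $\dl_0(x)=\NAND(x,x)$, and in the remaining case the paper conjoins the available $\OR$ onto the join-witness, forming $\relo''(x,y,z)=\mex t(\relo'(x,y,z,t)\meet\dl_1(t)\meet\OR(y,z))$, which kills the otherwise uncontrolled tuples and leaves a short case analysis whose outcome is either $\NEQ$ (then Lemma~\ref{lem:IMP-generation} finishes) or $\IMP$ (then both constants are available, and the two-witness slice above produces $\NEQ$). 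With $\NEQ$ and a constant actually in hand, your final appeal to Lemma~\ref{lem:NEQ-generation} would indeed finish the proof, and is a perfectly good alternative to the paper's use of Lemma~\ref{lem:IMP-generation}; but as written the middle of your argument does not go through, and any repair must use the non-affine relation earlier than you do.
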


\begin{proof}
Let $\rel(\vc xn)\in \Gm$ be a non-self-complement relation. Then 
after suitable rearrangement of variables there is $i\in\{0\zd n\}$ 
such that $\ba_{[i]}\in\rel$, while $\ba_{[n]-[i]}\not\in\rel$. 
If $0<i<n$ then identifying variables $\vc xi$ and $x_{i+1}\zd x_n$ 
we obtain a binary relation $\rel'$ that contains $(1,0)$ but does not 
contain $(0,1)$. As is easily seen either $\mex x\rel'$ or $\mex y\rel'$ 
is a constant relation. In the case $i=0$ or $i=n$, identifying all 
variables of $\rel$ we obtain a constant relation. Thus either 
$\dl_0\in\mang\Gm$ or $\dl_1\in\mang\Gm$.

Suppose $\dl_1\in\mang\Gm$. The case $\dl_0\in\mang\Gm$ is similar.
By Lemma~5.30 of \cite{Creignou01:complexity} for any non-affine relation 
$\rel\in\Gm$, the set $\ang{\rel,\dl_1}\sse\mang{\rel,\dl_1}$ contains one of the following relations: 
$\OR,\IMP,\NAND$. If $\NAND\in\mang{\rel,\dl_1}$ then 
$\dl_0(x)=\NAND(x,x)$, and we can make all the arguments below for 
$\dl_0$ and $\NAND$. Therefore we have two cases to consider. Suppose first that 
$\OR\in\mang{\rel,\dl_1}$.
There is a relation $\relo\in\Gm$ that is not invariant under the $\join$ 
operation. Therefore for some tuple $\ba,\bb\in\relo$ the tuple $\ba\join\bb$ 
does not belong to $\relo$. After an appropriate rearrangement of variables these 
tuples can be represented as follows
$$
\begin{array}{r|cccc|l}
\ba&0\ldots0&0\ldots0&1\ldots1&1\ldots1&\in\relo\\
\bb&0\ldots0&1\ldots1&0\ldots0&1\ldots1&\in\relo\\
\hline
\bd&0\ldots0&1\ldots1&1\ldots1&1\ldots1&\not\in\relo\\
\hline
&x\ldots x&y\ldots y&z\ldots z&t\ldots t
\end{array}
$$
Denote by $\relo'$ the relation obtained from $\relo$ by identifying variables 
as shown in the last row of the table. Relation $\relo'$ contains tuples 
$(0,0,1,1),(0,1,0,1)$ but does not contain $(0,1,1,1)$. Then, relation 
$\relo''(x,y,z)=\mex t(\relo'(x,y,z,t)\meet\dl_1(t)\meet\OR(y,z))$ contains
 tuples $(0,0,1),(0,1,0)$ but does not contain $(0,1,1),(0,0,0),(1,0,0)$. 
We have several cases depending on the 3 remaining tuples (a) $(1,1,0)$, 
(b) $(1,0,1)$, (c) $(1,1,1)$. If none of (a)--(c) is in $\relo''$ then 
$\NEQ(x,y)=\mex z\relo''(z,x,y)$. If $\relo''$ contains (a) but not (c) 
(or (b) but not (c)), then $\NEQ(x,y)=\relo''(x,x,y)$ (respectively, 
$\NEQ(x,y)=\relo''(x,y,x)$). If it contains (c) but does not contain (a) 
and (b) then $\IMP(x,y)=\mex z\,\relo''(x,y,z)$.
If $\relo''$ contains both  (b) and (c) then $\IMP(x,y)
=\mex z(\relo''(x,y,z)\meet\dl_1(z))$. Finally if $\relo''$ contains 
(a),(c), but not (b), then $\IMP(x,y)=\mex z(\relo''(y,z,x)\meet\dl_1(z))$.

In either case $\mang\Gm$ contains a constant relation, either $\NEQ$ or 
$\IMP$, and contains one of $\OR,\IMP,\NAND$. If it contains $\NEQ$, we 
are done by Lemma~\ref{lem:IMP-generation}. So suppose $\IMP\in\mang\Gm$. 
Then we also have $\dl_0,\dl_1\in\mang\Gm$. Since $\Gm$ is not monotone, 
as before we can derive relations $\rela_1,\rela_2\in\mang\Gm$ such that 
$(0,0,1,1),(0,1,0,1)\in\rela_1,\rela_2$, but $(0,1,1,1)\not\in\rela_1$, 
$(0,0,0,1)\not\in\rela_2$. Now it is easy to see that 
$\NEQ=\rela'_1\meet\rela'_2$, where $\rela'_i(x,y)
=\mex z\mex t(\rela_i(z,x,y,t)\meet\dl_0(z)\meet\dl_1(t)$.
\end{proof}

\subsection{Affine relations}

Recall that the set of affine relations, that is, ($n$-ary) relations that 
can be represented as the set of solutions to a system of linear equations 
over $\GF(2)$ is denoted by $IL_2$. The next lemma follows from basic linear algebra, as sets of extensions 
of tuples are cosets of the same vector subspace. For the sake of completeness
we give a proof of this lemma.

\begin{lemma}\label{lem:rectangularity}
Let $\rel$ be an ($n$-ary) affine relation. Then for any $I\sse[n]$ any two 
tuples $\ba,\bb\in\pr_I\rel$ have the same number of extensions to tuples 
from $\rel$.
\end{lemma}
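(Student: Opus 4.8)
The plan is to exploit the fact that a nonempty affine relation is a coset of a linear subspace of $\GF(2)^n$, so that the fibers of any coordinate projection are translates of a single fixed subspace and therefore all have the same cardinality. I would first dispose of the trivial case: if $\rel$ is empty there is nothing to prove. Otherwise, fix a particular tuple $\bc\in\rel$ and let $U\sse\GF(2)^n$ be the solution set of the associated homogeneous system of linear equations, a linear subspace; elementary linear algebra then gives $\rel=\bc\op U$, writing the coset additively over $\GF(2)$.

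Next I would regard $\pr_I$ as the $\GF(2)$-linear map $\GF(2)^n\to\GF(2)^I$ that retains the coordinates indexed by $I$. The key observation is that, for $\ba\in\pr_I\rel$, once one extension $\bd\in\rel$ with $\pr_I\bd=\ba$ is chosen, the full set of extensions of $\ba$ is exactly $\bd\op K$, where $K=\{u\in U\mid \pr_I u=\mathbf{0}\}$ is the intersection of $U$ with the coordinate subspace that vanishes on $I$. Indeed, a tuple $\be\in\rel$ satisfies $\pr_I\be=\ba$ if and only if $\be\op\bd\in U$ and $\pr_I(\be\op\bd)=\mathbf{0}$, that is, if and only if $\be\op\bd\in K$. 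Here $\bd$ serves only to locate the coset, and an extension $\bd$ exists precisely because $\ba$ lies in $\pr_I\rel$.

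Since $K$ is defined purely from $U$ and $I$ and does not depend on $\ba$, every nonempty fiber of $\pr_I$ restricted to $\rel$ is a full coset of $K$ and hence has exactly $|K|$ elements, which is the assertion of the lemma. There is no substantive obstacle beyond bookkeeping: the only point worth checking carefully is that each fiber is an entire coset of $K$ rather than merely a subset of one, and this follows because for every $k\in K$ the tuple $\bd\op k$ again lies in $\rel$ and has $I$-projection $\ba$.
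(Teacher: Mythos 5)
Your proof is correct and is essentially the paper's own argument: the paper writes $\rel$ as the solution set of $A\cdot\bx=\bc$, splits $A=[A_1\mid A_2]$ along $I$, and notes that the extensions of $\ba$ form the solution set of the consistent system $A_2\cdot\bx^2=\bc\op(A_1\cdot\ba)$, whose cardinality is independent of the right-hand side --- which is exactly your ``fibers are cosets of $K=U\cap\ker\pr_I$'' observation in matrix form. The paper even prefaces the lemma by remarking that sets of extensions of tuples are cosets of the same vector subspace, which is precisely the structural statement you spell out.
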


\begin{proof}
Let $\rel$ be the set of solutions of a system of linear equations 
$A\cdot\bx=\bc$, where $A$ is a $\ell\tm n$-matrix over $GF(2)$,
$\bx=(\vc xn)^\top$, and $\bc\in\{0,1\}^\ell$. Without loss of 
generality $I=[k]$. Then $A$ can be represented as 
$A=[A_1\mid A_2]$, where $A_1$ is a $\ell\tm k$-matrix and $A_2$ 
is a $\ell\tm(n-k)$-matrix; $\bx$ can be represented as 
$\bx=(\bx^1,\bx^2)^\top$, where $\bx^1=(\vc xk)$, 
$\bx^2=(x_{k+1}\zd x_n)$. Fix $\ba\in\pr_{[k]}\rel$ and set 
$\bc_\ba=\bc\oplus (A_1\cdot\ba)$. The set of extensions of 
$\ba$ is the set of solutions of the system 
$A_2\cdot\bx^2=\bc_\ba$. Clearly, the number of solutions this 
system does not depend on $\ba$, provided the system is consistent.
\end{proof}

\begin{lemma}\label{lem:linear-max-clones}
Let $\Gm\sse IL_2$. Then $\Gm$ is a max-co-clone if and only if it is a co-clone.
\end{lemma}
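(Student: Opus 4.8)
The plan is to prove both directions of the equivalence. For a set $\Gm\sse IL_2$ of affine relations, I want to show that $\Gm$ is a max-co-clone if and only if it is a co-clone. The key structural fact is Lemma~\ref{lem:rectangularity}: for any affine relation $\rel(\vc xn,\vc ym)$, every tuple $\ba$ that is extendable to a tuple in $\rel$ has \emph{the same} number $M$ of extensions. This is the crux that makes max-quantification collapse onto ordinary existential quantification in the affine world.

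First I would establish that for affine relations, $\mex$ coincides with $\ex$. Suppose $\rel(\vc xn,\vc ym)$ is affine and let $\relo(\vc xn)=\mex(\vc ym)\rel(\vc xn,\vc ym)$. By Lemma~\ref{lem:rectangularity}, applied with $I=[n]$, every tuple $\ba\in\pr_{[n]}\rel$ has exactly $M$ extensions into $\rel$, where $M$ is a single fixed number depending only on $\rel$. Since $M$ is then automatically the maximal number of extensions, the condition ``$\ba$ has the maximal number of extensions'' is equivalent to ``$\ba$ has at least one extension,'' i.e.\ $\ba\in\pr_{[n]}\rel$. Therefore $\mex(\vc ym)\rel(\vc xn,\vc ym)=\ex(\vc ym)\rel(\vc xn,\vc ym)$ whenever $\rel$ is affine.

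Next I would push this through the closure operations. For the forward direction, suppose $\Gm\sse IL_2$ is a co-clone; I must show it is closed under max-implementation. The ordinary co-clone $IL_2$ of all affine relations is closed under manipulation of variables, conjunction, and existential quantification, and an affine relation's projection is again affine, so $\Gm\sse IL_2$ stays inside the affine world under these operations. Given $\rel\in\mang\Gm$, by Lemma~\ref{lem:max-clones-implementations} it is obtained by manipulations with variables, conjunctions, and max-quantifications applied to relations of $\Gm$. Because intermediate relations remain affine (conjunction and variable manipulation preserve affineness, and by the displayed equality each $\mex$ step equals the corresponding $\ex$ step), every max-quantification can be replaced by an ordinary existential quantification producing the very same relation. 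Hence $\rel$ lies in the ordinary co-clone generated by $\Gm$, which equals $\Gm$; so $\Gm$ is a max-co-clone. For the converse direction, if $\Gm\sse IL_2$ is a max-co-clone, then since existential quantification $\ex$ is the special case $\mexe$ on affine relations (the $k=|D|=2$ count being irrelevant: having the maximal number of extensions equals having at least one when all counts are equal), $\Gm$ is closed under $\ex$, conjunction, and variable manipulation, hence is a co-clone.

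The main obstacle I anticipate is making the ``intermediate relations stay affine'' argument airtight when max-quantification is applied to a formula rather than to a single relation. In Lemma~\ref{lem:max-clones-implementations}, case (3), a single $\mex$ over a block of variables may be realized by a formula $\Psi$ that conjoins several copies of $\Phi$; I need that conjunctions of affine relations are affine (they are, being intersections of solution sets of linear systems) and that the renaming of variables preserves affineness (it does). The delicate point is the equality $\mex=\ex$ must be invoked at \emph{each} intermediate stage, which requires that the relation being quantified is affine \emph{before} the quantifier is applied; an induction on the structure of the defining formula, with the inductive invariant ``the relation defined so far is affine,'' handles this cleanly. Once that invariant is in place, the collapse $\mex=\ex$ is immediate from Lemma~\ref{lem:rectangularity}, and both inclusions follow.
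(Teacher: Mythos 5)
Your proof is correct and takes essentially the same route as the paper: both arguments rest on Lemma~\ref{lem:rectangularity} to conclude that on affine relations max-implementation coincides with a sequence of ordinary existential quantifications, which makes the two closure conditions equivalent. The paper states this collapse in a single sentence; your induction ensuring that intermediate relations remain affine (so the collapse can be invoked at every stage) is precisely the detail the paper leaves implicit.
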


\begin{proof}
Lemma~\ref{lem:rectangularity} implies that for any ($n$-ary) relation 
$\rel\in IL$ and any set $J=\{\vc ik\}\sse[n]$ the max-implementation 
$\mex (x_{i_1}\zd x_{i_k})$ is equivalent to a sequence of ordinary existential 
quantifiers $\ex x_{i_1}\ldots \ex x_{i_k}$.
\end{proof}

\subsection{Monotone relations}

Recall that a relation is said to be monotone if it is invariant with respect to $\meet$ or 
$\join$. In this section we consider relations invariant under $\join$. 
A proof in the case of relations invariant under $\meet$ is similar. 
A monotone relation is called \emph{nontrivial} if it does not belong to $IR_2$.

\begin{lemma}\label{lem:monotone-generation}
Let $\rel$ be a nontrivial relation invariant under $\join$. Then either 
$\IMP\in\mang\rel$, or $\OR\in\mang\rel$. In particular, if the all-zero 
tuple belongs to $\rel$ then $\IMP\in\mang\rel$.
\end{lemma}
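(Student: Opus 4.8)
The plan is to read off the structure of $\rel$ from its invariance under $\join$: such a relation is the solution set of a conjunction of clauses each having at most one negated variable (a dual-Horn formula), and I will fix such a representation by its prime implicates. The organizing remark is a dichotomy keyed to the all-zero tuple: since $\ba_\eps$ makes every negated literal true, it satisfies every clause containing a negated variable, so $\ba_\eps\in\rel$ precisely when no prime implicate of $\rel$ is purely positive. Before starting I would perform the harmless reductions that stay inside $\ang\rel\sse\mang\rel$ and preserve invariance under $\join$: delete coordinates that are constant on $\rel$ (a constant coordinate has a one-element fibre, so $\mex$ over it agrees with projection) and identify coordinates that are forced equal. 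These strip away exactly the trivial part, so nontriviality survives; and if $\ba_\eps\in\rel$ originally then the reduced relation still contains it.

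For the ``in particular'' claim, assume $\ba_\eps\in\rel$; then every prime implicate carries its unique negated variable, and nontriviality leaves a genuine clause $\neg x_j\join x_{i_1}\join\ldots\join x_{i_k}$ with $k\ge1$. I would choose the negated coordinate $j$ so that no implication $x_{j'}\to x_j$ ($j'\ne j$) is forced by $\rel$; such a $j$ exists because starting from any negated variable and descending along forced binary implications (each $\neg x_{j'}\join x_j$ being itself a prime implicate once constant coordinates are gone) terminates at a $\preceq$-minimal coordinate that is still a negated variable. With this choice the single substitution sending $x_j$ to a fresh variable $v$ and \emph{all other} coordinates to one fresh variable $u$ yields exactly $\IMP(v,u)$: the all-zero and all-ones tuples of the reduced relation give $(0,0)$ and $(1,1)$; the tuple that is $1$ exactly in coordinate $j$ violates the chosen clause and so removes $(1,0)$; and its complement lies in $\rel$ because $j$ has no incoming implication, supplying $(0,1)$. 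Hence $\IMP\in\ang\rel\sse\mang\rel$ by manipulation of variables alone.

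When $\ba_\eps\notin\rel$ (applied to the reduced relation) there is a purely positive prime implicate $x_{i_1}\join\ldots\join x_{i_k}$; writing $S=\{i_1\zd i_k\}$, invariance under $\join$ together with primality forces $\pr_S\rel=\OR^{k}$ (it is $\join$-closed, excludes the all-zero tuple on $S$, and contains every unit vector on $S$ by primality, hence all nonzero tuples), and identifying $S$ down to two groups turns $\OR^k$ into $\OR$. The genuine obstacle is eliminating the coordinates outside $S$ \emph{inside} $\mang\rel$: because $\mex$ retains a tuple only when its number of extensions is maximal rather than merely positive, it is not existential quantification, so coordinates cannot be projected away for free, and pure identification may shrink $\pr_S\rel$ below $\OR^k$. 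I would resolve this exactly as the proofs of Lemmas~\ref{lem:NEQ-generation} and~\ref{lem:out-max-generation} do — reduce first to a fully listed small relation and only then apply $\mex$ to project a coordinate on which all surviving tuples have the same (indeed, a single) number of extensions, so that on that concrete instance $\mex$ coincides with projection. The Boolean fact that a one-variable $\mex$ is always either $\exists$ or $\forall$ (for $\OR^k$ the $\forall$ form already collapses $\OR^k$ to $\OR^{k-1}$) keeps these steps transparent, and any constants $\dl_0,\dl_1$ a branch needs are produced by applying $\mex$ to a coordinate with fibres of unequal size, as in Lemma~\ref{lem:simple-implementations}. The remaining effort is the routine, if branch-heavy, checking that each case lands exactly on $\IMP$ or on $\OR$.
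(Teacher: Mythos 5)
Your first case (the ``in particular'' claim) is complete and correct, though it takes a different route from the paper's: instead of your prime-implicate machinery and the descent to a coordinate with no incoming forced implication, the paper simply observes (via Post's classification) that a nontrivial monotone relation cannot be self-complement, picks a tuple $\ba\in\rel$ whose complement is missing, and identifies the $0$-block and the $1$-block of $\ba$ to land on $\IMP$ directly (the all-one tuple, needed for $(1,1)$, is present after the same reduction you perform). Both are pure identification arguments; yours is sound --- the descent terminates because two-way implications were identified away, and its endpoint still occurs negatively in a genuine prime implicate, e.g.\ the binary implication used in the last step of the descent --- just longer.

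The case $\ba_\eps\notin\rel$, however, contains a genuine gap, sitting exactly where you yourself place the ``genuine obstacle''. Having obtained $\pr_S\rel=\OR^k$, you must realize $\OR$ (or $\IMP$) \emph{inside} $\mang\rel$ using only identification, conjunction with what is already available (here only $\dl_1(x)=\rel(x\zd x)$), and $\mex$; your resolution --- ``reduce first to a fully listed small relation and only then apply $\mex$'' --- restates the goal rather than achieving it, since producing such a small relation is precisely what identification and $\mex$ do not obviously allow. The natural implementations fail: for $\rel(x_1,x_2,y)=(x_1\join x_2)\meet\IMP(y,x_1)$ one has $\pr_{\{1,2\}}\rel=\OR$, yet the all-ones slice $\mex y(\rel\meet\dl_1(y))$ equals $\{(1,0),(1,1)\}$, the unrestricted $\mex y\,\rel$ gives the same relation, and identifying $y$ with $x_2$ also destroys the projection; only identifying $y$ with $x_1$, i.e.\ exploiting the forced implication $y\to x_1$, recovers $\OR$. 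That is the idea your sketch omits and the paper supplies: pass to a nontrivial relation of \emph{minimal arity} in $\mang\rel$ and split on the implication order $\le_\rel$. If no implication exists, minimality forces every slice $\mex x_i(\rel\meet\dl_1(x_i))$ to be a trivial relation with no coordinate pinned to $1$, hence every unit vector lies in $\rel$ and join-closure gives $\rel=\OR^n$; if an implication $x_i\le_\rel x_j$ exists, a three-block identification guided by witness tuples $\ba,\bb,\bc$ (with $\bc$ the all-one tuple) followed by a single $\dl_1$-slice yields $\IMP$ --- note that in this branch the output is $\IMP$, not $\OR$, so an argument aimed solely at extracting $\OR^k$ from $S$ cannot be completed as you describe. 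The analogy with Lemmas~\ref{lem:NEQ-generation} and~\ref{lem:out-max-generation} does not substitute for these steps: there the identifications are driven by explicit witness tuples whose membership pattern in the identified relation is known, and your sketch specifies no such witnesses here.
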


\begin{proof}
 Observe that $\rel$ is not self-complement, because as it follows from 
\cite{Post41} (see also Fig.~\ref{fig:post}) all self complement monotone relations are trivial. Also if 
the all-one tuple does not belong to $\rel$, since $\rel$ is invariant under 
$\join$, some variables of $\rel$ equal 0 in all tuples from $\rel$. 
Such variables can be quantified away, and the resulting relation is nontrivial 
as $\rel$ is nontrivial. We may assume the all-one tuple is in $\rel$. 
 
Suppose first that the all-zero tuple belongs to $\rel$. Therefore there is a 
tuple $\ba\in\rel$ such that its complement does not belong to $\rel$. After 
a suitable rearrangement of variables $\ba=(0\zd0,1\zd1)$. Identify 
variables that take 1 in $\ba$ and also variables that take 0 in $\ba$. 
The resulting relation is $\IMP$.

Suppose now that the all-zero tuple does not belong to $\rel$. Then 
$\dl_1(x)=\rel(x\zd x)$. We also assume that $\rel$ is a nontrivial relation 
of the minimal arity from $\mang\rel$. Let $\vc xn$ be the variables $\rel$ 
depends on. We introduce a partial order on $[n]$ as follows: $i\le_\rel j$ 
iff for any $\ba\in\rel$ $\ba[i]=1$ implies $\ba[j]=1$. If $x_i\le_\rel x_j$ 
for no $i,j\in[n]$, then for any $i\in[n]$ $\rel'=\mex x_i(\rel(\vc xn)\meet\dl_1(x_i))$ 
is a trivial relation, none of its projections equal $\{1\}$, and therefore the 
all-zero tuple belongs to $\rel'$. Hence $\ba_{\{i\}}\in\rel$ where $\ba_{\{i\}}[i]=1$ 
and $\ba_{\{i\}}[j]=0$ for $j\ne i$. Since $\rel$ is invariant under $\join$, 
this implies that $\rel=\OR^n$, and $\OR\in\mang\rel$ by 
Lemma~\ref{lem:simple-implementations}(4).

Next, consider the case when $x_i\le_\rel x_j$ for some $i,j\in[n]$. 
This means there are tuples $\ba,\bb,\bc\in\rel$ such that $\ba[i]=\ba[j]=0$ 
(since the projection of $\rel$ on each variable is $\{0,1\}$), $\bb[i]=0$, 
$\bb[j]=1$ (due to the minimality of $\rel$, there must be a tuple $\bb$ 
with $\bb[i]\ne\bb[j]$), and $\bc$ is the all-one tuple, in particular 
$\bc[i]=\bc[j]=1$. Moreover, as $\rel$ is invariant under $\join$, we may 
assume that $\bb[\ell]=1$ whenever $\ba[\ell]=1$.
After rearranging variables these tuples can be represented as follows
$$
\begin{array}{r|ccc|l}
\ba&0\ldots0&0\ldots0&1\ldots1&\in\rel\\
\bb&0\ldots0&1\ldots1&1\ldots1&\in\rel\\
\bc&1\ldots1&1\ldots1&1\ldots1&\in\rel\\
\hline
&x\ldots x&y\ldots y&z\ldots z
\end{array}
$$
Denote by $\rel'$ the relation obtained from $\rel$ by identifying variables as 
shown in the last row of the table. Relation $\rel'$ contains tuples 
$\ba'=(0,0,1),\bb'=(0,1,1),\bc'=(1,1,1)$. Observe that for no 
$\bd\in\rel'$ we have $\bd[1]=1$ and $\bd[2]=0$. Therefore 
$\IMP(x,y)=\mex u(\rel'(x,y,u)\meet\dl_1(u))$. 
\end{proof}

We first study max-co-clones not containing $\OR$. By 
Lemma~\ref{lem:simple-implementations}(1) and \cite{Creignou08:plain} 
(see also Table~\ref{tab:plain-post}) $\mang\IMP=IM_2$.

\begin{lemma}\label{lem:IM2-clone}
$IM_2$, $IR_2$, $IR_0$, $IR_1$ are max-co-clones.
\end{lemma}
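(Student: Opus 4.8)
The plan is to use that each of $IM_2,IR_2,IR_0,IR_1$ is a Boolean co-clone (from \cite{Post41}), hence already contains the equality relation and is closed under conjunction, manipulation of variables, and ordinary existential quantification. Thus the only thing left to verify, in each case, is closure under max-implementation: for a relation $\rel$ in the co-clone I must show that $\mex(\vc ym)\rel(\vc xn,\vc ym)$ lies in it again. I will write $f(\ba)=|\{\bb\mid(\ba,\bb)\in\rel\}|$ for the number of extensions of an assignment $\ba$ to $\vc xn$, so that $\mex(\vc ym)\rel=\{\ba\mid f(\ba)=M\}$ with $M=\max_\ba f(\ba)$, and I will compare this set with the projection $\ex(\vc ym)\rel=\{\ba\mid f(\ba)>0\}$, which is automatically in the co-clone.

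For $IR_2,IR_0,IR_1$ the idea is a rectangularity argument. Using the plain bases $\{\EQ,\dl_0,\dl_1\}$, $\{\EQ,\dl_0\}$, $\{\EQ,\dl_1\}$ from Table~\ref{tab:plain-post}, every relation in these co-clones is a conjunction of unit constraints and equalities, so for any split of the coordinates into $\vc xn$ and $\vc ym$ the set of extensions of any consistent $\ba$ is obtained by freely and independently choosing a value for each equality class consisting solely of $\vc ym$-variables that is not pinned to a constant. The number of such classes does not depend on $\ba$, so $f(\ba)=2^t$ for a fixed $t$ whenever $f(\ba)>0$. Consequently $M=2^t$ and $\mex(\vc ym)\rel=\{\ba\mid f(\ba)>0\}=\ex(\vc ym)\rel$, which is in the co-clone. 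This settles $IR_2$, $IR_0$, and $IR_1$ at once.

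The hard case, and the real obstacle, is $IM_2$. Here the plain basis $\{\IMP,\dl_0,\dl_1\}$ shows that a relation $\rel\in IM_2$ is exactly a conjunction of implications $x_i\to x_j$ and unit constraints, i.e. $IM_2$ is precisely the set of relations closed under $\meet$ and $\join$ (the lattice-closed relations, $\Inv\{\meet,\join\}$); so it suffices to prove that $\mex(\vc ym)\rel$ is again closed under $\meet$ and $\join$. The key step is that the extension-count function $f$ is log-supermodular, that is $f(\ba\meet\ba')\cdot f(\ba\join\ba')\ge f(\ba)\cdot f(\ba')$. This holds because the indicator of a lattice-closed relation is log-supermodular on $\{0,1\}^{n+m}$, and marginalizing a nonnegative log-supermodular function over the $\vc ym$-coordinates preserves log-supermodularity, a standard consequence of the Ahlswede--Daykin (four functions) theorem, the same FKG-type machinery underlying \cite{Bulatov11:log-supermodular}. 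Granting this, take $\ba,\ba'\in\mex(\vc ym)\rel$, so $f(\ba)=f(\ba')=M$; then $f(\ba\meet\ba')f(\ba\join\ba')\ge M^2$, while each factor is at most $M$, forcing $f(\ba\meet\ba')=f(\ba\join\ba')=M$. Hence $\ba\meet\ba',\ba\join\ba'\in\mex(\vc ym)\rel$, so the relation is lattice-closed and lies in $IM_2$. The main work is thus isolated in establishing the log-supermodularity of the marginal $f$; everything else is bookkeeping with the plain bases.
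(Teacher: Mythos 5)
Your proposal is correct and takes essentially the same route as the paper: for $IM_2$ the paper likewise identifies $IM_2$ with the relations invariant under $\meet$ and $\join$ (equivalently, log-supermodular relations), proves that the extension-counting marginal $f$ is log-supermodular by applying the Ahlswede--Daykin four-functions theorem, and then uses exactly your maximality argument to conclude that $\mex(\vc ym)\rel$ is again closed under $\meet$ and $\join$. The only difference is cosmetic: the paper dismisses $IR_2$, $IR_0$, $IR_1$ as straightforward since these co-clones contain essentially only unary relations, whereas you spell out the underlying rectangularity argument via the plain bases.
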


\begin{proof}
Since $IR_2$, $IR_0$, $IR_1$ essentially contain only unary relations, the lemma 
for these co-clones is straightforward.

For $IM_2$ the result actually follows from Lemma~5 of 
\cite{Bulatov11:log-supermodular}. However, as \cite{Bulatov11:log-supermodular} 
uses a different framework, we give a short proof of this result here. Our
proof can be derived from the one from \cite{Bulatov11:log-supermodular}.
Observe first that $\IMP$ satisfies the property of log-supermodularity. 
A function $f\colon\{0,1\}^n\to\mathbb R$ is said to be \emph{log-supermodular} 
if for any $\ba,\bb$
$$
f(\ba)\cdot f(\bb)\le f(\ba\join\bb)\cdot f(\ba\meet\bb).
$$
Here $\meet$ and $\join$ denote componentwise conjunction and disjunction. 
This definition can be extended to relations if they are treated as predicates, 
that is, functions with values $0,1$. As is easily seen, a relation is 
log-supermodular if and only if it is invariant under $\meet$ and $\join$. 
First we show that if $\Gm$ is a set of log-supermodular relations then every 
relation from $\mang\Gm$ is log-supermodular. The property of 
log-supermodularity is obviously preserved by manipulations with variables 
and conjunction, because it is equivalent to the existence of certain polymorphisms. 
Suppose $\rel(\vc xn,\vc ym)$ is log-supermodular and $\relo(\vc xn)
=\mex(\vc ym)\rel(\vc xn,\vc ym)$. We associate every tuple 
$(\ba,\bb)\in\{0,1\}^{n+m}$ with the set of ones in this tuple, and 
therefore can view $\rel$ as a function on the power set of $[n+m]$. 
Take $\ba,\ba'\in\{0,1\}^n$ and prove that
$\relo(\ba)\cdot\relo(\ba')\le\relo(\ba\join\ba')\cdot\relo(\ba\meet\ba')$. 
Let $A$ be the set of tuples of the form $(\ba,\bb)\in\{0,1\}^{n+m}$ and $A'$ the 
set of tuples of the form $(\ba',\bb)\in\{0,1\}^{n+m}$ viewed as subsets of $[n+m]$. 
Also, let $\rel(C)=\sum_{(\bc,\bd)\in C}\rel(\bc,\bd)$ for $C\sse[n+m]$ 
and $f(\vc xn)=\sum_{\vc ym}\rel(\vc xn,\vc yn)$. Denote by $A\join A'$ 
and $A\meet A'$ the sets $A\join A' = \{\bc\join \bc'\mid \bc\in A \text{ and } \bc'\in A'\}$
and $A\meet A' = \{\bc\meet \bc'\mid \bc\in A \text{ and } \bc'\in A'\}$. 
Note that $f(\ba\join \ba')= \rel(A\join A')$ and $f(\ba\meet \ba′) = \rel(A\meet A')$. 
Since $\rel$ is log-supermodular, we know that 
$\rel(\bc,\bd)\cdot\rel(\bc',\bd')\le \rel(\bc\join\bc',\bd\join\bd')\cdot 
\rel(\bc\meet\bc',\bd\meet\bd')$ for all $(\bc,\bd),(\bc',\bd')\in\{0, 1\}^{n+m}$. 
Thus, applying the Ahlswede-Daykin Four-Functions Theorem 
\cite{Alshwede78:inequality} with $\al = \beta =\gm = \dl = \rel$,
\begin{equation}
f(\ba)\cdot f(\ba')=\rel(A)\cdot\rel(A')\le\rel(A\join A')\cdot\rel(A\meet A')
=f(\ba\join\ba')\cdot f(\ba\meet\ba').
\label{equ:lsm}
\end{equation}

Now suppose $\ba,\ba'\in\relo$. This means that $f(\ba)=f(\ba')$ and this 
number is the maximal number of extensions of a tuple from $\{0,1\}^n$ 
to tuples from $\rel$. By (\ref{equ:lsm}) $f(\ba\join\ba'), 
f(\ba\meet\ba')\ne0$ and either $f(\ba\join\ba')\ge f(\ba)$ or 
$f(\ba\meet\ba')\ge f(\ba')$. However, as $f(\ba)$ is the maximal number 
of extensions, strict inequality is impossible, and we get 
$f(\ba\join\ba')= f(\ba\meet\ba')=f(\ba)$. Therefore $(\ba\join\ba'),
(\ba\meet\ba')\in\relo$, and so $\relo(\ba)\cdot\relo(\ba')\le
\relo(\ba\join\ba')\cdot\relo(\ba\meet\ba')$.

Thus $\mang{IM_2}$ contains only log-supermodular relations. However, 
as it was observed above, log-supermodularity of relations is equivalent to 
invariance under $\meet$ and $\join$. Since, $IM_2$ is the class of all 
relations invariant under this two operations, we have $\mang{IM_2}=IM_2$. 
\end{proof}

\begin{lemma}\label{lem:IMP+-generation}
Let $\rel\not\in IM_2$. Then $\mang{\rel,\IMP}=II_2$.
\end{lemma}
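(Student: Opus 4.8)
The plan is to reduce everything to Lemma~\ref{lem:IMP-generation}: I will show that $\mang{\rel,\IMP}$ contains one of $\NEQ$ or $\NAND$. Since $\IMP$ is already present and $\NEQ,\NAND$ are each a member of $\{\NEQ,\IMP,\OR,\NAND\}$ distinct from $\IMP$, Lemma~\ref{lem:IMP-generation} then immediately gives $\mang{\rel,\IMP}=II_2$. Note that by Lemma~\ref{lem:simple-implementations}(1) we have $\dl_0,\dl_1\in\mang\IMP\sse\mang{\rel,\IMP}$, so both constants are at our disposal throughout.

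By Lemma~\ref{lem:IM2-clone}, $IM_2$ is exactly the class of relations invariant under both $\meet$ and $\join$, so the hypothesis $\rel\notin IM_2$ says that $\rel$ fails to be invariant under $\meet$ or under $\join$. I first observe that complementation $\ba\mapsto\neg\ba$ is an automorphism of the lattice of max-co-clones: it commutes with conjunction and variable manipulation, preserves the number of extensions of every tuple (hence respects $\mex$), fixes $=_D$, sends $\IMP$ to $\IMP$ with its arguments reversed, and interchanges $\meet$ with $\join$; moreover it fixes both $IM_2$ and $II_2$. Consequently I may assume without loss of generality that $\rel$ is not invariant under $\join$, the remaining case following by applying this symmetry to $\rel^c$ (which is then not invariant under $\join$, with $\mang{\rel,\IMP}=\mang{\rel^c,\IMP}^c=II_2^c=II_2$).

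So fix $\ba,\bb\in\rel$ with $\ba\join\bb\notin\rel$, and partition the coordinates of $\rel$ according to the pattern $(\ba[i],\bb[i])$ into blocks $X,Y,Z,T$ corresponding to $(0,0),(0,1),(1,0),(1,1)$, exactly as in the tables of Lemma~\ref{lem:out-max-generation}. Blocks $Y$ and $Z$ are nonempty, since otherwise $\ba\join\bb$ would equal $\ba$ or $\bb$ and hence lie in $\rel$. Identifying all variables inside each block yields a relation $\rel'(x,y,z,t)$ whose membership is determined by the block-constant tuples of $\rel$; in particular $(0,0,1,1),(0,1,0,1)\in\rel'$ (witnessed by $\ba,\bb$) while $(0,1,1,1)\notin\rel'$ (witnessed by $\ba\join\bb$). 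If $X$ or $T$ happens to be empty one simply drops the corresponding variable.

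Finally I set
$$
\rel''(y,z)=\mex(x,t)\bigl(\rel'(x,y,z,t)\meet\dl_0(x)\meet\dl_1(t)\bigr).
$$
For each fixed $(y,z)$ the constraints $\dl_0(x)$ and $\dl_1(t)$ pin $x,t$ uniquely, so there is at most one extension; hence the maximal number of extensions is $1$ (attained, e.g., at $(y,z)=(0,1)$), and $\mex$ behaves here exactly like the existential projection $\{(y,z)\mid(0,y,z,1)\in\rel'\}$. Therefore $(0,1),(1,0)\in\rel''$ and $(1,1)\notin\rel''$, so $\rel''$ is $\NEQ$ or $\NAND$ according to whether $(0,0,0,1)\in\rel'$, and in either case $\mang{\rel,\IMP}=II_2$ by Lemma~\ref{lem:IMP-generation}. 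The construction is otherwise routine; the only points requiring genuine care are verifying that $\mex$ against the pinned constants really implements existential projection (which rests entirely on the ``at most one extension'' observation) and setting up the complementation symmetry cleanly so that the $\meet$-case follows from the $\join$-case.
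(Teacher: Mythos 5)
Your proof is correct, and it takes a genuinely different --- and substantially shorter --- route than the paper's. The paper organizes its argument around which of $\meet,\join$ the relation $\rel$ \emph{does} preserve: when $\rel$ preserves neither, it invokes Lemma~\ref{lem:out-max-generation}; when $\rel$ preserves $\join$ (so that $\{\rel,\IMP\}$ is monotone and Lemma~\ref{lem:out-max-generation} is unavailable), it runs a lengthy case analysis on whether $\rel$ is 2-decomposable in order to implement $\OR$, and only then quotes Lemma~\ref{lem:IMP-generation}. You instead organize the argument around a \emph{violation} of $\join$-invariance: the complementation symmetry $\ba\mapsto\neg\ba$ (which indeed commutes with conjunction, variable manipulation and $\mex$, fixes $=_D$ and $II_2$, and carries $\IMP$ to its transpose, hence preserves $\mang{\rel,\IMP}$ up to complement) lets you assume there exist $\ba,\bb\in\rel$ with $\ba\join\bb\not\in\rel$; identifying variables along the four blocks determined by the pairs $(\ba[i],\bb[i])$ and pinning the $(0,0)$- and $(1,1)$-blocks with $\dl_0,\dl_1\in\mang\IMP$ (Lemma~\ref{lem:simple-implementations}(1)) produces a binary relation containing $(0,1)$ and $(1,0)$ but not $(1,1)$, hence equal to $\NEQ$ or $\NAND$, after which Lemma~\ref{lem:IMP-generation} finishes. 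Your pinning step is sound: once $x$ and $t$ are forced by the constants, every pair $(y,z)$ has at most one extension and the conjunction is nonempty, so $\mex$ degenerates to ordinary existential projection --- the same device the paper itself uses in Lemmas~\ref{lem:simple-implementations} and~\ref{lem:nand-generation}. The net effect is that your proof needs neither Lemma~\ref{lem:out-max-generation} nor the 2-decomposability analysis; the only inputs are Lemma~\ref{lem:simple-implementations}(1), Lemma~\ref{lem:IMP-generation}, and the description of $IM_2$ as the class of relations invariant under both $\meet$ and $\join$ (note this fact is stated inside the \emph{proof} of Lemma~\ref{lem:IM2-clone}, not in its statement, so you should cite it as the classical characterization of $IM_2$ rather than as that lemma). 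What the paper's longer route buys is only a little extra information: in the $\join$-closed case it implements the monotone relation $\OR$ directly, without passing through complementation; for the lemma as stated, your argument suffices.
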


\begin{proof}
If $\rel$ is not invariant under $\join$ and $\meet$ then the result follows by 
Lemma~\ref{lem:out-max-generation}, since $\IMP$ is not affine or 
self-complement. Suppose $\rel$ is invariant with respect $\join$. 

Recall that a relation $\relo(\vc xn)$ is called \emph{2-decomposable} if any 
tuple $\ba$ such that $(\ba[i],\ba[j])\in\pr_{\{i,j\}}\relo$ for all $i,j\in[n]$ belongs to $\relo$. 

\smallskip

{\sc Case 1.}
$\rel$ is not 2-decomposable.

\smallskip

Let $I\sse[n]$ be a minimal set such that $\pr_I\rel$ is not 2-decomposable, 
clearly, $|I|\ge3$. Let $\rel'=\pr_I\rel$. There is $\ba\in\{0,1\}^{|I|}$ such 
that for any $i\in I$ $\ba_i\in\rel'$, where $\ba_i$ denotes the tuple such that 
$\ba_i[i]\ne\ba[i]$ and $\ba_i[j]=\ba[j]$ for $i\ne j$. Choose $i_1,i_2,i_3\in I$, and set 
$I-\{i_1,i_2,i_3\}=\{i_4\zd i_k\}$ and 
$$
\relo=\mex x_{i_4}\ldots \mex x_{i_k} (\rel(\vc xn)\meet\dl_{\ba[i_4]}(x_{i_4})
\meet\ldots\meet\dl_{\ba[i_k]}(x_{i_k})).
$$
As is easily seen, $\relo$ is not 2-decomposable, and moreover, 
$\pr_{\{i_1,i_2,i_3\}}\relo$ is not 2-decomposable. Let 
$\relo'=\pr_{\{i_1,i_2,i_3\}}\relo$. There is $\ba\in\{0,1\}^3$ such that 
for any $i\in I$ $\ba_i\in\relo'$, where $\ba_i$ denotes the tuple such 
that $\ba_i[i]\ne\ba[i]$ and $\ba_i[j]=\ba[j]$ for $i\ne j$. Observe that there are at 
most one 1 among components of $\ba$. Indeed, if, say, $\ba=(1,1,0)$ 
then $\ba=\ba_1\join\ba_2\in\relo'$. Suppose first that $\ba$ is the 
all-zero tuple. Then after rearranging variables these tuples can be 
represented as follows
$$
\begin{array}{r|ccccccccccc|l}
\arraycolsep-3pt
\ba_1&1&0&0&0\ldots0&0\ldots0&0\ldots0&1\ldots1&0\ldots0&1\ldots1&1\ldots1&1\ldots1&\in\rel\\
\ba_2&0&1&0&0\ldots0&0\ldots0&1\ldots1&0\ldots0&1\ldots1&0\ldots0&1\ldots1&1\ldots1&\in\rel\\
\ba_3&0&0&1&0\ldots0&1\ldots1&0\ldots0&0\ldots0&1\ldots1&1\ldots1&0\ldots0&1\ldots1&\in\rel\\
\hline
\ba&0&0&0&*&*&*&*&*&*&*&*&\not\in\rel\\
\hline
&x&y&z&t_1&t_2&t_3&t_4&t_5&t_6&t_7&t_8&
\end{array}
$$
Denote by $\relo''$ the relation obtained from $\relo$ by identifying variables as 
shown in the last row of the table. Then set 
$$
\rela(x,y,z,t,u,v)=\mex t_1\mex t_8 (\relo''(x,y,z,t_1,z,y,x,t,u,v,t_8)\meet
\dl_0(t_1)\meet\dl_1(t_8)).
$$
Relation $\rela$ contains tuples $\bb_1=(1,0,0,1,1,0),\bb_2=(0,1,0,1,0,1), 
\bb_3=(0,0,1,0,1,1)$ but does not contain $(0,0,0,a,b,c)$ for any $a,b,c\in\{0,1\}$. 
Next we set $\rela'(x,y,z)=\mex t,u,v (\rela(x,y,z,t,u,v)\meet\dl_1(t)\meet
\dl_1(u)\meet\dl_1(v))$. Since $\rela$ is invariant under $\join$, it contains 
$\bb_1\join\bb_2,\bb_2\join\bb_3,\bb_3\join\bb_1$, and therefore $\rela'$ 
contains tuples $(1,1,0),(1,0,1),(0,1,1),(1,1,1)$, but does not contain $(0,0,0)$. 
Let also $\rela''(x,y,z)=\rela'(x,y,z)\meet\rela'(z,x,y)\meet\rela'(y,z,x)$. As is 
easily seen $\rela''$ is either $\OR^3$ or $\{(1,1,0),(1,0,1),(0,1,1),(1,1,1)\}$. 
In the former case we are done, while in the latter case we just observe that 
$\OR(x,y)=\mex z(\rela''(x,y,z)\meet\dl_1(z))$.

Now suppose $\ba=(0,0,1)$. As before we can construct a relation $\rela$ such 
that $\bb_1=(0,0,0,1,1,1),\bb_2=(0,1,1,0,0,1),\bb_3=(1,0,1,0,1,0)$ belong to 
$\rela$, but  $(0,0,1,a,b,c)$ does not belong to $\rela$ for any $a,b,c\in\{0,1\}$. 
Since $\rel$ is invariant under $\join$ tuples 
$\bb_2\join\bb_1,\bb_3\join\bb_1,\bb_2\join\bb_3\join\bb_1$ also belong to 
$\rela$. Hence $(0,0,0,1),(0,1,1,1),$\lb$(1,0,1,1),(1,1,1,1)\in\rela'(x,y,z,t)
=\rela(x,y,z,t,t,t)$, and $(0,0,1,1)\not\in\rela'$. Therefore\lb
$\OR(x,y)=\mex z\mex t (\rela'(x,y,z,t)\meet\dl_1(z)\meet\dl_1(t))$.

\smallskip

{\sc Case 2.}
$\rel$ is 2-decomposable.

\smallskip

Since $\mang\IMP$ contains $IM_2$ and therefore all 2-decomposable relations 
whose binary projections are either trivial relations or $\IMP$, relation $\rel$ has 
to have a binary projection which is not one of them. As it and all its projections 
are invariant under $\join$, the only nontrivial binary projections it may have are 
$\IMP$ and $\OR$. Therefore for some $i,j\in[n]$ $\pr_{\{i,j\}}\rel=\OR$. 
There are $\ba,\bb,\bc\in\rel$ such that $\ba[i]=\bb[j]=0$ and 
$\ba[j]=\bb[i]=\bc[i]=\bc[j]=1$, but for no $\bd\in\rel$ $\bd[i]=\bd[j]=0$. 
Note also that $\bc$ can be replaced with $\bc\join\ba\join\bb$.
After rearranging variables these tuples can be represented as follows
$$
\begin{array}{r|ccccccc|l}
\ba&0&1&0\ldots0&0\ldots0&0\ldots0&1\ldots1&1\ldots1&\in\rel\\
\bb&1&0&0\ldots0&0\ldots0&1\ldots1&0\ldots0&1\ldots1&\in\rel\\
\bc&1&1&0\ldots0&1\ldots1&1\ldots1&1\ldots1&1\ldots1&\in\rel\\
\hline
\bd&0&0&*&*&*&*&*&\not\in\rel\\
\hline
&x&y&z_1\ldots z_1&z_2\ldots z_2&z_3\ldots z_3&z_4\ldots z_4&z_5\ldots z_5&
\end{array}
$$
Denote by $\rel'$ the relation obtained from $\rel$ by identifying variables as 
shown in the last row of the table. Then set 
$$
\relo(x,y,z)=\mex z_1\mex z_5 (\relo(x,y,z_1,z,x,y,z_5)\meet\dl_0(z_1)\meet\dl_1(z_5)).
$$
Relation $\relo$ contains tuples $(0,1,0),(1,0,0),(1,1,1)$, and $(1,1,0)$, as it is 
invariant under $\join$, but does not contain $(0,0,a)$ for any $a\in\{0,1\}$. 
Then $\OR(x,y)=\mex z(\relo(x,y,z)\meet\dl_0(z))$.
\end{proof}

Next we consider max-co-clones containing $\OR$, but not $\IMP$.

Let $\rel(\vc xn)$ be a relation. If $i,j\in[n]$ are such that $\ba[i]=\ba[j]$ for 
any $\ba\in\rel$, we write $i\sim_\rel j$. Clearly, $\sim_\rel$ is an equivalence 
relation on $[n]$; its class containing $i$ will be denoted by $S_\rel(i)$ or 
$S_\rel(x_i)$. Let also $O_\rel$ denote the set of variables $x_j$ such that 
there is $\bb\in\rel$ with $\bb[j]=1$. An $n$-tuple $\ba$ is said to be 
\emph{$\sim_\rel$-conforming} if (a) $\ba[i]=\ba[j]$ whenever $i\sim_\rel j$, 
and (b) $\ba[i]=0$ whenever $i\not\in O_\rel$. When considered ordered with 
respect to the natural component-wise order ($0\le 1$), $\sim_\rel$-conforming 
tuples form a poset isomorphic to $\{0,1\}^{k_\rel}$, where $k_\rel$ is the 
number of $\sim_\rel$-classes except for the class $[n]-O_\rel$. In what follows 
$\le$ and $<$ will denote relations on the set of $\sim_\rel$-conforming tuples 
for appropriate $\rel$. We say that a relation $\rel(\vc xn)$ satisfies the 
\emph{filter property} if for any $\ba\in\rel$ any $\sim_\rel$-conforming tuple $\ba'$ 
with $\ba\le\ba'$ belongs to $\rel$. The filter property implies that if $\rel$ is 
considered as a subset of the ordered set $\{0,1\}^{k_\rel}$, then it is an 
order filter in this set. In particular, it is completely determined by its minimal 
(with respect to $\le$) elements, or equivalently by the maximal elements not 
belonging to $\rel$. We say that  $\rel$ satisfies the \emph{$r$-filter property}, 
if it satisfies the filter property, and every maximal  tuple  not belonging to $\rel$ 
contains zeros in at most $r$ classes of $\sim_\rel$ from $O_\rel$.

\begin{lemma}\label{lem:IS-property}
(1) A relation $\rel$ belongs to $IS_{12}$ if and only if it satisfies the filter property.\\
(2) A relation $\rel$ belongs to $IS^r_{12}$ if and only if it satisfies the $r$-filter property.
\end{lemma}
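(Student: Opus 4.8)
The plan is to show that the relations with the filter property are exactly the conjunctions of positive clauses, forced zeros, and equalities, i.e.\ the relations of the $\join$-invariant co-clone whose plain basis (Table~\ref{tab:plain-post}) is $\{\EQ,\dl_0\}\cup\{\OR^\ell\mid\ell\in\nat\}$ --- this is $IS_{02}$, the co-clone relevant to the $\OR$-case treated in this section --- with the $r$-bounded basis $\{\EQ,\dl_0\}\cup\{\OR^\ell\mid\ell\le r\}$ giving $IS^r_{02}$. Working inside the conforming poset $\{0,1\}^{k_\rel}$, the filter property literally states that $\rel$ is an up-set there, hence is determined by its maximal non-members, and the heart of the argument is to make each maximal non-member correspond to a single $\OR$-clause whose arity is the number of zero-classes of that member.

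For the necessity direction I would take $\rel\in IS_{02}$, write $\rel=\bigwedge_t C_t$ with each $C_t$ a variable-manipulated copy of $\OR^\ell$, $\dl_0$, or $\EQ$, and verify the filter property directly: given $\ba\in\rel$ and a $\sim_\rel$-conforming $\ba'\ge\ba$, I check each conjunct at $\ba'$. A conjunct $\EQ(x_i,x_j)$ forces $i\sim_\rel j$, so $\ba'$ respects it by conformity; a conjunct $\dl_0(x_i)$ forces $x_i=0$ throughout $\rel$, whence $i\notin O_\rel$ and $\ba'[i]=0$ by conformity; and in an $\OR$-conjunct some literal is already $1$ in $\ba$, hence in $\ba'$ as $\ba\le\ba'$. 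Thus $\ba'\in\rel$. For the refinement, if $\rel\in IS^r_{02}$ then every $\OR$-conjunct has arity at most $r$, and a maximal non-member $\bb$ has at most $r$ zeros: the conjunct failing at the conforming tuple $\bb$ must be an $\OR$-clause $C$ (all $\EQ$- and $\dl_0$-conjuncts hold at conforming tuples), whose $\le r$ variables are all $0$ in $\bb$; if $\bb$ had a $0$ in some class of $O_\rel$ outside the support of $C$, raising that class to $1$ would keep $C$ violated, contradicting maximality. Hence every zero of $\bb$ lies in the $\le r$ classes supporting $C$.

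For sufficiency I would start from a relation $\rel$ with the filter property and reconstruct such a formula. Setting
$$
\Phi \;=\; \bigwedge_{i\notin O_\rel}\dl_0(x_i)\;\meet\;\bigwedge_{i\sim_\rel j}\EQ(x_i,x_j)\;\meet\;\bigwedge_{\bb}\OR\big(x_{j_1}\zd x_{j_p}\big),
$$
where the last conjunction runs over the maximal $\sim_\rel$-conforming tuples $\bb\notin\rel$ and $x_{j_1}\zd x_{j_p}$ are representatives of the $\sim_\rel$-classes in $O_\rel$ on which $\bb$ is $0$, I claim $\Phi$ defines $\rel$. Any tuple violating a $\dl_0$- or $\EQ$-conjunct is non-conforming and so lies outside $\rel$; a conforming tuple $\ba$ satisfies every $\OR$-conjunct exactly when $\ba\not\le\bb$ for all maximal non-members $\bb$, which by the up-set property (using finiteness) is precisely the condition $\ba\in\rel$. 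This places $\rel\in IS_{02}$; and if $\rel$ satisfies the $r$-filter property, each $\bb$ has at most $r$ zero-classes, so every $\OR$-conjunct has arity at most $r$ and $\rel\in IS^r_{02}$.

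I expect the main obstacle to be the bookkeeping around the conforming poset: one must keep the coordinates outside $O_\rel$ (the forced zeros, which sit outside the poset rather than as minimal elements of it) and the $\sim_\rel$-classes straight, so that the $\OR$-clauses are taken over class representatives and their arities are pinned to the number of zero-classes of each $\bb$. The single genuinely nontrivial step is the maximality argument in the necessity direction, which forces all zeros of a maximal non-member into the support of one violated clause and so ties the filter radius $r$ to the clause arity; everything else is a routine verification on the generators $\EQ,\dl_0,\OR^\ell$.
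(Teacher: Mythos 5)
Your proposal is correct and follows essentially the same route as the paper's own proof: the necessity direction checks each conjunct of a plain-basis representation at a conforming tuple $\ba'\ge\ba$, the sufficiency direction rebuilds a formula from the maximal conforming non-members with one $\OR$-clause per zero-class set, and the $r$-bound is obtained by the same maximality argument (raising a zero class of $O_\rel$ disjoint from the support of the violated clause keeps that clause violated, contradicting maximality). The one substantive point is the naming, and there you are right and the paper is internally inconsistent: the paper's proof asserts, citing Proposition~3 of \cite{Creignou08:plain}, that $\{\EQ,\dl_0,\dl_1\}\cup\{\OR^m\}$ is a plain basis of $IS_{12}$, whereas its own Table~\ref{tab:plain-post} attaches the $\OR$-basis to $IS_{02}$ and the $\NAND$-basis to $IS_{12}$; your relabeling to $IS_{02}$ (with $\dl_1=\OR^1$) matches the table and the actual mathematical content of the section ($\join$-invariant relations, positive clauses, up-sets), so the discrepancy you flag is the paper's slip, not a gap in your argument. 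The remaining differences are cosmetic: the paper includes an $\OR^{m_j}$-clause for every choice of class representatives and concludes $\relo=\rel$ by matching $O_\relo=O_\rel$ and the maximal non-members, where you take one representative clause per maximal non-member and verify pointwise that $\Phi$ defines $\rel$; and the paper treats the forced-ones set $W$ and $\dl_1$-violations as a separate case, which your $\OR^1$ convention subsumes. If you want full rigor you should also note the degenerate case $\rel=\emptyset$, where the top conforming tuple is a maximal non-member with no zero classes and the corresponding clause must be replaced by something like $\dl_0(x)\meet\OR^1(x)$ --- an edge case the paper elides as well.
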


\begin{proof}
(1) Suppose $\rel(\vc xn)\in IS_{12}$. Then by Proposition~3 of \cite{Creignou08:plain} 
the set $\EQ,\dl_0,\dl_1$ and $\OR^m$, $m\ge2$ is a plain basis of $IS_{12}$, and 
therefore $\rel$ can be represented by a conjunctive formula $\Phi$ containing variables $\vc xn$, 
relations $\EQ,\dl_0,\dl_1$, and $\OR^m$. Let $\ba\in\rel$, and let $\bb$ be a 
$\sim_\rel$-conforming tuple such that $\ba\le\bb$. We show that it belongs to $\rel$. 
Clearly, $\bb$ satisfies all the $\dl_1$ relations. Also, it satisfies all the $\dl_0$ 
relations, if $\dl_0(x_j)$ belongs to $\Phi$ then 
$j\not\in O_\rel$ and $\bb[j]=0$. Since $\bb$ contains 0 only in the positions $\ba$ 
does, every relation $\OR^m$ is satisfied by $\bb$. Finally, if $\EQ(x_{j_1},x_{j_2})$ 
belongs to $\Phi$, then $j_1\sim_\rel j_2$, therefore all the $\EQ$ relations remain 
satisfied by $\bb$.

Suppose now that $\rel(\vc xn)$ satisfies the filter property. Let $W,Z\sse[n]$ be the sets 
of variables such that for all $\ba\in\rel$ $\ba[i]=1$ (respectively, $\ba[i]=0$) for $i\in W$ 
($i\in Z$). Let also $\vc \ba\ell$ be the maximal tuples not from $\rel$. By $Z_j$ we 
denote the set of $i\in O_\rel$ such that $\ba_j[i]=0$. Suppose $Z_j$ contains elements from $m_j$ classes of $\sim_\rel$. We construct a formula $\Phi$ 
using variables $\vc xn$ and relations $\EQ,\dl_0,\dl_1,\OR^m$, and prove that 
it represents $\rel$. Formula $\Phi$ includes\\
(1) $\dl_0(x_i)$ for each $i\in Z$ and $\dl_1(x_i)$ for each $i\in W$;\\
(2) $\EQ(x_i,x_j)$ for any pair $x_i,x_j$, $i\sim_\rel j$;\\
(3) $\OR^{m_j}(x_{i_1}\zd x_{i_{m_j}})$ for any $\ba_j$, $j\in[\ell]$, and any $\vc i{m_j}$ such 
that $\vc i{m_j}$ belong to different $\sim_\rel$-classes from $Z_j$.\\
Let the resulting relation be denoted by $\relo$. By what is proved above $\relo$ 
satisfies the filter property.  It is straightforward that $O_\relo=O_\rel$ and the 
maximal tuples not in $\relo$ are the same as those of $\rel$.  Therefore $\relo=\rel$.

(2) Suppose first that $\rel$ satisfies the $r$-filter property. Then it can be represented 
by a formula $\Phi$ as in part (1) and for every relation $\OR^m$ used $m\le r$. 
Therefore $\rel\in IS^r_{12}$.

Let now $\rel(\vc xn)\in IS^r_{12}$, and therefore can be represented by a formula 
$\Phi$ in $\vc xn$, and relations $\EQ,\dl_0,\dl_1$, and $\OR^m$ for $m\le r$. We 
need to study the structure of maximal tuples from the complement of $\rel$. We 
use the notation from part (1). Let $\ba$ be such a tuple. It is $\sim_\rel$-conforming, 
so, $\ba[i]=0$ for all $i\in Z$, and $\ba[i]=\ba[j]$ for any $i\sim_\rel j$. This means 
that $\ba$ satisfies all the $\dl_0$ and $\EQ$ relations in $\Phi$. If $\ba$ violates a 
relation $\dl_1$ and there is $i\not\in W$ such that $\ba[i]=0$ then $\ba$ is not 
maximal in the complement of $\rel$. Therefore $\ba[i]=0$ if and only if $i\in W$, 
and $W$ is a single $\sim_\rel$-class.  Suppose $\ba$ violates a relation 
$\OR^m(x_{i_1}\zd x_{i_m})$, and let $D=S(i_1)\cup\ldots\cup S(i_m)$. If 
there is $i\in O_\rel-D$ such that $\ba[i]=0$ then the tuple $\bb$ given by 
$\bb[j]=1$ if $j\in S(i)$ and $\bb[j]=\ba[j]$ otherwise does not belong to 
$\rel$ and $\ba<\bb$, a contradiction. Therefore the set of zeros of any 
maximal tuple from the complement of $\rel$ spans at most $r$ 
classes of $\sim_\rel$, as required.
\end{proof}

Let $\Gm$ be a max-co-clone of monotone relations. By $\orr(\Gm)$ we denote the maximal 
$m$ such that $\OR^m\in\mang\Gm$.
If a maximal number $m$ does not exist we set $\orr(\Gm)=\infty$. 

\begin{lemma}\label{lem:nand-generation}
For any set $\Gm\sse IS_{12}$ of monotone relations 
$$
\mang\Gm=\mang{\{\OR^m\mid m\le\orr(\Gm)\}}\quad \text{or}\quad
\mang\Gm=\mang{\{\OR^m\mid m\le\orr(\Gm)\}}\cup\{\dl_0\}.
$$
\end{lemma}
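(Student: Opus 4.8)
Set $r=\orr(\Gm)$. The plan is to prove both inclusions of the asserted equality and to decide which of the two right-hand sides occurs according to whether $\dl_0\in\mang\Gm$. The inclusion from right to left is immediate: by the definition of $r$ we have $\OR^r\in\mang\Gm$ (all $\OR^m\in\mang\Gm$ if $r=\infty$), so Lemma~\ref{lem:simple-implementations}(4) yields $\OR^m\in\mang\Gm$ for every $m\le r$; hence $\mang{\{\OR^m\mid m\le r\}}\sse\mang\Gm$, and if in addition $\dl_0\in\mang\Gm$ then $\mang{\{\OR^m\mid m\le r\}\cup\{\dl_0\}}\sse\mang\Gm$. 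All the work is in the opposite inclusion.

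First I would show $\mang\Gm\sse IS_{12}$. Since $\Gm\sse IS_{12}$, by Lemma~\ref{lem:IS-property}(1) it suffices to verify that the filter property is preserved by conjunction and by max-implementation. Regarding a filter relation as an up-set on the poset of its $\sim_\rel$-conforming tuples, conjunction is intersection of up-sets and is again an up-set. For $\relo=\mex(\vc ym)\rel$ the point is that the number of extensions is monotone: if $\ba\le\ba'$ then any $\by$ with $(\ba,\by)\in\rel$ satisfies $(\ba',\by)\in\rel$, so the extension count does not decrease along the order, and therefore the set of tuples attaining the maximum number of extensions is an up-set. The only care needed is to collapse, before running this argument, those quantified variables that are forced or are $\sim_\rel$-identified with a surviving variable; after this collapse the situation is exactly the monotone one.

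For the remaining inclusion I would take an arbitrary $\rel\in\mang\Gm$, which now satisfies the filter property, and use the explicit representation from the proof of Lemma~\ref{lem:IS-property}(1): $\rel$ is the conjunction of equalities, of $\dl_1(x_i)$ over its forced-$1$ positions, of $\dl_0(x_i)$ over its forced-$0$ positions, and of one relation $\OR^{m_j}$ for each maximal tuple $\ba_j\notin\rel$, where $m_j$ counts the $\sim_\rel$-classes on which $\ba_j$ is $0$. It then suffices to place each conjunct in the target max-co-clone. Equality is always present. If $\rel$ has a forced-$0$ position then applying $\mex$ to all other variables of $\rel$ returns $\dl_0$, so $\dl_0\in\mang\Gm$ and we are in the second alternative where $\dl_0$ is available; if instead $\dl_0\notin\mang\Gm$ then no relation of $\mang\Gm$ has a forced-$0$ position, no $\dl_0$ conjunct occurs, and we are in the first alternative. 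Each $\OR^{m_j}$ I would extract from $\rel$ directly: identify the variables in each zero-class of $\ba_j$ to single variables $w_1\zd w_{m_j}$, pin every remaining variable to its value in $\ba_j$ (the $1$-entries by $\dl_1$, the $0$-entries by $\dl_0$), and $\mex$-quantify the pinned variables away. Maximality of $\ba_j$ guarantees that every tuple strictly above $\ba_j$ is in $\rel$ while $\ba_j$ is not, so the relation obtained in $w_1\zd w_{m_j}$ is precisely $\OR^{m_j}$. The pin to $1$ is legitimate because $\dl_1\in\mang\rel$: for a zero-class $C$ of $\ba_j$, $\mex$ of all other variables of $\rel$ gives $\dl_1$, since the completion of $\ba_j$ witnesses that setting $C$ to $1$ yields strictly more extensions than setting it to $0$ while monotonicity handles the rest. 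Thus $\OR^{m_j}\in\mang\rel\sse\mang\Gm$, whence $m_j\le r$ and $\OR^{m_j}$ is available (with $\OR^1=\dl_1$ when $m_j=1$). Consequently $\rel$ lies in $\mang{\{\OR^m\mid m\le r\}}$, respectively in $\mang{\{\OR^m\mid m\le r\}\cup\{\dl_0\}}$, and combining with the easy inclusion finishes the proof.

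I expect the extraction step and its prerequisites to be the main obstacle. One must ensure that $\dl_0$ and $\dl_1$ belong to $\mang\rel$ exactly in the cases where the representation of $\rel$ calls for them, and that the final $\mex$-quantification produces $\OR^{m_j}$ and not some other filter relation; this is precisely where maximality of $\ba_j$ and the monotonicity of extension counts are used most delicately. A secondary difficulty is the first step, showing that max-implementation preserves the filter property at all: the filter property is strictly stronger than mere invariance under $\join$, so the collapse of forced and $\sim_\rel$-identified quantified variables genuinely has to be carried out before the monotonicity argument applies.
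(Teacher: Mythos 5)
Your proof is correct, and its engine is the same as the paper's: extract $\OR^{m_j}$ from a maximal non-member tuple by collapsing its zero-classes, pinning every other coordinate to its value and $\mex$-quantifying, with maximality plus the filter property guaranteeing that the output is exactly $\OR^{m_j}$; extract $\dl_0$ from a forced-$0$ coordinate; and split into the two alternatives according to whether $\dl_0\in\mang\Gm$. The structural difference is where this engine is applied. The paper applies it only to the generators $\rel\in\Gm$, which lie in $IS_{12}$ \emph{by hypothesis}: by the representation in the proof of Lemma~\ref{lem:IS-property}(1), each generator is a conjunction of $\EQ$, $\dl_1$, $\dl_0$ and relations $\OR^{m_j}$; the extraction shows that each such $\OR^{m_j}$ (hence $m_j\le\orr(\Gm)$) and, when needed, $\dl_0$ lie in $\mang\Gm$, and both inclusions follow. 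You instead decompose an arbitrary $\rel\in\mang\Gm$, and this forces your opening step: proving $\mang\Gm\sse IS_{12}$, i.e., that the filter property is preserved by max-implementation, including the collapse of $\sim$-classes that mix quantified and surviving variables. That statement is precisely the first half of the paper's Lemma~\ref{lem:IS-clones}, which the paper proves separately and later (it is needed for Proposition~\ref{pro:infinite-chains}, but not for the present lemma). So your route buys a stronger intermediate fact --- a direct description of every member of $\mang\Gm$ --- at the cost of front-loading the most delicate argument of the subsection, while the paper's route keeps this lemma purely about generators and is correspondingly leaner. Two smaller remarks: your closing worry is well placed, since the preservation step is exactly where the care is needed, and your sketch of it (monotonicity of extension counts after collapsing mixed classes) agrees with the paper's argument in Lemma~\ref{lem:IS-clones}; and your pinning of forced-$0$ coordinates by $\dl_0$ can be avoided, as the paper does, by $\mex$-quantifying forced coordinates away first (quantifying a forced coordinate does not change extension counts), which keeps the first alternative free of any appeal to $\dl_0$.
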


\begin{proof}
It suffices to show that if $\Gm$ contains a relation $\rel$ with a maximal 
tuple that spans $k$ classes of $\sim_\rel$, then $\OR^k\in\mang\Gm$. Let $\rel$ be such a relation. 
Applying $\mex$ we may assume that the sets $W$ and $Z$ for $\rel$ are 
empty; applying identification of variables we may assume that every set 
$S(i)$ is a singleton. Now let $\ba$ be a maximal tuple that spans $k$ classes of $\sim_\rel$, and 
$I$ the set of positions such that $\ba[i]=0$ if and only if $i\in I$; without 
loss of generality assume $I=[k]$.  Since $\rel$ satisfies the filter property, 
for any $(\vc bk)\in\pr_{[k]}\rel$ the tuple $(\vc bk,1\zd1)$ belongs to 
$\rel$. Observe that identifying all the variables of $\rel$ we make sure that
$\dl_1\in\mang\Gm$. Therefore the relation given by
$$
\relo(\vc xk)=\mex(x_{k+1}\zd x_n)(\rel(\vc xn)\meet\dl_1(x_{k+1})\meet\ldots\meet\dl_1(x_n))
$$
belongs to $\mang\Gm$. It remains to show that $\relo=\OR^k$. By the filter property of $\rel$ 
for any $\vc bk$ that are not all zeros $(\vc bk,1\zd1)\in\rel$. Therefore $(\vc bk)\in\relo$. 
On the other hand, $(0\zd0,1\zd1)\not\in\rel$.
 
It remains to show that for any $\rel(\vc xn)\in IS_{12}$ such that $\ba_{[n]}\not\in\rel$ 
(the all-ones tuple), 
$\dl_0\in\mang\rel$. By the filter property of $\rel$ if $\ba_{[n]}\not\in\rel$ there is 
$i\in[n]$ such that $\ba[i]=0$ for all $\ba\in\rel$. Let $I\sse[n]$ be the set of all 
such coordinate positions; without loss of generality we may assume that $I=[m]$. 
Since $\dl_1\in\mang\rel$, we have
$$
\dl_0(x)=\mex y (\rel(x\zd x,y\zd y)\meet\dl_1(y)),
$$
where $x$ is in the first $m$ positions.
\end{proof}

\begin{lemma}\label{lem:IS-clones}
Every co-clone $IS_1,IS_{12},IS^r_1,IS^r_{12}$ for $r\in\{2,3,\ldots\}$ 
is a max-co-clone.
\end{lemma}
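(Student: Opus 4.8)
The plan is to read these four co-clones off Lemma~\ref{lem:IS-property}: $IS_{12}$ (respectively $IS^r_{12}$) consists precisely of the relations satisfying the filter property (respectively the $r$-filter property), i.e.\ those forming an order filter among the $\sim_\rel$-conforming tuples, with every maximal non-member spanning at most $r$ classes in the bounded case; and $IS_1,IS^r_1$ are the corresponding sub-co-clones obtained by forbidding the constant $\dl_0$, i.e.\ the filter-property relations in which $O_\rel$ is the whole index set (no coordinate is identically $0$). Each of these sets is already a co-clone, hence closed under conjunction, manipulation of variables and ordinary existential quantification, so to prove it is a max-co-clone it suffices to show closure under the $\mex$ quantifier. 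The whole statement therefore reduces to the claim: \emph{if $\rel(\vc xn,\vc ym)$ has the (\,$r$-)filter property, then so does $\relo=\mex(\vc ym)\rel$}, together with the observation that $O_{\relo}$ stays full when $O_\rel$ is.

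First I would normalize $\rel$ without changing $\relo$ or leaving the co-clone. A quantified variable $y_j$ that is identically $0$ or identically $1$ on $\rel$ contributes the factor $1$ to every extension count and may be deleted; a variable with $y_j\sim_\rel x_i$ or $y_j\sim_\rel y_{j'}$ is determined by its partner and may be identified with it. Afterwards no $\sim_\rel$-relation joins a surviving $y$-variable to an $x$-variable. Writing $e(\ba)=|\{\bb\mid(\ba,\bb)\in\rel\}|$ for the extension count on the $\sim_\rel$-conforming $x$-tuples and $M=\max_\ba e(\ba)$, we have $\relo=\{\ba\mid e(\ba)=M\}$. The key step is monotonicity of $e$: for conforming $\ba\le\ba'$ the identity map $\bb\mapsto\bb$ embeds the extensions of $\ba$ into those of $\ba'$, since $(\ba',\bb)\ge(\ba,\bb)\in\rel$ is still $\sim_\rel$-conforming (after normalization all constraints are internal to the $x$- and $y$-blocks) and $\rel$ is an order filter. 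Hence $e(\ba)\le e(\ba')$, the maximizer set $\relo$ is an up-set, and $\relo$ has the filter property. Moreover the greatest $\sim_\rel$-conforming tuple is always a maximizer, so when $O_\rel$ is full this maximizer is the all-ones tuple and $O_{\relo}$ is full as well; this settles the $IS_1$ and $IS^r_1$ versions, while the unbounded $IS_{12}$ needs only the filter property just shown.

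The main obstacle is preserving the bound $r$: no maximal non-member of $\relo$ may span more than $r$ classes. I would establish this by lifting. Let $\ba'$ be a maximal non-member of $\relo$ whose zeros span the $k$ classes $C$; then $e(\ba')<M$, while each upper cover $\ba'^{+i}$, $i\in C$, has $e(\ba'^{+i})=M$. By monotonicity the extension set of $\ba'^{+i}$ is contained in that of the top tuple and has the same size $M$, so the two coincide; call this common set $E$. Since $e(\ba')<M$, we may pick $\bb\in E$ with $(\ba',\bb)\notin\rel$; then $(\ba'^{+i},\bb)\in\rel$ for every $i\in C$. Let $\bc\ge(\ba',\bb)$ be a maximal $\sim_\rel$-conforming non-member of $\rel$. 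If $\bc$ were $1$ on some class $i\in C$ then $\bc\ge(\ba'^{+i},\bb)\in\rel$, forcing $\bc\in\rel$ by the filter property, a contradiction; hence $\bc$ is $0$ on all $k$ classes of $C$ and so spans at least $k$ classes of $\sim_\rel$. As $\rel$ has the $r$-filter property, $k\le r$, as required.

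By Lemma~\ref{lem:IS-property} the relation $\relo$ then lies in the intended co-clone, so each of $IS_{12},IS^r_{12},IS_1,IS^r_1$ is closed under $\mex$ and is a max-co-clone; the $\meet$-invariant duals (the $\NAND$ family) follow from the symmetric argument, as noted at the start of the section. The places that need the most care are the normalization eliminating cross-block $\sim_\rel$-relations --- without it the monotonicity of $e$ fails --- and the identity ``$E$ equals the extension set of every cover of $\ba'$'', which is exactly what converts a $k$-spanning obstruction of $\relo$ into a $k$-spanning obstruction of $\rel$.
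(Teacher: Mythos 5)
Your proof is correct and follows essentially the same route as the paper's: reduce to closure under $\mex$, normalize away equalities linking quantified and free variables, use monotonicity of extension sets (coming from the filter property of $\rel$) to get the filter property of $\relo$ and preservation of the all-ones tuple for the $IS_1$-type classes, and lift a maximal non-member of $\relo$ to a maximal $\sim_\rel$-conforming non-member of $\rel$ to transfer the bound $r$. The only cosmetic differences are that the paper verifies monotonicity by checking constraints in a plain-basis representation and constructs the lifted maximal non-member with first coordinates exactly $\ba'$, whereas you argue abstractly from conformity and take any maximal non-member above $(\ba',\bb)$, showing its zeros meet all $k$ classes --- the underlying ideas coincide.
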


\begin{proof}
First we show that every $IS_{12}, IS^r_{12}$ is a max-co-clone. By 
Lemma~\ref{lem:IS-property} it suffices to prove that if every relation from 
$\Gm$ satisfies the filter or $r$-filter property, then so does every relation from 
$\mang\Gm$. These properties are preserved by manipulations with variables 
and conjunction, because $IS_{12}, IS^r_{12}$ are co-clones. It remains to 
show that they are also preserved by max-implementation. 

Suppose $\rel(\vc xn,\vc ym)$ satisfies the filter property and 
$\relo(\vc xn)=\mex(\vc ym)$\lb$\rel(\vc xn, \vc ym)$. Observe that we may assume 
that for any $x_i$ the set $S(x_i)$ does not contain any variable $y_j$. Indeed, 
if $\ba[i]=\bb[j]$ for any assignment $(\ba,\bb)$ that satisfies $\rel$, then we 
can identify these two variables, and denote the new variable by $x_i$. The number 
of extensions of any assignment to $\vc xn$ does not change, therefore the relation 
$\relo$ defined in the same way from the new relation does not change. 

Choose 
a representation $\Phi$ of $\relo$ that uses $\OR^r$, $\EQ$, $\dl_0,\dl_1$. Such 
a representation exists as the listed relations constitute a plain basis for $IS_{12}$ 
by \cite{Creignou08:plain} (see Table~\ref{tab:plain-post}). Take $\ba\in\relo$ and $x_i\in O_\relo$; let $\ba'$ be 
the tuple such that $\ba\le\ba'$. 
It suffices to verify that every extension $\bb$ of $\ba$ is also extension of $\ba'$. 
Indeed, if this is the case, since $\ba$ has the maximum number of extensions, so does 
$\ba'$, and thus $\ba'\in\relo$. Suppose $(\ba,\bb)\in\rel$. Then $(\ba',\bb)$ 
satisfies every relation $\OR^r$ from $\Phi$, as this tuple contains 1 in every 
position $(\ba,\bb)$ does. It also satisfies every relation $\EQ$, because there is 
no relation of the form $\EQ(x_\ell,y_j)$, and $\ba'[i]=\ba'[j]$ whenever $i\sim_\rel j$. 
Finally, $\dl_0$ and $\dl_1$ are also satisfied, because no 
value is changed in the scopes of the former, and no value is changed to 0 in the 
scope of the latter.

Next we prove that the number of $\sim_\rel$-classes spanned by zeros
of maximal tuples from the complement of $\relo$ does not exceed that of $\rel$. More 
precisely we show that (1) $S_\rel(x_i)\cap\{\vc xn\}\sse S_\relo(x_i)$ for any 
$i\in[n]$, and (2) for every maximal tuple $\ba\not\in\relo$ there is 
$\bb\in\{0,1\}^m$ such that $(\ba,\bb)$ is a maximal tuple not belonging to $\rel$. 

The first claim is obvious, as $\relo\sse\pr_{[n]}\rel$ and therefore if 
$\ba[i]=\ba[j]$ for any $(\ba,\bb)\in\rel$ then $\bc[i]=\bc[j]$ for any $\bc\in\relo$. 
Observe that we may assume that $\pr_j\rel=\{0,1\}$ for any $\j\in\{n+1\zd n+m\}$, 
since otherwise such a variable does not affect the number of extensions of 
tuples from $\pr_{[n]}\rel$. For the second claim let $\ba$ be a maximal tuple 
not belonging to $\relo$. Suppose 
first that $\ba\not\in\pr_{[n]}\rel$. Since for any $\ba'\in\pr_{[n]}\rel$ the tuple 
$(\ba',1\zd1)$ belongs to $\rel$, the tuple $(\ba,1\zd1)$ is a maximal tuple not 
belonging to $\rel$. Next assume $\ba\in\pr_{[n]}\rel$. Let $E(\bc)$ denote the 
set of extensions of a tuple $\bc\in\pr_{[n]}\rel$ to a tuple from $\rel$. 
Due to the filter property of $\rel$ and the assumption that no set $S(x_i)$ contains 
any $y_j$, if $\bc\le\bc'$ then $E(\bc)\sse E(\bc')$. As $\ba$ is a maximal tuple not 
belonging to $\relo$, the number of extensions of any tuple $\ba'$, $\ba<\ba'$, 
is the same, including the all-one tuple $\ba_{[n]}$. However, for any such tuple $\ba'$, 
$E(\ba')\sse E(\ba_{[n]})$ and yet $|E(\ba')|=|E(\ba_{[n]})|$ implying $E(\ba')=E(\ba_{[n]})$. 
Since $|E(\ba)|<|E(\ba')|$ for any tuple $\ba'$, $\ba<\ba'$, there is $\bb$ such 
that $(\ba,\bb)\not\in\rel$ and $(\ba',\bb)\in\rel$ for any tuple $\ba'$, $\ba<\ba'$. 
Choose a maximal $\bb'$, $\bb\le\bb'$, with this property. We need to show that 
$(\ba,\bb')$ is a maximal tuple not belonging to $\rel$. For any $\bb''>\bb'$ 
the tuple $(\ba,\bb'')\in\rel$, because, by the choice of $\bb'$, it is a maximal 
tuple such that $(\ba,\bb')\not\in\rel$. For any $\ba'$, $\ba<\ba'$, the tuple 
$(\ba',\bb)$ belongs to $\rel$, and therefore $(\ba',\bb')\in\rel$. 

Next we show that $\mang{IS^r_1}=IS^r_1$. Co-clone $IS^r_1$ contains
all relations from $IS^r_{12}$ invariant under the constant function~1. So, we prove that any 
relation $\rel\in\mang{IS_1}$ contains the all-one tuple. Relations $\EQ,\dl_1$, 
and $\OR^r$ satisfy this condition. Manipulations with variables and conjunction 
preserves this property. It remains to verify that $\mex$ also preserves this 
property in $IS_{12}$. Let $\rel(\vc xn,\vc ym)\in IS_{12}$ and 
$(1\zd1,1\zd1)\in\rel$. Let also $\relo(\vc xn)=\mex(\vc ym)\rel(\vc xn,\vc ym)$. 
As before we may assume that for any $x_i$ the set $S(x_i)$ does not contain 
any variable $y_j$. Then since $E(\ba)\sse E(\ba_{[n]})$, where $\ba_{[n]}$ is the all-one 
tuple, for any $\ba\in\pr_{[n]}\rel$, $\ba_{[n]}\in\relo$. 
\end{proof}

\begin{lemma}\label{lem:OR-generation}
Let $\rel\not\in IS_{12}$, then $\mang{\rel,\OR}=II_2$.
\end{lemma}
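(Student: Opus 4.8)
The plan is to reduce the whole statement to producing one of $\NEQ,\IMP,\NAND$ inside $\mang{\rel,\OR}$: once any of these is available, it together with $\OR$ is a pair of distinct relations from $\{\NEQ,\IMP,\OR,\NAND\}$, and Lemma~\ref{lem:IMP-generation} gives $\mang{\rel,\OR}=II_2$ immediately. The argument then splits according to the monotonicity type of $\rel$, using (Lemma~\ref{lem:IS-property}) that $IS_{12}$ is exactly the class of relations satisfying the filter property; in particular $IR_2\sse IS_{12}$, so the hypothesis $\rel\notin IS_{12}$ already forces $\rel$ to be nontrivial.

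First, if $\rel$ is not monotone, then the set $\{\rel,\OR\}$ is not affine, not monotone, and not self-complement: $\OR$ witnesses failure of affineness (it has three tuples) and of self-complementarity (the complement of $(1,1)$ is missing), while $\rel$ witnesses failure of monotonicity. Lemma~\ref{lem:out-max-generation} then finishes at once. Next, if $\rel$ is invariant under $\meet$ (this covers both the $\meet$-only relations and the $2$-monotone ones), I would invoke the $\meet/\NAND$-dual of Lemma~\ref{lem:monotone-generation}, obtained by complementing all coordinates: under $\ba\mapsto\neg\ba$ one has $\join\leftrightarrow\meet$ and $\OR\leftrightarrow\NAND$, while an implication maps to an implication, and max-implementation is unaffected because complementation is just a relabelling of $\{0,1\}$. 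The dual asserts $\IMP\in\mang\rel$ or $\NAND\in\mang\rel$ for any nontrivial $\meet$-invariant $\rel$, and either outcome combines with $\OR$ as above.

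The remaining and main case is $\rel$ invariant under $\join$ but not under $\meet$. Here I would first quantify away the coordinates forced to $0$, so that the all-ones tuple lies in $\rel$; if the all-zero tuple also lies in $\rel$, then Lemma~\ref{lem:monotone-generation} already yields $\IMP\in\mang\rel\sse\mang{\rel,\OR}$ and we are done, so assume it is absent. Since $\rel\notin IS_{12}$ the filter property fails, giving $\ba\in\rel$ and a $\sim_\rel$-class $C$ on which $\ba$ is $0$ such that flipping $C$ to $1$ leaves $\rel$; passing along a chain I may assume a single class is flipped. Grouping the coordinates into the $1$-positions $P$ of $\ba$, the class $C$, and the rest $R$, and identifying each group to one variable, yields a ternary $\relo(p,q,r)$ containing $(1,0,0)$ (from $\ba$) and $(1,1,1)$ (from the all-ones tuple) but not $(1,1,0)$ (the flipped tuple has no other preimage). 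Fixing $p=1$ with $\dl_1=\OR^1\in\mang\OR$ produces a binary relation contained in $\{(0,0),(0,1),(1,1)\}$ that contains $(0,0),(1,1)$ and excludes $(1,0)$, hence is either $\IMP$ or $\EQ$; the two alternatives are distinguished exactly by whether the tuple that is all-ones off $C$ belongs to $\rel$.

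The crux of the lemma is to rule out the degenerate $\EQ$ outcome, that is, to select the violation at a class $C$ for which the all-ones-off-$C$ tuple survives in $\rel$. I would argue such a ``good'' violation exists: a class whose flip is always permitted is freely addable, and if every class were either addable or had its top co-atom absent, then closing any member under all addable classes and using $\join$-invariance together with the absence of the all-zero tuple would force $\rel$ to be an order filter, contradicting $\rel\notin IS_{12}$. When only ``bad''-class violations turn up, I expect to combine the maximal member with $C$ equal to $0$ (which exists by $\join$-invariance) with the non-$\meet$-invariance witness $\bb,\bc\in\rel$, $\bb\meet\bc\notin\rel$, to manufacture the missing diagonal tuple. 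This last case analysis, in the style of Lemmas~\ref{lem:monotone-generation} and~\ref{lem:IMP+-generation}, is where the real difficulty lies: naive identification of variables collapses the construction to $\EQ$, and the constant $\dl_0$ is not directly available once the all-zero tuple has been removed, so the witnessing tuples must be chosen with care.
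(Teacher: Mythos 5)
Your reduction scheme and easy cases are sound: any of $\NEQ,\IMP,\NAND$ together with $\OR$ finishes via Lemma~\ref{lem:IMP-generation}, and in fact your first two cases can be merged, since whenever $\rel$ is not invariant under $\join$ (whether or not it is invariant under $\meet$) the pair $\{\rel,\OR\}$ is neither affine, nor monotone, nor self-complement, so Lemma~\ref{lem:out-max-generation} applies directly --- this is exactly what the paper does, making your complementation-dual of Lemma~\ref{lem:monotone-generation} unnecessary (though the dualization itself is legitimate). The genuine gap is precisely where you flag it. In the main case you reduce everything to finding a ``good'' violation of the filter property --- a tuple $\ba\in\rel$ and a class $C$ on which $\ba$ is $0$ such that switching $C$ to $1$ exits $\rel$ while $\ba_{[n]-C}$ (all ones off $C$) stays in $\rel$ --- and you never prove such a violation exists. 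Your sketched justification does not do it: closing a member of $\rel$ under the ``addable'' classes and taking joins only yields upward closure in the addable directions and says nothing about the bad ones, so it cannot ``force $\rel$ to be an order filter.'' The claim you need is actually true, but for a different reason: if every violation were bad, then for each bad class $C$ the join $\bc_C$ of all tuples of $\rel$ vanishing on $C$ vanishes on at least one further class (since $\ba_{[n]-C}\not\in\rel$), every zero class of $\bc_C$ is again bad (maximality of $\bc_C$ makes each flip a violation), and the zero-sets are nested ($C'$ a zero class of $\bc_C$ implies $\bc_C\le\bc_{C'}$); taking a minimal zero-set $Z$, $|Z|\ge2$, one finds that every tuple of $\rel$ vanishing on one class of $Z$ vanishes on all of them, so the classes of $Z$ would merge into a single $\sim_\rel$-class --- the contradiction is with the distinctness of the classes, not with the filter property. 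Your fallback (combining a maximal member with a non-$\meet$-invariance witness) is likewise a hope, not an argument.

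The paper sidesteps this difficulty entirely, and the contrast is instructive. Instead of collapsing a single violation to a binary relation --- where the degenerate outcome $\EQ$ must be excluded --- it considers, for every $\ba\in\rel$, the relation $\rel_\ba$ obtained by pinning all coordinates in $O(\ba)$ to $1$ (via $\mex$ and $\dl_1$) and keeping the remaining coordinates. This relation always contains the all-zero tuple and is invariant under $\join$. If some $\rel_\ba$ is nontrivial, the all-zero clause of Lemma~\ref{lem:monotone-generation} yields $\IMP\in\mang{\rel_\ba}\sse\mang{\rel}$ with no degenerate case to rule out; if every $\rel_\ba$ is trivial, then $\join$-invariance gives $\rel_\ba=\pr_{[n]-O(\ba)}\rel$, which under the normalization must be the full cube, so every $0$ of every $\ba\in\rel$ can be flipped inside $\rel$ --- the filter property --- contradicting $\rel\not\in IS_{12}$ by Lemma~\ref{lem:IS-property}. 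To complete your proof, either adopt this argument or insert a full proof of your ``good violation'' claim along the lines indicated above.
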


\begin{proof}
First of all $\rel$ can be assumed to be closed under $\join$. Indeed, $\OR$ is not 
self-complement, affine, or closed under $\meet$; so if $\rel$ is not closed under $\vee$
the result follows from Lemma~\ref{lem:out-max-generation}. We also may assume 
that every unary projection of $\rel$ contains two elements. Next, 
observe that we can also assume that for each variable $x$ of $\rel$ the set $S(x)$ 
contains only one element. Indeed, construct a relation $\rel'$ by identifying all 
variables in every set of the form $S(x)$. It now suffices to verify that 
$\rel'\not\in IS_{12}$ whenever $\rel\not\in IS_{12}$. To see this note that $\rel$ 
can be obtained from $\rel'$ through adding new variables and imposing equality 
relations.

If $\rel$ contains the all-zero tuple then by Lemma~\ref{lem:monotone-generation} 
$\IMP\in\mang\rel$ and the result follows from Lemma~\ref{lem:IMP-generation}.

Suppose that the all-zero tuple does not belong to $\rel$. We show that either $\rel$ 
satisfies the filter property, and therefore belongs to $IS_{12}$, or there is a nontrivial 
relation $\relo\in\mang\rel$ containing the all-zero tuple. By what is proved above it 
implies the result. 

For $\ba\in\rel$ we denote by $\rel_\ba$ the relation obtained as follows. Let $O(\ba)$ 
denote the set of coordinate positions in which $\ba$ equals 1. Then
$$
\rel_\ba=\mex(x_i)_{i\in O(\ba)}(\rel(\vc xn\meet\bigwedge_{i\in O(\ba)}\dl_1(x_i)).
$$
If $\rel_\ba$ is a nontrivial relation then we are done, since the all-zero tuple belongs 
to $\rel_\ba$. Therefore assume that every relation $\rel_\ba$ is trivial. Observe that 
since $\ba\join\bb\in\rel$ for any $\bb\in\rel$ and $\pr_{[n]-O(\ba)}(\ba\join\bb)=
\pr_{[n]-O(\ba)}\bb$, we have $\rel_\ba=\pr_{[n]-O(\ba)}\rel$. Therefore every set of the 
form $S(x)$ for $\rel_\ba$ is 1-element. Hence $\rel_\ba=\{0,1\}^{n-|O(\ba)|}$. 
In particular, for any $\ba\in\rel$ and any $i\not\in O(\ba)$ the tuple $\bb$ obtained 
from $\ba$ by changing $\ba[i]$ to 1 belongs to $\rel$. Thus $\rel$ satisfies the filter 
property.
\end{proof}

\begin{prop}\label{pro:infinite-chains}
Every max-co-clone of monotone relations containing a nontrivial relation equals one of 
$IS_1$, $IS_{12}$, $IS^i_1$, $IS^i_{12}$ for $i\in\{2,3,\ldots\}$, $IM_2$. 
\end{prop}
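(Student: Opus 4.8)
The plan is to combine the generation lemmas of this section with the two-parameter description of Lemma~\ref{lem:nand-generation}, and then read the list off directly. Following the section's standing assumption I take the relations of $\Gm$ to be invariant under $\join$ (the $\meet$-invariant case being entirely dual). So let $\Gm$ be a max-co-clone of $\join$-invariant relations that contains some nontrivial relation $\rel$. Since $\mang\rel\sse\Gm$, Lemma~\ref{lem:monotone-generation} supplies the dichotomy $\IMP\in\Gm$ or $\OR\in\Gm$, and the whole argument splits along it.

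First I would dispose of the case $\IMP\in\Gm$. As recorded just before Lemma~\ref{lem:IM2-clone}, $\mang\IMP=IM_2$, so $IM_2\sse\Gm$. If $\Gm$ contained any relation $\rel'\notin IM_2$, then Lemma~\ref{lem:IMP+-generation} would force $\mang{\rel',\IMP}=II_2\sse\Gm$; but $II_2$ contains non-monotone relations such as $\NEQ$, which is not invariant under $\join$, contradicting the standing assumption on $\Gm$. Hence $\Gm\sse IM_2$ and so $\Gm=IM_2$. This branch is incompatible with $\OR\in\Gm$, since $\OR\notin IM_2$, so the two branches do not overlap.

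Next I would treat the case $\IMP\notin\Gm$, in which the dichotomy forces $\OR\in\Gm$. By the same mechanism, if some $\rel'\in\Gm$ failed to lie in $IS_{12}$, then Lemma~\ref{lem:OR-generation} would give $\mang{\rel',\OR}=II_2\sse\Gm$, once more contradicting $\join$-invariance; thus $\Gm\sse IS_{12}$. Now $\Gm$ is a max-co-clone of monotone relations inside $IS_{12}$ containing $\OR=\OR^2$, so $\orr(\Gm)\ge2$. Using $\mang\Gm=\Gm$, Lemma~\ref{lem:nand-generation} pins $\Gm$ down by the single parameter $\orr(\Gm)\in\{2,3,\ldots\}\cup\{\infty\}$ together with the dichotomy $\dl_0\in\Gm$ or $\dl_0\notin\Gm$: it is generated by $\{\OR^m\mid m\le\orr(\Gm)\}$, with $\dl_0$ adjoined to the generating set in the remaining alternative. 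The final task is to match these four generated objects with $IS^i_1,IS_1,IS^i_{12},IS_{12}$ (where $i=\orr(\Gm)$ in the bounded cases). For the inclusion $\sse$ I would invoke Lemma~\ref{lem:IS-clones}, by which each named co-clone is itself a max-co-clone containing the relevant generators; for the reverse inclusion I would use the plain-basis description of these co-clones from \cite{Creignou08:plain} (reflected in Lemma~\ref{lem:IS-property}), by which the listed relations generate the named co-clone already under manipulation of variables and conjunction.

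The hard part will be the bookkeeping in this final identification: checking that the two invariants extracted by Lemma~\ref{lem:nand-generation}, namely the largest arity $\orr(\Gm)$ of an available disjunction and the presence of $\dl_0$, are precisely the data separating the infinite chains $IS^i_1$ and $IS^i_{12}$ and their limits $IS_1,IS_{12}$, and that every admissible value of $\orr(\Gm)$ and every choice of $\dl_0$-membership lands on the list rather than off it. Everything else is a direct splice of the generation lemmas already proved; the genuine content is confirming that the filter and $r$-filter characterizations of Lemma~\ref{lem:IS-property} align with the generated sets closely enough that each named co-clone is realized exactly once.
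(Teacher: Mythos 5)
Your proposal is correct and takes essentially the same route as the paper's proof: the dichotomy from Lemma~\ref{lem:monotone-generation}, uniqueness of $IM_2$ among max-co-clones containing $\IMP$ via $\mang\IMP=IM_2$ and Lemma~\ref{lem:IMP+-generation}, containment in $IS_{12}$ via Lemma~\ref{lem:OR-generation}, the parametrized classification via Lemma~\ref{lem:nand-generation}, and Lemmas~\ref{lem:IM2-clone} and~\ref{lem:IS-clones} certifying that the listed sets are max-co-clones. The only difference is that you make explicit the contradiction mechanism ($II_2$ contains non-monotone relations) and the final matching of the generated sets to the named co-clones, steps the paper compresses into single sentences.
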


\begin{proof}
By Lemmas~\ref{lem:IM2-clone} and~\ref{lem:IS-clones} all these sets are 
max-co-clones. By Lemma~\ref{lem:IMP+-generation} and the observation that 
$\mang\IMP=IM_2$, max-co-clone $IM_2$ is the only max-co-clone containing 
$\IMP$. By Lemma~\ref{lem:OR-generation} $IS_{12}$ is the greatest 
max-co-clone containing $\OR$. Thus it remains to prove that there are no 
max-co-clones containing $\OR$ and different from 
$IS_1,IS_{12},IS^i_1,IS^i_{12}$ for $i\in\{2,3,\ldots\}$. It follows from 
Lemma~\ref{lem:nand-generation}.
\end{proof}

\subsection{Self-complement max-co-clones}

In this section we consider the remaining case of self-complement max-co-clones.

\begin{prop}\label{pro:self-dual}
There is only one max-co-clone of self-complement relations that is not a subclone of $IL_2$. 
It is $IN_2$, the clone of all self-complement relations.
\end{prop}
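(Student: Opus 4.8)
The plan is to prove two things: that $IN_2$ is a self-complement max-co-clone not contained in $IL_2$, and that it is the only one. For the first, I would check that $IN_2$, the set of \emph{all} self-complement relations, really is a max-co-clone. Conjunctions of self-complement relations are self-complement, as are $\EQ$ and $\NEQ$, so the only thing to verify is closure under $\mex$. If $\rel(\vc xn,\vc ym)$ is self-complement and $\relo(\vc xn)=\mex(\vc ym)\rel$, then for each $\ba$ the map $\bb\mapsto\ov\bb$ is a bijection between the extensions of $\ba$ and the extensions of $\ov\ba$; hence $\ba$ and $\ov\ba$ have equally many extensions, $\ba$ is extension-maximal iff $\ov\ba$ is, and $\relo$ is self-complement. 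Thus $\mang{IN_2}=IN_2$. That $IN_2\not\sse IL_2$ is witnessed by $\Compl_{3,0}=\NAE_3$, which is self-complement but has six tuples; since six is not a power of two it is not affine.

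For uniqueness, let $\Gm$ be a max-co-clone of self-complement relations with $\Gm\not\sse IL_2$, and fix a non-affine $\rel\in\Gm$. As self-complementness is preserved by the max-co-clone operations, $\mang\rel\sse\Gm\sse IN_2$, so it suffices to prove $\mang\rel=IN_2$. Post's lattice~\cite{Post41} locates the difficulty: the co-clones of self-complement relations are exactly $IBF$, $ID$, $IN$, $IN_2$ (those whose polymorphisms contain negation), of which the first two consist of affine relations and are handled by Lemma~\ref{lem:linear-max-clones}; hence our non-affine $\rel$ has ordinary co-clone $\dang\rel\in\{IN,IN_2\}$. Since $IN$ itself contains non-affine relations such as $\Compl_{2,2}$ yet is a proper subclone of $IN_2$, the real content of the proposition is that $\mex$ allows one to climb out of $IN$. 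Concretely, it is enough to show $\Compl_{3,0}\in\mang\rel$, for $\Compl_{3,0}\notin IN$ while, as discussed below, $\mang{\Compl_{3,0}}=IN_2$.

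I would produce $\Compl_{3,0}$ from the non-affine witness $\ba,\bb,\bc\in\rel$, $\bd=\ba\op\bb\op\bc\notin\rel$, supplied by Lemma~4.10 of~\cite{Creignou01:complexity}. After the usual block-identification of variables this reduces, as in the proofs of Lemmas~\ref{lem:out-max-generation} and~\ref{lem:NEQ-generation}, to a case analysis on a ternary relation; the novelty is that the constants $\dl_0,\dl_1$ used there are unavailable and must be replaced by a \emph{polarity pair} $p,q$ constrained by $\NEQ(p,q)$, playing the role of $1$ and $0$ up to global complementation. Every relation assembled from such a pair is symmetric under exchanging $p$ and $q$, hence automatically self-complement, so pinning a value relative to $p$ simulates a constant without leaving $IN_2$; carrying out the corresponding case analysis then yields $\NEQ$ and finally $\Compl_{3,0}=\NAE_3$. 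To close, I would show $\mang{\Compl_{3,0}}=IN_2$ by generating the plain basis $\{\Compl_{k,\ell}\}$ of $IN_2$: the identity $\NEQ(x,y)=\mex z\,\Compl_{3,0}(x,y,z)$ is immediate, $\NEQ$ turns each $\Compl_{k,\ell}$ into $\NAE_{k+\ell}$ by negating its last $\ell$ coordinates, and the relations $\NAE_m$ are built by induction on $m$ via a max-implementation whose auxiliary variables are tuned so that exactly the two monochromatic tuples fall short of the maximal extension count.

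This last step is where the hard work lies. Self-complement relations are not rectangular, so $\mex$ is genuinely weaker than an existential quantifier, and the gadget realising $\NAE_{m+1}$ must be engineered so that all $2^{m+1}-2$ non-constant tuples tie for the maximum number of extensions while both constant tuples are strictly beaten; balancing these counts, together with the analogous bookkeeping in the polarity-based extraction of $\NEQ$ and $\Compl_{3,0}$ where constants are unavailable, is the crux of the argument.
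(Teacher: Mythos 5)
Your proposal reproduces the paper's architecture: closure of $IN_2$ under $\mex$ (your bijection argument is literally the paper's Lemma~\ref{lem:IN-clone}), then extraction of a small non-affine self-complement relation from an arbitrary non-affine $\rel$, then generation of all of $IN_2$ from it through the plain basis $\{\Compl_{k,\ell}\}$. But only the first of these three steps is actually proved. The other two are the paper's Lemmas~\ref{lem:compl-extraction} and~\ref{lem:self-dual-generation}, which carry essentially all the weight of the proposition, and your text replaces them by plans while conceding that the balancing of extension counts they require is ``the crux''. A plan that defers the crux is not a proof, and one of your two plans fails as stated.

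The failure is in the extraction step. Your polarity-pair device needs $\NEQ\in\mang\rel$ before it can be used. If the all-zero tuple is not in $\rel$, then $\NEQ$ is indeed obtained by identifying variables (exactly how Case~1 of the paper's Lemma~\ref{lem:compl-extraction} begins), and your trick then differs little from the paper's use of the genuine $\NEQ$. But if $\rel$ contains the all-zero tuple --- and hence, being self-complement, also the all-one tuple --- then no identification of variables and no conjunction can ever produce $\NEQ$: those operations preserve the property of containing both constant tuples, whereas $(0,0)\notin\NEQ$. In that case $\NEQ$ can only be manufactured using $\mex$ itself, which is precisely what your polarity trick presupposes; the argument is circular. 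The paper escapes this by running its Case~2 with identifications and $\mex$ alone, obtaining $\Compl_{2,1}$ or $\Compl_{1,2}$ by a case analysis, and only afterwards recovering $\NEQ$ as a $\mex$-projection of these. Similarly, in the generation step you assert that the relations $\NAE_m$ can be built inductively by a max-implementation ``tuned'' so that exactly the two constant tuples lose; no such gadget is exhibited, and the corresponding construction in the paper (proving $\Compl_{2k,0}\in\mang{\Compl_{k+1,0}}$ via a conjunction $\Psi=\Phi\meet\Phi'$ over exponentially many auxiliary variables, whose extension counts are powers of two engineered to cancel) is the single hardest piece of the section. A small inaccuracy besides: the co-clones consisting of self-complement relations are $IBF$, $ID$, $IL$, $IL_3$, $IN$, $IN_2$; your list omits the affine ones $IL$ and $IL_3$, which is harmless here since the Post-lattice detour is not needed once you aim directly at $\Compl_{3,0}$.
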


The proposition follows from  the following four lemmas.

\begin{lemma}\label{lem:IN-clone}
$IN_2$ is a max-co-clone.
\end{lemma}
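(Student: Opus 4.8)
The plan is to verify the three defining closure conditions of a max-co-clone for $IN_2$: that it contains the equality relations and is closed under conjunction and under max-implementation. The organizing observation is that a relation $\rel$ is self-complement precisely when it is invariant under the unary negation $\neg\colon x\mapsto 1-x$, because the requirement that $\ba\in\rel$ implies $\neg\ba\in\rel$ is exactly the statement that $\neg$, applied component-wise, is a polymorphism of $\rel$. Thus $IN_2=\Inv(\{\neg\})$, the set of all relations preserved by $\neg$.

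Given this, the first two conditions are routine. The equality relation $\EQ=\{(0,0),(1,1)\}$ is self-complement, since the complement of $(0,0)$ is $(1,1)$ and both lie in $\EQ$. Invariance under a fixed (total) function is preserved by manipulation of variables and by conjunction of predicates, so a conjunction of self-complement relations is again self-complement; these are standard properties of the polymorphism relation and require no new argument.

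The only substantive step is closure under max-implementation, which I would argue directly via a count-preserving bijection. Suppose $\rel(\vc xn,\vc ym)$ is self-complement and set $\relo(\vc xn)=\mex(\vc ym)\rel(\vc xn,\vc ym)$. For each $\ba\in\{0,1\}^n$ write $E(\ba)=\{\bb\mid(\ba,\bb)\in\rel\}$ for its set of extensions. Since $\rel$ is invariant under $\neg$, we have $(\ba,\bb)\in\rel$ if and only if $(\neg\ba,\neg\bb)\in\rel$, so the map $\bb\mapsto\neg\bb$ restricts to a bijection between $E(\ba)$ and $E(\neg\ba)$; hence $|E(\ba)|=|E(\neg\ba)|$ for every $\ba$. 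Consequently $\ba$ attains the maximal number $M$ of extensions exactly when $\neg\ba$ does, i.e.\ $\ba\in\relo$ if and only if $\neg\ba\in\relo$, so $\relo$ is self-complement and lies in $IN_2$.

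I do not anticipate a genuine obstacle here. The one place where something could go wrong is that max-implementation, unlike ordinary existential quantification, depends on the exact extension counts, so a priori it might treat $\ba$ and $\neg\ba$ asymmetrically; this danger is eliminated by the exact equality $|E(\ba)|=|E(\neg\ba)|$ furnished by invariance of $\rel$ under $\neg$, which is what makes the argument go through cleanly.
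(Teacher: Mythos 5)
Your proof is correct and follows essentially the same route as the paper's: the paper likewise dispenses with variable manipulation and conjunction by noting $IN_2$ is a co-clone, and then handles max-implementation by observing that negation carries extensions of $\ba$ bijectively to extensions of $\neg\ba$, so the extension counts agree and $\relo$ is self-complement. Your explicit formulation via $IN_2=\Inv(\{\neg\})$ and the equality $|E(\ba)|=|E(\neg\ba)|$ is just a slightly more detailed write-up of the identical argument.
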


\begin{proof}
We need to prove that $IN_2$ is closed under manipulations with variables, 
conjunction, and max-implementation. Since $IN_2$ is a co-clone, it is closed 
under the first two operations. Let $\rel(\vc xn,\vc ym)\in IN_2$ and 
$\relo(\vc xn)=\mex(\vc ym)\rel(\vc xn,\vc ym)$. Let $\ba\in\relo$ and let 
$\neg\ba$ denote its complement. Then for each extension $(\ba,\bc)\in\rel$ 
of $\ba$ the tuple $(\neg\ba,\neg\bc)$ belongs to $\rel$, as $\rel$ is self-complement, 
and $(\neg\ba,\neg\bc)$ is an extension of $\neg\ba$. Therefore $\neg\ba$ has the same number 
of extensions as $\ba$, and so $\neg\ba\in\relo$. Thus, $\relo$ is 
self-complement.
\end{proof}

\begin{lemma}\label{lem:compl-extraction}
Let $\rel$ be a self complement relation that does not belong to $IL_2$ 
(that is, non-affine), then $\Compl_{3,0}\in\mang\rel$ or $\Compl_{1,2}\in\mang\rel$.
\end{lemma}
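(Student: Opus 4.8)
The plan is to push $\rel$, through max-implementations and variable manipulations that keep us inside the self-complement world, down to a \emph{ternary} relation, and then to read off the answer from a classification of ternary self-complement relations. Throughout we use that $\mang\rel\sse IN_2$ by Lemma~\ref{lem:IN-clone}, so every relation we build is automatically self-complement.

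First I would extract a witness of non-affineness exactly as in the proof of Lemma~\ref{lem:NEQ-generation}: since $\rel\not\in IL_2$, Lemma~4.10 of \cite{Creignou01:complexity} gives $\ba,\bb,\bc\in\rel$ with $\bd=\ba\oplus\bb\oplus\bc\not\in\rel$. I group the coordinates of $\rel$ according to the triple $(\ba[i],\bb[i],\bc[i])\in\{0,1\}^3$ they carry, obtaining at most eight blocks, and identify all variables inside each block. The resulting $\rel_1\in\mang\rel$ has arity at most $8$, still contains the three images $\mathbf p_1,\mathbf p_2,\mathbf p_3$ of $\ba,\bb,\bc$, and excludes their sum $\mathbf p_1\oplus\mathbf p_2\oplus\mathbf p_3$; in particular $\rel_1$ is non-affine. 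This shows that non-affine self-complement relations of bounded arity live in $\mang\rel$, so a minimum-arity one exists.

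The structural payoff is a classification of ternary self-complement relations. Such a relation is a union of complementary pairs $\{\mathbf t,\ov{\mathbf t}\}$, and a short computation shows that excluding zero, two, three, or four of the four pairs always yields an affine (or empty) relation, whereas excluding \emph{exactly one} pair yields a non-affine one: excluding $\{(0,0,0),(1,1,1)\}$ gives precisely $\Compl_{3,0}$, and excluding a pair of weights $1$ and $2$ gives $\Compl_{1,2}$ up to a permutation of coordinates. Hence every ternary non-affine self-complement relation equals $\Compl_{3,0}$ or $\Compl_{1,2}$, and it suffices to produce, inside $\mang\rel$, any ternary non-affine self-complement relation.

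It remains to bring the arity down to three while preserving non-affineness, and this is the main obstacle. The tool is that over $\{0,1\}$ a single max-quantification $\mex v$ behaves as universal quantification when the current relation has a tuple extendable by both values of $v$, and as ordinary projection otherwise. I would eliminate auxiliary variables one at a time, maintaining a violating triple $\mathbf q_1,\mathbf q_2,\mathbf q_3$ in the relation whose sum is excluded. The delicate point is that projection may merge the excluded sum back in, while universal quantification may drop one of the witnesses, so at each step one must eliminate a variable for which the witnesses retain both extensions (so $\mex$ acts universally and keeps them) while one extension of the sum stays absent (so the sum remains excluded). I expect the cleanest way to guarantee that such a variable exists until arity three is reached is to start from a \emph{minimum}-arity non-affine self-complement relation of $\mang\rel$: for such a relation every single-variable max-reduct is affine, and analysing this constraint should force the arity to be at most three. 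Once a ternary non-affine self-complement relation is in hand, the classification above identifies it as $\Compl_{3,0}$ or $\Compl_{1,2}$, completing the proof.
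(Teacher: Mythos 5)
Your opening moves coincide with the paper's: both proofs invoke Lemma~4.10 of \cite{Creignou01:complexity} to obtain $\ba,\bb,\bc\in\rel$ with $\bd=\ba\op\bb\op\bc\not\in\rel$, and both identify coordinates according to the pattern $(\ba[i],\bb[i],\bc[i])$, producing a non-affine self-complement relation of arity at most $8$ in $\mang\rel$. Your classification of ternary self-complement relations is also correct: the four complementary pairs are the cosets of $\{(0,0,0),(1,1,1)\}$, a union of $0,1,2$, or $4$ of them is affine, and a union of exactly three pairs is non-affine and equals $\Compl_{3,0}$ or a permuted copy of $\Compl_{1,2}$.

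The gap is exactly the step you flag as the main obstacle: reducing from arity $\le 8$ to arity $3$ while preserving non-affineness and staying inside $\mang\rel$. Everything you offer here is conditional (``I expect'', ``should force''), and the minimality argument is not a proof: you never show that a non-affine self-complement relation of minimal arity in a max-co-clone has arity at most three, i.e., that any such relation of arity $\ge 4$ admits an identification or a single-variable max-quantification that again yields a non-affine self-complement relation. That claim is essentially the content of the lemma itself, and it does not follow from the observation that Boolean $\mex$ acts as $\exists$ or $\forall$; a variable for which all three witnesses keep both extensions while the excluded sum keeps a missing extension need not exist in general, and you give no argument that one does. The paper closes this gap with explicit constructions, split by whether the all-zero tuple lies in $\rel$. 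If it does not, self-complementarity yields $\NEQ\in\mang\rel$ (identify the zero-positions and the one-positions of any tuple of $\rel$), and the eight block-variables are then folded down to four by a max-implementation carrying the constraints $\NEQ(x,w)\meet\NEQ(y,v)\meet\NEQ(z,u)\meet\NEQ(t,s)$, which force the last four blocks to be complements of the first four. If the all-zero tuple does lie in $\rel$, then Lemma~4.10 can be applied with $\ba$ equal to the all-zero tuple (otherwise $\rel$ would be closed under $\op$ and hence affine), so only four blocks occur at all. In either case one lands on a concrete $4$-ary self-complement relation containing three prescribed tuples and excluding a fourth, and a finite case analysis over the $16$ possibilities extracts $\Compl_{3,0}$, $\Compl_{2,1}$, or $\Compl_{1,2}$ via further identifications and single max-quantifications. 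Without an argument of this kind, or a genuine proof of your minimal-arity claim, your proposal does not establish the lemma.
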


\begin{proof}
Let $\rel(\vc xn)$ satisfy the conditions of the lemma. There are two cases.

\smallskip

{\sc Case 1.}
$\rel$ does not contain the all-zero tuple.

\smallskip

Observe first that in this case $\mang \rel$ contains the disequality relation. 
Indeed, let $\ba\in\rel$ and let $I\sse[n]$ be the set of indices such that  
$\ba[i]=0$ if and only if $i\in I$. Since the all-zero tuple does not belong 
to $\rel$, $I\ne[n]$. Without loss of generality let $I=[m]$. 
Then it is easy to see that 
$$
\rel(\underbrace{x\zd x}_{\text{$m$ times}},y\zd y)
$$
is the disequality relation.

As $\rel\not\in IL_2$, by Lemma~4.10 of \cite{Creignou01:complexity} 
there are tuples $\ba,\bb,\bc\in\rel$ such that 
$\bd=\ba\oplus\bb\oplus\bc\not\in\rel$. Rearranging the variables these 
tuples can be represented as shown in the table below.
$$
\begin{array}{r|cccccccc|l}
\ba&0\ldots0&0\ldots0&0\ldots0&0\ldots0&1\ldots1&1\ldots1&1\ldots1&1\ldots1&\in\rel\\
\bb&0\ldots0&0\ldots0&1\ldots1&1\ldots1&0\ldots0&0\ldots0&1\ldots1&1\ldots1&\in\rel\\
\bc&0\ldots0&1\ldots1&0\ldots0&1\ldots1&0\ldots0&1\ldots1&0\ldots0&1\ldots1&\in\rel\\
\hline
\bd&0\ldots0&1\ldots1&1\ldots1&0\ldots0&1\ldots1&0\ldots0&0\ldots0&1\ldots1&\not\in\rel\\
\hline
&x\ldots x&y\ldots y&z\ldots z&s\ldots s&t\ldots t&u\ldots u&v\ldots v&w\ldots w
\end{array}
$$
Denote by $\rel'$ the relation obtained from $\rel$ by identifying variables 
as shown in the last row of the table, and then set
\begin{eqnarray*}
\relo(x,y,z,t) &=& \mex s\mex u\mex v\mex w(\rel'(x,y,z,s,t,u,v,w)\\
&& \meet\NEQ(x,w)\meet\NEQ(y,v)\meet\NEQ(z,u)\meet\NEQ(t,s)).
\end{eqnarray*}
Relation $\rel''$ contains tuples $(0,0,0,1),(0,0,1,0),(0,1,0,0)$ but does 
not contain $(0,1,1,1)$, and so does not belong to $IL_2$.

There are 16 cases depending on whether or not tuples (a) $(0,0,1,1)$, 
(b) $(0,1,0,1)$, (c) $(0,1,1,0)$, and (d) $(0,0,0,0)$ belong to $\rel''$ (remember, this 
relation is self complement). If none of them belong to $\rel''$ then 
$\Compl_{3,0}(x,y,z)=\mex t \rel''(t,x,y,z)$. Suppose first 
$(0,0,0,0)\not\in\rel''$. If (a) belongs to $\rel''$ 
then $\Compl_{3,0}(x,y,z)=\rel''(x,x,y,z)$; if (b) is in $\rel''$ then 
$\Compl_{3,0}(x,y,z)=\rel''(x,y,x,z)$; finally, if (c) is in $\rel''$ then 
$\Compl_{3,0}(x,y,z)=\rel''(x,y,z,x)$. Suppose now (d) belongs to $\rel$.
If (a) is not there then $\Compl_{1,2}(x,y,z)=\rel''(x,x,y,z)$. If (a) is also in $\rel$,
then $\Compl_{1,2}(x,y,z)=\rel''(x,y,z,z)$.

\smallskip

{\sc Case 2.}
The all-zero tuple belongs to $\rel$.

\smallskip

Again by Lemma~4.10 of \cite{Creignou01:complexity} there are 
tuples $\ba,\bb,\bc\in\rel$ such that $\bd=\ba\oplus\bb\oplus\bc\not\in\rel$, 
but $\ba$ can be chosen to be the all-zero tuple. Then after rearranging 
variables these tuples can be represented as follows
$$
\begin{array}{r|cccc|l}
\ba&0\ldots0&0\ldots0&0\ldots0&0\ldots0&\in\rel\\
\bb&0\ldots0&0\ldots0&1\ldots1&1\ldots1&\in\rel\\
\bc&0\ldots0&1\ldots1&0\ldots0&1\ldots1&\in\rel\\
\hline
\bd&0\ldots0&1\ldots1&1\ldots1&0\ldots0&\not\in\rel\\
\hline
&x\ldots x&y\ldots y&z\ldots z&t\ldots t
\end{array}
$$
Denote by $\rel'$ the relation obtained from $\rel$ by identifying variables 
as shown in the last row of the table. Relation $\rel'$ contains tuples 
$(0,0,0,0),(0,0,1,1),(0,1,0,1)$ but does not contain $(0,1,1,0)$, and so 
does not belong to $IL_2$.

There are 16 cases depending on whether or not tuples (a) $(0,0,0,1)$, 
(b) $(0,0,1,0)$, (c) $(0,1,0,0)$, and (d) $(1,0,0,0)$ belong to $\rel'$. 
If none of the tuples belong to $\rel'$ or all of them belong to $\rel'$, 
then $\Compl_{2,1}(x,y,z)=\mex t\rel'(t,x,y,z)$. In the first case it is 
1-quantification, and in the second case it is 2-quantification. If exactly one 
of (a) and (b) belongs to $\rel'$ then up to permutation of variables 
$\Compl_{1,2}(x,y,z)=\rel'(x,x,y,z)$.  If exactly one of (a) and (d) belongs 
to $\rel'$ then up to permutation of variables $\Compl_{1,2}(x,y,z)=\rel'(x,y,y,z)$. 
Finally, if exactly one of (c) and (d) belongs to $\rel'$ then up to 
permutation of variables $\Compl_{1,2}(x,y,z)=\rel'(x,y,z,z)$.
\end{proof}

\begin{lemma}\label{lem:self-dual-generation}
If $k+\ell\ge3$ then $\mang{\Compl_{k,\ell}}=IN_2$. 
\end{lemma}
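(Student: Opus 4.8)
The plan is to prove the two inclusions separately. The inclusion $\mang{\Compl_{k,\ell}}\subseteq IN_2$ is immediate: each $\Compl_{k,\ell}$ is self-complement, and $IN_2$ is a max-co-clone by Lemma~\ref{lem:IN-clone}, so the max-co-clone generated by $\Compl_{k,\ell}$ stays inside $IN_2$. For the reverse inclusion I would exploit that the relations $\Compl_{a,b}$ form a \emph{plain} basis of $IN_2$: closing $\{\Compl_{a,b}\}$ under variable manipulation and conjunction alone already yields all of $IN_2$ (to forbid a single complement-pair $\{\bt,\neg\bt\}$ among variables one simply imposes $\Compl_{a,b}$ with the $0$-entries of $\bt$ mapped to the first block and the $1$-entries to the second, and arbitrary self-complement relations are conjunctions of such constraints). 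Hence it suffices to show that every $\Compl_{a,b}$ lies in $\mang{\Compl_{k,\ell}}$, and since $\mang{\cdot}$ is closed under conjunction and variable manipulation, $IN_2\subseteq\mang{\Compl_{k,\ell}}$ will follow.

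Next I would collect the clean reduction moves, each realized by a \emph{uniform-count} max-implementation. First, identifying two variables lying in the same block of $\Compl_{k,\ell}$ produces $\Compl_{k-1,\ell}$ (or $\Compl_{k,\ell-1}$); applying $\mex$ to a single block variable does the same and, unlike identification, can shrink a block all the way to size $0$ (a direct check shows $\mex x\,\Compl_{k,1}=\Compl_{k,0}$ and $\mex x\,\Compl_{1,\ell}=\Compl_{0,\ell}$, the maximal extension count being $2$). Since $k+\ell\ge3$, shrinking one block to $2$ and the other to $0$ gives $\Compl_{2,0}=\NEQ$. Second, a coordinate can be \emph{negated} without disturbing counts: in $\Compl_{a,b}(\dots,y)\meet\NEQ(y,y')$ the variable $y'$ is forced to $\neg y$, so every retained tuple has exactly one extension and $\mex y'$ coincides with genuine existential quantification, flipping that coordinate between the two blocks. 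These two tools let me pass, for any fixed arity, between all $\Compl_{a,b}$ with the same $a+b$; in particular any $\Compl_{a,b}$ with $a+b=3$ reduces to $\Compl_{3,0}$, and from $\Compl_{n,0}$ I can reach every $\Compl_{a,b}$ with $a+b=n$. The whole problem therefore collapses to a single growth statement: $\Compl_{n,0}\in\mang{\Compl_{3,0}}$ for all $n$, i.e.\ that all arities of the not-all-equal relation are generated.

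The growth step is where I expect the real difficulty, and it is genuinely subtle. The naive idea of reusing an ordinary existential definition of $\Compl_{n+1,0}$ in terms of $\Compl_{n,0}$ fails for max-implementation: the standard conjunctive gadgets for $\NAE$ assign to a tuple a number of auxiliary extensions that grows with how balanced the tuple is, so $\mex$ over the auxiliary variables selects the \emph{balanced} tuples rather than the \emph{non-constant} ones. Thus the main obstacle is not expressibility but \emph{equalizing the extension counts}: I must design a gadget $\Phi(x_1,\dots,x_{n+1},\vec w)$ built from $\Compl_{3,0}$ and $\NEQ$ in which the two constant tuples $0\ldots0$ and $1\ldots1$ have strictly fewer completions than every other tuple, while all remaining tuples tie for the maximum. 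My plan is to combine a conjunctive implementation that already singles out the two constant tuples with the amplification technique of Lemma~\ref{lem:max-clones-implementations}(3): taking $c$ independent copies of the auxiliary block raises each tuple's count to its $c$-th power, and choosing $c$ large enough drives the two deficient tuples strictly below the common value of all the others, so that their $\mex$ leaves exactly $\Compl_{n+1,0}$.

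Assembling these pieces yields $IN_2\subseteq\mang{\Compl_{k,\ell}}$ for every $k+\ell\ge3$, and together with the first paragraph this gives $\mang{\Compl_{k,\ell}}=IN_2$, completing Lemma~\ref{lem:self-dual-generation}.
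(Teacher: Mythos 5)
Your framing (easy inclusion via Lemma~\ref{lem:IN-clone}, plain-basis reduction, and the arity/block-switching moves) matches the paper's proof: your identification, single-variable $\mex$, and $\NEQ$-negation tools are exactly the paper's family of equalities reducing everything to a single growth statement, and the paper likewise finishes by citing that the $\Compl_{a,b}$ form a plain basis of $IN_2$. You also correctly diagnose the crux: in any conjunctive gadget built naively from $\Compl_{3,0}$, the number of extensions of a tuple grows with how balanced it is, so $\mex$ picks out the balanced tuples, not the non-constant ones.

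However, your proposed resolution of that crux does not work, and this is a genuine gap. The amplification of Lemma~\ref{lem:max-clones-implementations}(3) takes $c$ independent copies of the auxiliary block and thus raises each tuple's extension count $N_{\ba}$ to $N_{\ba}^{c}$; this \emph{preserves} every strict inequality between counts, so it cannot create the tie among non-constant tuples that you yourself identify as the requirement. (In Lemma~\ref{lem:max-clones-implementations}(3) the good tuples already share the same per-copy count $N$, and amplification only serves to overwhelm a bounded multiplicity gap $M/L$; here the per-copy counts of non-constant tuples genuinely differ, e.g.\ as $2^{-\binom{m}{k}}$ up to a constant factor, where $m$ is the number of zeros, and powering does not repair that.) What the paper does instead is a \emph{complementary-counts} trick: it conjoins the gadget $\Phi$, whose count profile is $N_\Phi(m)=2^{\frac12\binom{2k}{k}-\binom{m}{k}}$, with a second gadget $\Phi'$ (built from the relation ``if $x_1=\cdots=x_k$ then $y$ is free, else $y=x_1$'', itself an intersection of relations $\Compl_{k',\ell'}$) whose profile is exactly the multiplicative inverse, $N_{\Phi'}(m)=2^{\binom{m}{k}}$. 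The product $N_\Psi(m)=2^{\frac12\binom{2k}{k}}$ is then constant on all non-constant tuples while constant tuples admit no extension at all, so $\mex$ over the auxiliary variables yields $\Compl_{2k,0}$ from $\Compl_{k+1,0}$, and iterating this doubling (together with your downward moves) covers all arities. Without this second, count-inverting gadget --- or some equivalent device for exactly equalizing the counts --- your growth step, and hence the inclusion $IN_2\sse\mang{\Compl_{k,\ell}}$, remains unproven.
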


\begin{proof}
Observe first that
\begin{eqnarray}
\Compl_{k,\ell}(\vc x{k+\ell}) &=& \mex y \Compl_{k,\ell+1}(\vc x{k+\ell},y),\nonumber \\
\Compl_{k,\ell}(\vc x{k+\ell}) &=& \mex y (\Compl_{k+1,\ell-1}
(\vc xk,y,x_{k+2},x_{k+\ell})  \label{equ:switching}\\
&& \meet\NEQ(y,x_{k+1})),\quad \text{and} \nonumber\\
\Compl_{k,0}(\vc xk) &=& \mex y \Compl_{k+1,0}(\vc xk,y).\nonumber
\end{eqnarray}
Also,
\begin{eqnarray*}
\lefteqn{\Compl_{k,\ell}(\vc x{k+\ell})}\\
&=&\mex \vc yk \Compl_{k+\ell,0}(\vc yk,x_{k+1}\zd x_{k+\ell+1})
 \meet\NEQ(y_1,x_1)\meet\ldots\meet\NEQ(y_k,x_k)).
\end{eqnarray*}
Since $\NEQ=\Compl_{2,0}$, the equalities above imply that if 
$k'+\ell'\le k+\ell$ then $\Compl_{k',\ell'}\in\mang{\Compl_{k,\ell}}$.

Now it suffices to show that $\Compl_{2k,0}\in\mang{\Compl_{k+1,0}}$. 
We start with the relation given by the following formula
\begin{eqnarray*}
\Phi(\vc x{2k},\vc y{{k\choose 2k}}) &=& \bigwedge_{I=\{\vc ik\}\sse[2k]} 
\Compl_{k+1,0}(x_{i_1}\zd x_{i_k},y_{j_I})\\
&& \meet \bigwedge_{I\sse[2k],|I|=k}\NEQ(y_{j_I},y_{j_{\ov I}}).
\end{eqnarray*}
Here $j_I$ is some enumeration of the $k$-element subsets of $[2k]$. We are 
interested in assignments of $\vc x{2k}$ and the number of ways such an 
assignment can be extended to a satisfying assignment of  $\Phi$. First, observe that 
the only assignments of $\vc x{2k}$ that can not be extended are the all-zero 
and all-one assignment. Second, since $\Phi$ is symmetric with respect of 
permutations of $\{\vc x{2k}\}$ in the sense that for any permutation of this 
set there is a permutation of the $y_i$'s that keeps the formula unchanged, 
the number of extensions of an assignment of $\vc x{2k}$ depends only on 
the number of 0's in the assignment. We will denote this number by 
$N_\Phi(m)$, where $m$ is the number of zeros. Notice that $\Phi$ defines a 
self-complement relation, therefore, we always assume that the number of zeros 
is at least $k$. As is easily seen, if a tuple $\ba$ has $m\ge k$ zeros, 
it can be extended in $N_\Phi(m)=2^{\frac12 {k\choose 2k}-{k\choose m}}$ ways. 
Indeed, $y_I$ is uniquely defined by $\ba$ if $I$ or $\ov I$ is a subset of the 
set of zeros of $\ba$. Otherwise it can take any value independently of the values of 
other variables, except that $y_{j_I}\ne y_{j_{\ov I}}$.

Let $\relo(\vc xk,y)$ be the relation given by: if $x_1=\ldots=x_k$ then $y$ 
can be any, otherwise $y=x_1$. Relation $\relo$ is an intersection of some 
relations $\Compl_{k',\ell'}$ with $k'+\ell'=k+1$. Therefore by (\ref{equ:switching}) 
it belongs to $\mang{\Compl_{k+1,0}}$. Set
$$
\Phi'(\vc x{2k},\vc y{{k\choose 2k}})=\bigwedge_{I=\{\vc ik\}\sse[2k]} 
\relo(x_{i_1}\zd x_{i_k},y_{j_I}),
$$
and consider $\Psi=\Phi\meet\Phi'$, where $\Phi,\Phi'$ have the same variables 
$x_i$, but the sets of the auxiliary variables $y_i$ are disjoint. Observe that 
$N_\Psi(m)=N_\Phi(m)\cdot N_{\Phi'}(m)$. Similarly to $\Phi$, 
$N_{\Phi'}(m)=2^{{k\choose m}}$, provided $m\ge k$. Indeed, variable 
$y_{j_I}$ can be assigned any value if $x_i=0$ for all $i\in I$; otherwise 
$y_{j_I}$ can take only one value. Therefore for any $m\ne0$ 
$$
N_\Psi(m)=2^{\frac12 {k\choose 2k}-{k\choose m}}\cdot 2^{{k\choose m}}
=2^{\frac12 {k\choose 2k}}
$$
and $N_\Psi(0)=0$. Thus 
$\Compl_{2k,0}=\mex (\vc y{k\choose 2k})\Psi$.

It now remains to apply Proposition~3 of \cite{Creignou08:plain} that claims, 
in particular, that the relation $\Compl_{k,\ell}$ constitute a plain basis of $IN_2$.
\end{proof}

\section{Conclusion}

The results of the previous section can be used to reprove some complexity results, 
namely, that of \cite{Dyer10:trichotomy}. If for counting problems $A$ and $B$ 
there are approximation preserving reductions from $A$ to $B$, and from $B$ to $A$, 
we denote it by $A=_{AP} B$. The problem $\NCSP(\IMP)$ plays a special role in 
this result. This problem can also be interpreted as the problem of counting the 
number of independent sets in a bipartite graph, $\#BIS$, or as the problem of counting 
antichains in a partially ordered set \cite{Dyer03:relative}. The problem of counting the number of 
satisfying assignments to a CNF, $\#SAT$, is predictably the most difficult problem 
among counting CSPs.

\begin{theorem}\label{the:boolean trichotomy}
Let $\Gm$ be a set of relations over $\{0, 1\}$. If every relation
in $\Gm$ is affine then $\#CSP(\Gm)$ is in solvable in polynomial time. 
Otherwise if every relation in $\Gm$ is
in $IM_2$ then $\#CSP(\Gm) =_{AP} \#BIS$. Otherwise 
$\#CSP(\Gm) =_{AP} \#SAT$.
\end{theorem}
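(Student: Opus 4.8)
The plan is to turn the whole classification into a statement about the max-co-clone $\mang\Gm$. By Lemma~\ref{lem:max-clones-implementations} every $\rel\in\mang\Gm$ has a max-implementation by $\Gm$, so Theorem~\ref{the:max-implementation} gives $\NCSP(\Gm\cup\{\rel\})\le_{AP}\NCSP(\Gm)$; composing over a finite language shows that the approximation complexity of $\NCSP(\Gm)$ depends only on $\mang\Gm$. I would therefore read off $\mang\Gm$ from the lattice of Boolean max-co-clones (Theorem~\ref{the:max-post}, Fig.~\ref{fig:max-post}). Every max-co-clone is either contained in $IL_2$, equal to $IM_2$, or one of the $IS$-chains, $IN_2$, or $II_2$; since $\Gm\sse IL_2$ iff $\mang\Gm\sse IL_2$ (Lemma~\ref{lem:linear-max-clones}) and $\Gm\sse IM_2$ iff $\mang\Gm\sse IM_2$ (as $IM_2$ is a max-co-clone, Lemma~\ref{lem:IM2-clone}), these three possibilities match the three outcomes of the theorem exactly.

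In the affine case $\Gm\sse IL_2$ every instance of $\NCSP(\Gm)$ is a system of linear equations over $\GF(2)$, so its number of solutions is $0$ or a power of two and is computed exactly by Gaussian elimination in polynomial time, which is in particular an FPAS. In the case $\Gm\sse IM_2$ with some non-affine $\rel\in\Gm$, I would first show $\mang\Gm=IM_2$: such $\rel$ is nontrivial and invariant under $\join$, so Lemma~\ref{lem:monotone-generation} puts $\IMP$ or $\OR$ into $\mang\rel$; but $\OR\notin IM_2$ while $\mang\rel\sse\mang\Gm\sse IM_2$, forcing $\IMP\in\mang\Gm$ and hence $\mang\Gm=\mang\IMP=IM_2$. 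Then $\Gm\sse IM_2=\mang\IMP$ gives $\NCSP(\Gm)\le_{AP}\NCSP(\IMP)$ and $\IMP\in\mang\Gm$ gives $\NCSP(\IMP)\le_{AP}\NCSP(\Gm)$, so $\NCSP(\Gm)=_{AP}\NCSP(\IMP)=\#BIS$.

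In the remaining case $\Gm\not\sse IL_2$ and $\Gm\not\sse IM_2$ the reduction $\NCSP(\Gm)\le_{AP}\#SAT$ is immediate, since replacing each constraint by a CNF for its relation is parsimonious, hence AP, and $\#SAT=\NCSP(\text{all relations})$. For the matching hardness I use the classification once more: $\mang\Gm$ is a max-co-clone that is neither contained in $IL_2$ nor in $IM_2$, so by Theorem~\ref{the:max-post} it is one of the $IS$-chains, $IN_2$, or $II_2$. Whenever $\mang\Gm$ is an $IS$-chain or $II_2$ it contains $\OR$ or $\NAND$, and a single max-implementation reduction yields $\NCSP(\OR)\le_{AP}\NCSP(\Gm)$ (or its $\NAND$ variant). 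Since $\NCSP(\NAND)$ is exactly the problem $\#IS$ of counting independent sets, $\NCSP(\OR)$ is AP-equivalent to it by complementing all variables, and both are AP-interreducible with $\#SAT$, this gives $\#SAT\le_{AP}\NCSP(\Gm)$.

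The one genuinely delicate subcase is $\mang\Gm=IN_2$: here $\mang\Gm$ contains neither $\OR$, $\NAND$, nor the constants $\dl_0,\dl_1$ (none are self-complement), so no max-implementation reduction from $\NCSP(\OR)$ is available inside the max-co-clone. I would break the symmetry with a direct AP-reduction. As $\Compl_{3,0}\in IN_2=\mang\Gm$, it suffices to prove $\#SAT\le_{AP}\NCSP(\Compl_{3,0})$. Given an $\NCSP(\OR)$ instance, introduce a single fresh global variable $w$ and replace every clause $\OR(x,y)$ by $\Compl_{3,0}(x,y,w)$; fixing $w=0$ recovers exactly $\OR(x,y)$ and fixing $w=1$ recovers $\NAND(x,y)$, while self-complementarity makes the $w=0$ and $w=1$ halves equinumerous, so the new instance has precisely twice as many solutions as the original. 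This exact (hence AP) reduction gives $\NCSP(\Compl_{3,0})=_{AP}\#SAT$ and closes the last subcase, so in all of this third case $\NCSP(\Gm)=_{AP}\#SAT$.
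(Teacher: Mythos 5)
Your proof is correct, and its skeleton is exactly the paper's: use Lemma~\ref{lem:max-clones-implementations} together with Theorem~\ref{the:max-implementation} to make the approximation complexity of $\NCSP(\Gm)$ a function of the max-co-clone $\mang\Gm$, read the possibilities off the lattice of Theorem~\ref{the:max-post}, extract $\IMP$ via Lemma~\ref{lem:monotone-generation} in the $IM_2$ case, and handle the $\OR$/$\NAND$-containing max-co-clones through the $\#IS =_{AP} \#SAT$ equivalence of Dyer, Goldberg, Greenhill and Jerrum. The one place where you genuinely diverge is the subcase $\mang\Gm = IN_2$, which you rightly flag as the delicate one (no $\OR$, $\NAND$, $\dl_0$ or $\dl_1$ is self-complement, so no max-implementation can produce them). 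The paper settles it by two external citations: Schaefer's theorem that the decision problem $\CSP(\NAE)$ is NP-complete, plus Theorem~1 of \cite{Dyer03:relative}, which converts NP-completeness of the decision problem into $\NCSP(\NAE) =_{AP} \#SAT$. You instead give a direct reduction: one global variable $w$ turns each constraint $\OR(x,y)$ into $\Compl_{3,0}(x,y,w)$; since $w=0$ restores $\OR$, $w=1$ restores $\NAND$, and complementation is a bijection between the two halves of the solution set (by self-complementarity of $\NAE$), the constructed instance has exactly twice as many solutions as the original, so $\NCSP(\OR)\le_{AP}\NCSP(\Compl_{3,0})$ by an exact factor-2 counting reduction. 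This is a correct and more self-contained argument: your whole hardness case then rests only on Theorem~3 of \cite{Dyer03:relative}, which the paper needs anyway, whereas the paper's route is shorter on the page but leans on the general ``NP-complete decision implies $\#SAT$-equivalent counting'' machinery. Both arguments are sound; yours trades a few lines of elementary bookkeeping for one fewer black box.
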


\begin{proof}
The $\NCSP$ over affine relations can be solved exactly in polynomial time, as it
is proved in \cite{Creignou01:complexity}. If $\Gm$ contains $\OR$ or $\NAND$, 
the problem $\NCSP(\Gm)$ is interreducible with $\#SAT$ by Theorem~3 of
\cite{Dyer03:relative} (observe that the problem \#IS of counting the 
number of independent sets in a graph can be represented as $\NCSP(\NAND)$). 
By Theorems~\ref{the:max-implementation} and~\ref{the:max-post} this leaves
only two max-co-clones to consider, $IM_2$ and $IN_2$. Since $IM_2$ is
generated by $\IMP$ and by Lemma~\ref{lem:monotone-generation}, 
for any $\Gm\sse IM_2$ the problem $\NCSP(\Gm)$ is either polynomial time 
solvable, or is interreducible with $\#BIS$. The remaining max-co-clone, $IN_2$
is generated by $\Compl_{3,0}$ that contains all tuples such that not all
their entries are equal; this is why it is sometimes called the Not-All-Equal 
relation, or NAE. Therefore for any $\Gm\sse IN_2$ such that 
$\Gm\not\sse IL_3$ the problem $\NCSP(\Gm)$ is interreducible with 
$\NCSP(\NAE)$. By \cite{Schaefer78:dichotomy} the decision problem
$\CSP(\NAE)$ is NP-complete. Therefore by Theorem~1 of \cite{Dyer03:relative}
$\NCSP(\NAE)$ is interreducible with $\#SAT$.
\end{proof}

Observe also that some co-clones are not max-co-clones, even those co-clones 
are generated 
(or `determined') by surjective functions. For instance, $IS_{00}$ or $IS_{01}$.
Since on a 2-element set every quantification with $\mexe$ is equivalent to
either existential, or universal quantification, and therefore $\mange\Gm$
can be any set of relations of the form $\Inv(C)$ for a set of surjective 
functions $C$, we obtain the following

\begin{corollary}\label{cor:many-not-one}
There is a set $\Gm$ of relations on $\{0,1\}$ such that $\mang\Gm\ne\mange\Gm$.
\end{corollary}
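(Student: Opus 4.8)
The plan is to produce a single set $\Gm$ of Boolean relations whose max-existential closure is an ordinary surjective co-clone that is \emph{not} a max-co-clone, while its max-implementation closure is necessarily a max-co-clone; the two therefore cannot coincide. Concretely I would take $\Gm=IS_{00}$, the co-clone with plain basis $\{\IMP,\dl_0\}\cup\{\OR^\ell\mid\ell\in\nat\}$.

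First I would pin down how $\mexe$ behaves on a two-element set. For a relation $\Phi(\vc xn,y)$ on $\{0,1\}$ and a tuple $\ba$, the number of $b\in\{0,1\}$ with $\Phi(\ba,b)$ lies in $\{0,1,2\}$; letting $M$ be the maximum of this count over all tuples, one checks immediately that $\mexe y\,\Phi$ equals $\forall y\,\Phi$ if $M=2$, equals $\ex y\,\Phi$ if $M=1$, and equals the total relation if $M=0$. Thus on $\{0,1\}$ every max-quantification is an existential quantification, a universal quantification, or the total relation.

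Next I would show $\mange{IS_{00}}=IS_{00}$. As a co-clone, $IS_{00}$ contains equality and is closed under conjunction, variable manipulation and $\ex$. For closure under $\forall$ I would use \cite{Borner09:games}: $IS_{00}$ has no constant polymorphism, because the constant $0$ fails to preserve $\OR^2$ (we have $(0,0)\notin\OR^2$) and the constant $1$ fails to preserve $\dl_0$; on $\{0,1\}$ a non-constant function is surjective, so all polymorphisms of $IS_{00}$ are surjective, and invariant sets of surjective functions are closed under universal quantification. Together with the previous paragraph this shows that $IS_{00}$ is closed under every possible value of $\mexe$ (it trivially contains the total relations), hence is itself a max-existential co-clone, and so $\mange{IS_{00}}=IS_{00}$.

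Finally I would compare the two closures. By definition $\mang{IS_{00}}$ is a Boolean max-co-clone, whereas by Theorem~\ref{the:max-post} $IS_{00}$ is not one of the Boolean max-co-clones; hence $\mang{IS_{00}}\ne IS_{00}=\mange{IS_{00}}$, which is the corollary. To make the gap explicit, note that $\IMP,\OR\in IS_{00}$, so Lemma~\ref{lem:IMP-generation} already yields $\mang{IS_{00}}\supseteq\mang{\IMP,\OR}=II_2$, i.e.\ $\mang{IS_{00}}=II_2$ is the set of all relations, while $\mange{IS_{00}}=IS_{00}$ is proper. The only genuinely delicate step is the claim that $IS_{00}$ is closed under $\forall$, i.e.\ that it has no constant polymorphism; everything else is bookkeeping against the two classifications already in hand.
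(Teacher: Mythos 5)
Your proposal is correct and follows essentially the same route as the paper: the paper also observes that on $\{0,1\}$ every $\mexe$-quantification reduces to existential or universal quantification, invokes the result of B\"orner et al.\ so that $\mange{IS_{00}}=IS_{00}$ (a co-clone determined by surjective functions), and concludes from the classification in Theorem~\ref{the:max-post} that $IS_{00}$ is not a max-co-clone. Your additional details — checking that $IS_{00}$ has no constant polymorphisms, handling the empty-relation case of $\mexe$, and the explicit computation $\mang{IS_{00}}=II_2$ via Lemma~\ref{lem:IMP-generation} — merely flesh out steps the paper leaves implicit.
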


\bibliographystyle{plain}

\end{document}